\def\newrefformat#1#2{%
  \@namedef{pr@#1}##1{#2}}
\def\prettyref#1{\@prettyref#1:}
\def\@prettyref#1:#2:{%
  \expandafter\ifx\csname pr@#1\endcsname\relax%
    \PackageWarning{prettyref}{Reference format #1\space undefined}%
    \ref{#1:#2}%
  \else%
    \csname pr@#1\endcsname{#1:#2}%
  \fi%
}
\newtheorem {proposition}{Proposition}	%
\newtheorem {lemma}      {Lemma}	    %
\newtheorem {fact}       {Fact}	
\newtheorem {assumption} {Assumption}
\theoremstyle{definition}
\newtheorem*{definition*}{Definition}
\newtheorem*{theorem*}{Theorem}
\newtheorem{remark}{Remark}      		%
\theoremstyle{plain}
\newcommand{\addQEDstyle}[2]{%
\AtBeginEnvironment{#1}{\pushQED{\qed}\renewcommand{\qedsymbol}{#2}}%
\AtEndEnvironment{#1}{\popQED}}
\setlist[enumerate]{label=(\alph*),itemsep=.2em,topsep=.3em}
\newcommand{\diag}[1]{\operatornamewithlimits{diag}[#1]}
\newcommand\PDF[2]{\dfrac{\partial{#1}}{\partial{#2}}}
\newenvironment{figurenotes}[1][Notes]%
    {\vskip .3em\begin{minipage}[t]{\linewidth}\footnotesize{\itshape#1: }}
    {\end{minipage}}
\newcommand{\argmax}{\operatornamewithlimits{arg\,max}}
\definecolor{RED}{rgb}{1,.05,0}
\providecommand\SYM{\multicolumn{2}{c}{\text{\small\emph{Sym.}}}}
\numberwithin{equation}{section}
\providecommand{\PDF}[2]{\frac{\partial{#1}}{\partial{#2}}}
\providecommand{\Vt}[1]{\bm{#1}}
\providecommand{\Vta}{\bm{a}}
\providecommand{\Vte}{\bm{e}}
\providecommand{\Vtf}{\bm{f}}
\providecommand{\Vtv}{\bm{v}}
\providecommand{\Vtw}{\bm{w}}
\providecommand{\Vtx}{\bm{x}}
\providecommand{\Vty}{\bm{y}}
\providecommand{\Vtz}{\bm{z}}
\providecommand{\VtA}{\mathbf{A}}
\providecommand{\VtD}{\mathbf{D}}
\providecommand{\VtF}{\mathbf{F}}
\providecommand{\VtG}{\mathbf{G}}
\providecommand{\VtI}{\mathbf{I}}
\providecommand{\VtM}{\mathbf{M}}
\providecommand{\VtP}{\mathbf{P}}
\providecommand{\VtV}{\mathbf{V}}
\providecommand{\VtW}{\mathbf{W}}
\providecommand{\VtZ}{\mathbf{Z}}
\providecommand{\VtPsi}{\bm{\Psi}}
\providecommand{\Rmi}{\mathrm{i}}
\providecommand{\RmC}{\mathrm{C}}
\providecommand{\ClK}{\mathcal{K}}
\providecommand{\ClN}{\mathcal{N}}
\providecommand{\ClX}{\mathcal{X}}
\providecommand{\ClY}{\mathcal{Y}}
\providecommand{\BbR}{\mathbb{R}}
\providecommand{\Htphi}{\hat{\phi}}
\providecommand{\Bra}{\bar{a}}
\providecommand{\Brm}{\bar{m}}
\providecommand{\Brv}{\bar{v}}
\providecommand{\Brw}{\bar{w}}
\providecommand{\Brx}{\bar{x}}
\providecommand{\Bromega}{\bar{\omega}}
\providecommand{\Tlw}{\tilde{w}}
\providecommand{\TlU}{\tilde{U}}
\providecommand{\Tlomega}{\tilde{\omega}}
\providecommand{\Dty}{\dot{y}}
\providecommand{\Is}{\equiv}
\providecommand{\HtVta}{\hat{\Vta}}
\providecommand{\HtVtv}{\hat{\Vtv}}
\providecommand{\HtVtw}{\hat{\Vtw}}
\providecommand{\HtVtx}{\hat{\Vtx}}
\providecommand{\HtVty}{\hat{\Vty}}
\providecommand{\BrVtx}{\bar{\Vtx}}
\providecommand{\BrVtD}{\bar{\VtD}}
\providecommand{\BrVtG}{\bar{\VtG}}
\providecommand{\TlVtf}{\tilde{\Vtf}}
\providecommand{\TlVtw}{\tilde{\Vtw}}
\providecommand{\DtVtx}{\dot{\Vtx}}
\providecommand{\inK}{\in\ClK}
\providecommand{\inN}{\in\ClN}
\providecommand{\inX}{\in\ClX}
\providecommand{\inY}{\in\ClY}
\begin{document}
\title{Production Externalities and Dispersion Process in a Multi-region Economy%
\thanks{We wish to thank Kristian Behrens, Shota Fujishima, Noriaki Matsushima, Tomoya Mori, Yuki Takayama, and Dao-Zhi Zeng for useful comments. Minoru Osawa thanks grant support from JSPS Kakenhi 17H00987, 18K04380, and 19K15108. Jos\'e M. Gaspar gratefully acknowledges financial support from Funda\c{c}\~{a}o para a Ci\^{e}ncia e Tecnologia (through projects  UIDB/04105/2020, UIDB/00731/2020, PTDC/EGE-ECO/30080/2017 and CEECIND/02741/2017).  Part of this research was conducted under the project ``Research on the evaluation of spatial economic impacts of building bus termini (Principal Investigator: Prof. Yuki Takayama, Kanazawa University)'', supported by the Committee on Advanced Road Technology under the authority of the MLIT in Japan.}}
\author{Minoru Osawa\thanks{Corresponding author. osawa.minoru.4z@kyoto-u.ac.jp, Institute of Economic Research, Kyoto University}\qquad\quad Jos\'e M. Gaspar\thanks{jgaspar@porto.ucp.pt, CEGE and Cat\'olica Porto Business School, Universidade Cat\'olica Portuguesa}}
\date{\small\today}
\maketitle
\begin{abstract} 

We consider an economic geography model with two inter-regional proximity structures: one governing goods trade and the other governing production externalities across regions.  
We investigate how the introduction of the latter affects the timing of endogenous agglomeration and the spatial distribution of workers across regions. 
As transportation costs decline, the economy undergoes a progressive dispersion process. 
Mono-centric agglomeration emerges when inter-regional trade and/or production externalities incur high transportation costs, while uniform dispersion occurs when these costs become negligibly small (i.e., when distance dies). 
In a multi-regional geography, the network structure of production externalities can determine the geographical distribution of workers as economic integration increases. 
If production externalities are governed solely by geographical distance, a mono-centric spatial distribution emerges in the form of suburbanization. 
However, if geographically distant pairs of regions are connected through tight production linkages, multi-centric spatial distribution can be sustainable.
\end{abstract}

\medskip
\noindent\textbf{Keywords:} production externalities; agglomeration; dispersion; many regions. 

\medskip
\noindent\textbf{JEL Classification:} C62, R12, R13, R14

\clearpage

\section{Introduction}
\label{sec:introduction}

The spectacular drop in transportation costs due to the Industrial Revolution has led to the concentration of economic activities and population in fewer and fewer geographical locations. 
In this context, spatial economic theory has emphasized the roles of endogenous forces in shaping lasting and sizable economic agglomerations in the modern economy. 
Agglomeration economies are considered central to the formation of major economic clusters because the concentration of economic actors in cities produce diverse positive effects, both pecuniary and non-pecuniary \citep{Duranton-Puga-HB2004,Duranton-Puga-HB2015}. 
The spatial economy can be seen as the result of trade-offs between such scale economies and the transportation costs incurred by the movement of goods, people, and information \citep{Proost-Thisse-JEL2019}. 
However, the concentration of production in a small number of selected locations is currently in the process of evening out, partly due to the significant developments in information and communication technologies, which have made it possible to organize complex production processes even when they are separated by geographical distance \citep{Baldwin-Book2016}.
The ``death of distance'' \citep{cairncross1997death}, which indicates that communication technology obviates the need for physical proximity, seems to have become increasingly relevant over the past two decades. 
To understand the process of ``re-dispersion'' from a spatially localized economy, we need theories to study the interaction between multiple spatial linkages, in contrast to the conventional economic geography models that focus mainly on trade linkages.

We study how the structure of production externalities between multiple regions affects the unbundling process of a geographically mono-centric economic agglomeration when inter-regional trade costs and/or communication costs become negligibly small (i.e., when distance dies). 
Further, to obtain first-order theoretical implications, we develop a bare-bones economic geography model with one differentiated sector and a finite number of locations. 
The trade of goods is costly and there is inter-regional productivity spillover, where the spatial concentration of workers leads to positive effects. 
Based on the model, we draw qualitative insights into the \textit{timing} and \textit{forms} of workers' agglomeration and dispersion across regions.

In urban economics, there is an important literature stream on the economics of agglomeration  \citep[e.g.,][]{Beckmann-Book1976,Fujita-Ogawa-RSUE1982,Lucas-Rossi-Hansberg-ECTA2002,Helsley-Strange-JPE2014} explaining how inter-location externalities influence the urban spatial structure, including the location of firms and households \citep[for a survey, see][]{Fujita-Thisse-Book2013,Duranton-Puga-HB2004,Duranton-Puga-HB2015}. 
Such urban externalities may be microfounded by the interesting literature of inter-individual production externalities and cities  \citep[e.g.,][]{Helsley-Zenou-JET2014,Picard-Zenou-JUE2018}. 
We take an intermediate strategy between the studies on individual-level production externalities and geographical proximity as \textit{the} determinant of agglomeration economy. 

We consider a simple general equilibrium framework and 
we assume a perfectly competitive Armington model with positive externalities and iceberg transportation costs. 
Productivity of a region depends on the entire spatial distribution of workers, rather than the local spillovers within each region. 
As per \cite{Rosenthal-Strange-JEP2020}'s survey, agglomeration effects can act at various spatial scales (e.g., regional, metropolitan, and neighborhood scales, or even within each building). 
To represent such effects flexibly, we assume that a region's productivity depends on an \textit{externality matrix}, $\VtG = [\psi_{ij}]$, which represents the structure of inter-regional productivity spillovers. 
Each $\psi_{ij} \in (0,1]$ represents the level of positive externality a worker in region $j$ has on region $i$'s productivity $a_i$, so that $a_i = \sum_{j} \psi_{ij} x_j$, where $x_j$ denotes the mass of workers in region $j$. 
The externality matrix can be interpreted either as industrial or knowledge linkages. 
This setup includes, as a special case, spatially decaying technological externalities in urban models if, for example, $\psi_{ij} = \exp\left(-\tau \ell_{ij}\right)$ with a distance decay parameter $\tau > 0$ and $\ell_{ij}$ is the geographical distance  between locations $i$ and $j$ \citep[e.g.,][]{Fujita-Ogawa-RSUE1982,Ahlfeldt-et-al-ECTA2015}. 
Our flexible specification allows us to illustrate the roles played by the underlying geographical proximity structure and by the network structure of production externalities.  

We first consider a symmetric two-region economy to elucidate the main workings of the model. 
There are two key transportation cost parameters: 
the freeness of trade between regions and the level of productivity spillover between region (i.e., the spatial extent of production externalities). 
These parameters form the index of transportation costs. 
Endogenous agglomeration emerges when either parameter is small. 
Symmetric dispersion of workers occurs when trade friction is low (i.e., when distance dies). 
When transportation is prohibitively costly, there is full agglomeration of workers in one region.  
As transportation costs decline, this asymmetry is gradually resolved:  
the spatial configuration becomes increasingly symmetric, leading to the symmetric dispersion at some threshold values of transportation costs. 
This behavior is akin to the dispersion process in economic geography models with urban costs \citep[e.g.,][]{Helpman-Book1998,Tabuchi-JUE1998} in that dispersion occurs when trade is freer, although the dispersion forces at work in our model stem from production externalities rather than land rents.
The role of production externalities in our model is intuitive. 
When the spatial extent of production externalities is large (i.e., when producing in a smaller region is not disadvantageous because there are no significant productivity differences between the two regions) economic activities tend to become more dispersed. 

We then explore a symmetric four-region economy and consider various structures for the externality matrix. 
The four-region configuration is the minimal setting to investigate the roles of the \textit{network structure} of $\VtG$,  since non-trivial (but symmetric) network structures can emerge only when the number of regions is more than three. 
Similar to the two-region setup, one region attracts almost all workers when trade costs are low and inter-regional production externalities spillover is negligible. 
The economy then exhibits a gradual dispersion process as the transportation costs in these channels decline. 
The agglomeration force tends to support a geographically mono-centric pattern of workers along the dispersion process. 
If the economy is more integrated as a whole in terms of the network structure of $\VtG$, then agglomeration is less likely, compared with less integrated networks. 
This is because the endogenous advantage due to production spillover plays a less prominent role when the economy is more integrated with respect to $\VtG$. 
Additionally, if some pair of regions is ``closer'' with respect to $\VtG$, then geographical configurations other than the mono-centric can be sustainable. 
For instance, a duo-centric concentration of workers emerges if geographically distant pairs of regions are closer in terms of $\VtG$ because of, for example, passenger transportation modes with economy of distance (e.g., regional airlines). 
In sum, the network structure of production externalities can determine when dispersion is attained and how it looks like in the physical space during the process of unbundling of a mono-centric economic agglomeration. 

The remainder of this paper is organized as follows: 
\prettyref{sec:literature} discusses the related literature; 
\prettyref{sec:model} formulates the model; 
\prettyref{sec:two-region_economy} studies the model under the simplest possible setup, the symmetric two-region economy; 
\prettyref{sec:four-region_economies} illustrates the fundamental roles of the structure of the interaction network $\VtG$ and the effects of variation in inter-regional proximity structure $\VtG$ employing stylized examples; 
finally, \prettyref{sec:conclusion} concludes the paper. 
All proofs and technical discussions are presented in \prettyref{app:proofs}. 

\section{Related literature}
\label{sec:literature}

The current unbundling process of economic agglomeration can also be explained by the ``bell-shaped development'' narrative for industrial agglomeration  \citep[][Section 8]{Fujita-Thisse-Book2013}. 
The seminal theory of endogenous regional agglomeration of \cite{Krugman-JPE1991} predicts that the spatial distribution of economic activities in a country is organized into a mono-centric state when transportation costs decline below a threshold in a multi-region economy \citep{Tabuchi-Thisse-JUE2011,ikeda2012spatial,Akamatsu-Takayama-Ikeda-JEDC2012,AMOT-DP2019}. 
This prediction from the theory of endogenous agglomeration is qualitatively consistent with the evolutionary trends of the real-world population distribution witnessed over the past few centuries \citep{Tabuchi-RSUE2014}. 
However, a further decline in inter-regional transportation cost induces the flattening of a mono-centric agglomeration \citep{Helpman-Book1998,Tabuchi-JUE1998} due to the rise of the relative importance of urban costs (e.g., higher land rent and commuting costs). Other dispersive forces are found e.g. in \cite{Fujita-Krugman-Venables-Book1999}, through the inclusion of transport costs in the traditional sector, or in \cite{Murata-JUE2003} with the addition of heterogeneous preferences regarding residential location, and also in models of input-output linkages such as \cite{Krugman-Venables-QJE1995} and \cite{venables1996equilibrium} where, at a certain point, high wages in the more industrialized region forces firms to relocate to the periphery.\footnote{See \cite{Fujita-Thisse-Book2013} for a more detailed description of the mechanisms in input-output linkages models.} Whatever the additional dispersion forces, they counteract the net agglomeration forces from the manufacturing sector. Since the latter tend to vanish for a sufficiently low level of transportation costs, industry will tend to re-disperse after an initial phase of agglomeration. This lends support to the hypothesis of a “spatial Kuznets curve” whereby market forces initially increase, and then decrease,
spatial inequalities \citep{gaspar2018prospective, gaspar2020history}.

There is ample evidence on the decline of peak population or the production level of cities when transportation access improves \citep[e.g.,][]{Baum-Snow-QJE2007,Baum-Snow-et-al-REStat2017}. 
We may thus infer that developed economies face a final stage, in which once established economic clusters dissolve. 
However, these theories, do not address how production externalities operate between locations because they deliberately focus on trade linkages as \textit{the} mode of inter-location interaction to investigate the role of pecuniary externalities, as well as to ensure tractability \citep{Fujita-Mori-PRS2005}. 
Given the importance of agglomeration economies in an increasing number of information-intensive economies, we propose a tractable theory that integrates production externalities into general equilibrium economic geography models. 

For simplicity and tractability, we assume that the externality matrix $\VtG$ is exogenously given. 
There are various possible micro-foundations for $\VtG$. 
For instance, at the urban to regional spatial scale, it may represent the roles played by \textit{passenger travel} that supports face-to-face contact.  
It may also be some aggregate measure, embedded in regional space, of an inter-\textit{individual} social network that supports information exchange and diffusion between regions. 
It can also be a reduced form of the decisions of \textit{big players}, such as large companies that open up branches in provincial cities or airline companies connecting major regional cities. 
All of the above are described by sophisticated models, so that the structure of $\VtG$ may be endogenously determined by micro-economic mechanisms. 
However, in our study, $\VtG$ is exogenous, so as to provide first-order insights into the workings of an additional inter-regional linkage other than goods trade. Although this is a simplifying assumption, we emphasize that it is enough to provide valuable insights  and  convey our main messages  on the role of production externalities in shaping the space economy in an intelligible  and parsimonious way.
This strategy is akin to that in the network game literature \citep{jackson2010social} that focuses on the role of the structure of the inter-individual social network or to that in the economic geography literature where it is a standard approach to assume an exogenous inter-regional proximity structure. 

Regarding how inter-regional productivity spillovers can arise, 
although various micro-foundations can be considered, we highlight the role of knowledge creation due to the interactions between \textit{different cultures}. 
According to \cite{Fujita-RSUE2007}, geography is an essential feature of knowledge creation and diffusion. 
For instance, people residing in the same region interact more frequently and thus contribute to developing the same regional set of cultural ideas. 
Since geographically distant regions tend to develop different cultures, the economy evolves according to the synergy from the \textit{interactions across different regions} (i.e., different cultures). 
That is, as emphasized by \cite{Duranton-Puga-AER2001}, knowledge creation and location are inter-dependent. 
\cite{Berliant-Fujita-RSUE2012} developed a model of spatial knowledge interactions and showed that higher cultural diversity and costly communication promote the productivity of knowledge creation, which corroborates the empirical findings of \cite{Ottaviano-Peri-JEG2006,Ottaviano-Peri-DP2008}, as well as the theoretical model of \cite{Ottaviano-Prarolo-JRS2009}. 
If interaction between different regions with different cultures promotes knowledge creation, a region with good access to \textit{passenger transport} will be more innovative and productive. 
Our flexible model integrates such effects into a general equilibrium framework with costly trade.

The proposed model can also be seen as a reduced form of endogenous transportation cost models \citep[e.g.,][]{Behrens-etal-JUE2009,Behrens-Picard-JIE2011,Jonkeren-etal-JEG2011}. 
This literature stream considers settings in which transportation costs can fall with the population concentration because of, for example, scale and/or density economies in transportation. 
In our framework, a region with higher social proximity to the other regions becomes  more productive, thereby reducing the market price therein. 
This effect can be considered as endogenous transportation costs. 

Technically, we build on the general analytical method for an economic geography model developed by \cite{ikeda2012spatial} and \cite{Akamatsu-Takayama-Ikeda-JEDC2012}, which has been recently synthesized in \cite{AMOT-DP2019}. 
Additionally, our four-region analysis is inspired by \cite{matsuyama2017geographical}, who considers various tractable geographical settings to investigate how the underlying geographical structure impacts the home-market effect in a multi-region economy. 
Our study is also related to \cite{Barbero-Zofio-NETS2016}, as they focus on the role played by the spatial topology of the underlying transportation network.

\section{The Model\label{sec:model}}

Consider an economy comprised of $n$ regions, and let $\ClN \Is \{1,2,\hdots, n\}$ denote the set of regions. 
The economy is inhabited by a unit mass continuum of workers, who are freely mobile across regions. 
Each worker is inter-regionally mobile and may choose to reside in any of the $n$ regions.  
The spatial distribution of workers is denoted by $\Vtx = (x_i)_{i\inN}$, where $x_i\ge0$ is the mass of workers in region $i\inN$. 
The set of all possible $\Vtx$ is $\ClX \Is \{ \Vtx \ge \Vt0 \mid \sum_{i\inN} x_i = 1\}$.   

For simplicity, we assume that each region produces a distinct variety of horizontally differentiated goods, as in \cite{Armington-IMF1969}. 
Workers derive utility from the consumption of differentiated varieties. 
The workers are homogeneous and have identical constant-elasticity-of-substitution (CES) preferences over the differentiated varieties. 
As the utility function is homothetic, 
the total welfare in region $j\inN$ is  
\begin{align}
	U_j = 
	\left(
	    \sum_{i\inN}
	    q_{ij}^\frac{\sigma - 1}{\sigma}
    \right)^{\frac{\sigma}{\sigma - 1}}, 
\end{align}
where $\sigma > 1$ is the elasticity of substitution between varieties and $q_{ij}$ the amount of goods produced in region $i$ and consumed in region $j$. 

Production is perfectly competitive and 
labor is the only input factor. 
Each worker inelastically provides a unit of labor in the region they live and is compensated with a wage. 
The nominal market wage in region $j\inN$ is denoted by $w_j \ge 0$, and its spatial pattern by $\Vtw = (w_i)_{i\inN}$. 
The wage is determined at market equilibrium, which we describe later. 

\begin{remark}
It is known that the Armington-based economic geography framework is mathematically isomorphic to monopolistically competitive economic geography models such as \cite{Krugman-JPE1991} or \cite{Helpman-Book1998} \citep{Allen-Arkolakis-QJE2014,AMOT-DP2019}. 
Therefore, considering imperfect competition does not alter our results on stable spatial configurations. 
\end{remark}

The only agglomeration force in the model comes from production externalities. 
We assume the productivity of region $i$ is given as:
\begin{align}
    a_i(\Vtx) = 
    \sum_{j\inN} \psi_{ij} x_j,
    \label{eq:a_i} 
\end{align}
where $\psi_{ij}$ is the productivity spillover level from $j$ to $i$. 
Let $\VtG \Is [\psi_{ij}]$ be the \textit{externality matrix}, so that $\Vta(\Vtx) = (a_i(\Vtx))_{i\inN} = \VtG\Vtx$. 
This represents the (weighted) \emph{network structure} of inter-regional productivity spillovers. 

We introduce some assumptions on $\VtG$ as follows. 
\begin{assumption}
\label{assum:a_i}
Externality matrix $\VtG = [\psi_{ij}]$ satisfies the following property: 
\begin{enumerate}
    \item $\psi_{ij} \in(0,1]$ for all $i,j\inN$ with $\psi_{ii} = 1$, and 
    \label{enum:positivity}
    \item $\Vtz^\top\VtG\Vtz = \sum_{ij}\psi_{ij} z_i z_j > 0$ for any $\Vtz = (z_i)_{i\inN}$ such that $\sum_{i\inN} z_i = 0$. 
    \label{enum:CPD}
\end{enumerate}
\end{assumption}
\prettyref{assum:a_i}~\prettyref{enum:positivity} requires $\psi_{ij} > 0$ for all $i,j\inN$ but is not restrictive because $\psi_{ij}$ can be arbitrarily close to zero. 
Under \prettyref{assum:a_i}~\prettyref{enum:CPD}, $\Vta(\cdot)$ exhibits positive effects of agglomeration \citep[see, e.g.,][]{Osawa-Akamatsu-2020}. 
For example, consider an infinitesimal relocation of workers from $j$ to $i$, represented by a vector $\Vtz = \epsilon(\Vte_i - \Vte_j)$ with $\epsilon > 0$ and $\Vte_i$ being $i$th standard basis. 
Under \prettyref{assum:a_i}~\prettyref{enum:CPD}, the gain in $a_i$ induced by infinitesimal migration $\Vtz$ is strictly greater than the gain in $a_j$ because $a_i(\Vtx + \Vtz) - a_i(\Vtx) > a_j(\Vtx + \Vtz) - a_j(\Vtx)$. 
Therefore, the relocation of workers has \textit{self-reinforcing effects} in terms of regional productivity.

The inter-regional transportation of goods is costly. 
We assume iceberg transportation costs, that is, $\tau_{ij} \ge 1$ units should be shipped from $i$ for a unit to arrive at $j$, with $\tau_{ii} = 1$. 
Under perfect competition, the price of the good produced in $i$ and consumed in $j$ is 
\begin{align}
    p_{ij} = \frac{w_i}{a_i}\tau_{ij}. 
\end{align}
A higher freeness of production externalities increases productivity, and firms with higher worker productivity face lower marginal costs and thus charge a smaller optimal price, $p_{ij}$.
Under CES, the \textit{value} shipped from location $i$ to $j$ is given by
\begin{align}
	Q_{ij}
	= \frac{p_{ij}^{1 -\sigma}}{P_j^{1 - \sigma}} w_j x_j,
\end{align}
where $P_j$ is the CES price index: 
\begin{align}
	P_j 
	= 
	\left( 
    	\sum_{i\inN}
        p_{ij}^{1 - \sigma}
	\right)^{\frac{1}{1 - \sigma}}
	= 
	\left( 
	    \sum_{i\inN}
		    a_i^{\sigma - 1} w_i^{1 - \sigma} \phi_{ij}
	\right)^{\frac{1}{1 - \sigma}}, 
\end{align}
with $\phi_{ij} \Is \tau_{ij}^{1 - \sigma} \in (0,1]$. 
We denote $\VtD = [\phi_{ij}]$ and call $\VtD$ the \textit{geographical proximity matrix}. 
We assume all entries of $\VtD$ are strictly positive, that is, $\tau_{ij} < \infty$. 
\begin{assumption}
$\phi_{ij}\in (0,1)$ for all $i,j\inN$ if $i\ne j$ and $\phi_{ii} = 1$ for all $i\inN$.
\label{assum:phi}
\end{assumption}
\noindent We consider Assumptions \ref{assum:a_i} and \ref{assum:phi} to hold throughout the paper. 

The regional price index is decreasing in $a_i$, implying that a higher level of production externalities decreases the cost of living in region $i$. 
As a result, global demand in region $i$ increases in the spatial extent of externalities. 
Markets clear if the regional income is equal to the value of the goods sold in all regions, that is, for all $i\inN$:%
\begin{align}
		w_i x_i 
			= 
			\sum_{k\inN} Q_{ik}
			= 
			\sum_{k\inN}
				\dfrac
				{a_i^{\sigma - 1} w_i^{1 - \sigma} \phi_{ik}}
				{\sum_{l\inN} a_l^{\sigma - 1} w_l^{1 - \sigma} \phi_{lk}}
			w_k x_k. 
		\label{eq:short-run-equilibrium} 
\end{align}
\begin{remark}
If we let $\Htphi_{ij}(\Vtx) \Is a_i(\Vtx)^{\sigma - 1}\tau_{ij}^{1 - \sigma}$, it can be considered a reduced form of the trade cost that depends on the population distribution.%
\footnote{We thank Kristian Behrens for suggesting this natural and intuitive interpretation.}
Hence, our model can also be interpreted as a model with endogenous transportation costs, in which transportation costs, which fall with the concentration of population, possibly because of scale and/or density economies in transportation.  
\end{remark}
To normalize $\Vtw$, we assume that the total income of the economy is unity:
\begin{align}
    \sum_{i\inN} w_i x_i = 1.    
    \label{eq:normalization}
\end{align} 
Assume $\Vtx$ is positive, that is, $x_i > 0$ for all $i\inN$. 
Then, there is a unique wage vector $\Vtw$ that solves the market equilibrium conditions  \prettyref{eq:short-run-equilibrium} and \prettyref{eq:normalization} (see \prettyref{app:proofs}).

Market wage $\Vtw$ is thus defined by \prettyref{eq:short-run-equilibrium} and \prettyref{eq:normalization} as an implicit function of $\Vtx$. 
Additionally, the wage in a region diverges as the mass of workers goes to zero. 

With market wage $\Vtw(\Vtx)$,the per capita indirect utility in region $i$ is given by
\begin{align}
    v_i(\Vtx) = \frac{w_i(\Vtx)}{P_i(\Vtx)}, 
\end{align}
where price index $P_i$ is also a function of $\Vtx$ through $\Vta(\Vtx)$ and $\Vtw(\Vtx)$. 
We use $\Vtv(\Vtx) = (v_i(\Vtx))_{i\inN}$ to denote the indirect utility or payoff of workers as the function of spatial distribution $\Vtx$. 
It can be shown by the implicit function theorem that there exists $\Vtw$ differentiable in $\Vtx$ whenever $x_i > 0$ for all $i\inN$.

The exogenous model parameters are the elasticity of substitution $\sigma>1$, the geographical proximity matrix $\VtD$, and the externality matrix $\VtG$. 
Given parameter values, the \textit{spatial equilibrium} of the model is a spatial distribution $\Vtx$ of workers, and its associated market wage $\Vtw$ that satisfies \prettyref{eq:short-run-equilibrium}, equalizing the utility of mobile workers across regions. In other words, the spatial distribution is a spatial equilibrium if no mobile worker in region $i$ has an incentive to move to another region, $j\neq i$.
The following result establishes the existence of a spatial equilibrium. 
\begin{proposition}
\label{prop:existence}
There exists a spatial equilibrium for any $\sigma > 1$. 
All spatial equilibria are positive, that is, all regions are populated at any spatial equilibrium. 
\end{proposition}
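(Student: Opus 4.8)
The plan is to obtain existence through a Brouwer fixed-point argument confined to the open simplex, and to deduce positivity from the blow-up of wages near the boundary, which the text has already anticipated. Throughout I treat the market wage $\Vtw(\Vtx)$ as the unique continuous (indeed differentiable) solution of \prettyref{eq:short-run-equilibrium}--\prettyref{eq:normalization} on the interior $\Int\ClX$, as established in \prettyref{app:proofs}; consequently the payoff $\Vtv(\Vtx)=(w_i(\Vtx)/P_i(\Vtx))_{i\inN}$ is continuous on $\Int\ClX$.

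The crucial preliminary step is a quantitative boundary estimate: as $x_i\to 0$ while the remaining mass stays bounded away from zero, $v_i(\Vtx)\to+\infty$. First I would extract the wage asymptotics. Because $\psi_{ii}=1$ and the other regions remain populated, $a_i(\Vtx)=\sum_{j\inN}\psi_{ij}x_j$ stays bounded away from $0$ by \prettyref{assum:a_i}~\prettyref{enum:positivity}. Feeding this into \prettyref{eq:short-run-equilibrium}, the income $w_i x_i$ equals region $i$'s revenue summed over all destination markets; since $\sigma>1$, each market share of region $i$ carries the factor $w_i^{1-\sigma}\to 0$ as $w_i$ grows, which forces $w_i x_i\to 0$ and, balancing the dominant term, $w_i\sim(C/x_i)^{1/\sigma}$ with $C$ controlled by the bounded data of the other regions. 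At the same time the own contribution $a_i^{\sigma-1}w_i^{1-\sigma}\phi_{ii}$ to $P_i^{1-\sigma}$ vanishes, so $P_i$ converges to a finite positive limit; hence $v_i=w_i/P_i\to+\infty$. The one point demanding care is uniformity when several regions are simultaneously small, which I would handle by noting that some region always carries mass $\ge 1/n$, so a nearly-empty region is uniformly more attractive than that large region.

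For \textbf{positivity} this estimate is essentially decisive. By definition a spatial equilibrium $\Vtx^\ast$ admits a common value $\bar v$ with $v_i(\Vtx^\ast)=\bar v$ on its support and $v_i(\Vtx^\ast)\le\bar v$ off the support. If some $x_i^\ast=0$, then evaluating the payoff of region $i$ along $\Vtx^\ast+\delta(\Vte_i-\Vte_{i_0})$ with $i_0$ in the support and $\delta\downarrow 0$ drives $v_i\to+\infty$ by the boundary estimate, contradicting $v_i\le\bar v$; equivalently, a worker could profitably relocate into the empty region. Hence every equilibrium lies in $\Int\ClX$ and all regions are populated.

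For \textbf{existence} I would run Brouwer's theorem on a compact interior set. The boundary blow-up yields an $\epsilon_0>0$ and a continuous adjustment map $T\colon\ClX_{\epsilon_0}\to\ClX_{\epsilon_0}$ on $\ClX_{\epsilon_0}=\{\Vtx\in\ClX : x_i\ge\epsilon_0\ \forall i\}$ whose fixed points are exactly the utility-equalizing distributions---for instance a logit-type reallocation $T_i(\Vtx)\propto x_i\exp(\theta v_i(\Vtx))$, renormalized---where the repulsion from the boundary prevents any coordinate from dropping below $\epsilon_0$. A fixed point $\Vtx^\ast$ then has $v_i(\Vtx^\ast)$ constant across $i$, so together with positivity it is a spatial equilibrium. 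The main obstacle is precisely the self-map property of $T$: turning the pointwise divergence of $v_i$ into a \emph{uniform} lower bound on the attractiveness of a shrinking region across all admissible configurations of the remaining regions. Once that uniform estimate is secured, both existence and positivity follow cleanly.
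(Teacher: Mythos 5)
Your boundary estimate and your positivity argument coincide with the paper's own: \prettyref{lem:w-uniqueness} proves precisely your wage asymptotics, in the one-sided form $w_i^\sigma > C_1 C_2\, x_i^{-1}$, and the paper's proof of \prettyref{prop:existence} then bounds $v_i^{\sigma-1} > C_3 C_4\, w_i^{\sigma-1}$ to conclude that a depopulated region offers unboundedly large payoff, so no spatial equilibrium can have an empty region. (Note that only this one-sided divergence is needed; your two-sided claim $w_i \sim (C/x_i)^{1/\sigma}$ is not established in the paper and is not required.)

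The existence half, however, has a genuine gap, and it is exactly the one you flag --- but the repair you sketch does not close it. For the logit map $T_i(\Vtx) \propto x_i \exp(\theta v_i(\Vtx))$, the self-map condition at a boundary point with $x_i = \epsilon_0$ reads
\begin{align*}
\exp\left(\theta v_i(\Vtx)\right) \;\ge\; \sum_{j\inN} x_j \exp\left(\theta v_j(\Vtx)\right),
\end{align*}
and the dangerous terms on the right come not from the large region (whose payoff is indeed uniformly bounded, as you note) but from \emph{other} small regions. If a second region $k \ne i$ also sits at $x_k = \epsilon_0$, then $v_k$ diverges at the same rate $\epsilon_0^{-1/\sigma}$ as $v_i$, but with asymmetric $\VtD$ and $\VtG$ the multiplicative constants differ, so $v_k - v_i$ can itself diverge; then $\epsilon_0 \exp(\theta v_k) = \epsilon_0 \exp(\theta v_i)\exp(\theta(v_k - v_i))$ dwarfs $\exp(\theta v_i)$ as $\epsilon_0 \downarrow 0$, and $T_i(\Vtx) < \epsilon_0$. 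Your proposed uniform estimate --- that a nearly-empty region is more attractive than the region carrying mass $\ge 1/n$ --- is the wrong comparison: the self-map inequality pits small regions against \emph{each other}, and pointwise divergence of each $v_i$ induces no ordering between them. The paper sidesteps this entirely by never building a self-map: it formulates equilibrium as a variational inequality and invokes Corollary 2.2.5 of \cite{Facchinei-Pang-Book2003} on the compact convex set $\ClX_\epsilon$, which delivers a solution directly. The VIP structure is what neutralizes the two-small-regions problem: a VIP solution places all excess mass on maximal-payoff regions, so any region stuck at $x_i = \epsilon$ satisfies $v_i \le v_j$ for some region $j$ with $x_j \ge 1/n$, whose payoff is uniformly bounded; the boundary blow-up then forces interiority for small $\epsilon$, and interior VIP solutions equalize payoffs. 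If you insist on Brouwer, you would need an adjustment map whose fixed-point condition likewise compares a small region only against the maximum (a projection or best-response-type map), at which point you have essentially reconstructed the paper's variational-inequality argument.
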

Because there is positive demand for all varieties at any finite level of transportation cost under the Armington assumption and labor is the only input, the market wage of any individual worker diverges when the mass of workers in the same region goes to zero and the worker's utility goes to infinity. 
That is, $v_i(\Vtx)$ satisfies Inada's condition in $x_i$ as $\lim_{x_i\to 0} v_i(\Vtx) = \infty$. 
Therefore, no spatial equilibrium can incorporate depopulated regions.

We show general properties for the extreme values of transportation costs. 
Obviously, a uniform distribution of workers across regions is an equilibrium if trade is completely frictionless and the level of production externalities is the same across locations. 
We formalize this as follows. 
\begin{proposition}
\label{prop:dispersion}
Consider the ``death-of-distance'' limit, where trade and the interaction between different regions are completely costless, that is, the limit when $\phi_{ij} \to 1$ and $\psi_{ij} \to 1$ for all $i,j\inN$. 
Then, uniform distribution $\BrVtx = (\Brx,\Brx,\hdots,\Brx)$  with $\Brx \Is \frac{1}{n}$ is the unique and stable spatial equilibrium. 
\end{proposition}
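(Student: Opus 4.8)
The plan is to evaluate the equilibrium system directly in the stated limit, where the coupled wage equations decouple into a transparent pure-dispersion structure, and then to read off both uniqueness and stability from that structure. First I would substitute the limiting primitives. With $\psi_{ij}\to 1$ the productivity in \prettyref{eq:a_i} becomes $a_i(\Vtx)=\sum_{j\inN}x_j=1$ for every $i$, so productivity is uniform and completely independent of $\Vtx$; with $\phi_{ij}\to 1$ the matrix $\VtD$ becomes the all-ones matrix. Since this limit sits on the boundary of \prettyref{assum:phi} (which requires $\phi_{ij}<1$ off the diagonal), I would analyze the boundary model directly, noting that every expression below extends continuously to $\phi_{ij}=\psi_{ij}=1$.

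Next I would solve the market equilibrium. Using $a_i=1$, $\phi_{ik}=1$, and the normalization $\sum_{k\inN}w_k x_k=1$ from \prettyref{eq:normalization}, the clearing condition \prettyref{eq:short-run-equilibrium} collapses to
\[
    w_i x_i=\frac{w_i^{1-\sigma}}{\sum_{l\inN}w_l^{1-\sigma}}.
\]
Writing $S\Is\sum_{l\inN}w_l^{1-\sigma}$, this gives $x_i=w_i^{-\sigma}/S$, i.e. $w_i\propto x_i^{-1/\sigma}$, while the price index reduces to $P_j=\big(\sum_{i\inN}w_i^{1-\sigma}\big)^{1/(1-\sigma)}=S^{1/(1-\sigma)}$, which is the \emph{same} in every region. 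Hence the payoff is
\[
    v_i(\Vtx)=\frac{w_i}{P_i}=C(\Vtx)\,x_i^{-1/\sigma},
\]
where the scalar $C(\Vtx)$ (a fixed power of $S$) is common to all regions. The key structural fact is therefore that in the death-of-distance limit the model becomes a pure congestion model: each region's utility is strictly decreasing in its own population and depends on the rest of the distribution only through the common factor $C(\Vtx)$. Uniqueness is then immediate: by \prettyref{prop:existence} every spatial equilibrium is interior, so utilities must be equalized across all $n$ regions, and $v_i=C(\Vtx)x_i^{-1/\sigma}$ with common $C$ forces $x_i=x_j$ for all $i,j\inN$, hence $\Vtx=\BrVtx$; that $\BrVtx$ is itself an equilibrium follows by symmetry.

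For stability I would differentiate $\Vtv$ at $\BrVtx$ and show that its Jacobian, restricted to the simplex tangent space $T\Is\{\Vtz:\sum_{i\inN}z_i=0\}$, is negative definite. The decoupled form makes this clean. Differentiating $v_i=C(\Vtx)x_i^{-1/\sigma}$, the term through $C$ shifts all utilities by the same amount (by symmetry of $\BrVtx$ the derivative $\partial C/\partial x_j$ is independent of $j$), so it contributes a symmetric rank-one matrix proportional to $\bm{1}\bm{1}^\top$, which annihilates $T$; the own-population term $x_i^{-1/\sigma}$ contributes a negative multiple of the identity. Concretely one finds $\partial v_i/\partial x_j\big|_{\BrVtx}=\tfrac{v^\ast}{\sigma}\big(1-n\delta_{ij}\big)$, so on $T$ the Jacobian equals $-\tfrac{n v^\ast}{\sigma}\VtI$, which is negative definite. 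Negative definiteness of the payoff Jacobian on the tangent space is the standard sufficient condition for asymptotic stability under the myopic (replicator) adjustment dynamics used in this literature, completing the claim.

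I expect the main obstacle to be the reduction in the second step: establishing rigorously that the implicitly defined wage vector has the closed form $w_i\propto x_i^{-1/\sigma}$ and, crucially, that the price index is spatially \emph{uniform}, so that the payoff decouples as $v_i=C(\Vtx)x_i^{-1/\sigma}$. Once this decoupling is secured, uniqueness is one line and the stability computation reduces to verifying that the common-factor cross-terms vanish on $T$. A secondary point requiring care is the boundary status of the limit relative to \prettyref{assum:phi}; I would address it by working with the boundary model whose formulas arise as continuous limits, rather than by passing to limits of equilibria.
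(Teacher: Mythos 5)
Your proof is correct and takes essentially the same route as the paper's: evaluate the model at the limiting primitives, solve the market-clearing system explicitly (your $w_i\propto x_i^{-1/\sigma}$ is equivalent to the paper's reduction $w_ix_i=m_i$ with a spatially uniform price index), obtain uniqueness from utility equalization at the necessarily interior equilibrium, and establish stability via negative definiteness of the payoff Jacobian on the tangent space. Your hands-on computation $\partial v_i/\partial x_j\big|_{\BrVtx}=\frac{v^*}{\sigma}\left(1-n\delta_{ij}\right)$ is accurate and, after the normalization $\VtV=\frac{\Brx}{\Brv}\nabla\Vtv(\BrVtx)$, reproduces exactly the paper's relevant eigenvalue $-\sigma^{-1}$, which the paper instead reads off from its general appendix formula for $\VtV_x$.
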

For all stability claims, we assume a class of myopic dynamics consistent with model (see \prettyref{app:stability}).

In the converse limit, where trade is too costly and production externalities are local within each region, the only stable equilibrium is full agglomeration in either of the regions. 
\begin{proposition}
\label{prop:mono-centric}
Consider the ``autarky'' limit, where trade between different regions is prohibitively costly and there is no production externalities between them (i.e., the limit when $\phi_{ij} \to 0$ and $\psi_{ij} \to 0$ for all $i \ne j$). 
Stable equilibrium spatial patterns are full agglomeration in one region, that is, $x_i = 1$ for some $i\inN$ and $x_j = 0$ for all $j \ne i$. 
\end{proposition}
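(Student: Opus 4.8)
The plan is to reduce the autarky limit to a potential game whose potential is strictly convex, so that its stable states are forced to the extreme points of the simplex. First I would pass to the limiting payoff. Taking $\phi_{ij}\to0$ for $i\ne j$ in the market-clearing system \prettyref{eq:short-run-equilibrium} severs all inter-regional trade: every off-diagonal term in $Q_{ik}$ and in the price index $P_i$ drops out, so each region degenerates into a closed economy in which its $x_i$ workers produce and consume only the local variety. Per-capita consumption---hence utility, there being a single consumed good---is then just the local productivity, giving $v_i(\Vtx)\to a_i(\Vtx)$ independently of the wage level pinned down by \prettyref{eq:normalization}; the same conclusion follows by checking that $P_i\to w_i/a_i$. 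Taking next the local-externality limit $\psi_{ij}\to0$ ($i\ne j$) in \prettyref{eq:a_i} sends $\VtG\to\VtI$, so $a_i(\Vtx)\to x_i$ and the limiting payoff is simply $v_i(\Vtx)=x_i$.

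With this payoff the limiting economy is a potential game: setting
\begin{align}
    F(\Vtx) \Is \tfrac12\sum_{i\inN} x_i^2
    \qquad\text{gives}\qquad
    \PDF{F}{x_i} = x_i = v_i(\Vtx).
\end{align}
Under the admissible myopic dynamics (see \prettyref{app:stability}), $F$ is a strict Lyapunov function, so the asymptotically stable spatial equilibria are exactly the local maximizers of $F$ over $\ClX$. Since $F$ is strictly convex and $\ClX$ is a simplex, these maximizers are precisely its extreme points $\Vtx=\Vte_i$, i.e.\ the full-agglomeration states $x_i=1$, $x_j=0$ ($j\ne i$); by symmetry all $n$ of them are stable. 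Any equilibrium occupying $k\ge2$ regions (each then carrying the common payoff $1/k$) is unstable, because redistributing mass toward a more populated occupied region strictly raises $F$; in particular the uniform distribution $\BrVtx$ is the global minimizer of $F$ and is therefore unstable, mirroring \prettyref{prop:dispersion}, where it was the unique stable state in the opposite limit.

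I expect the genuine obstacle to lie in justifying the limit taken in the first step, namely that the implicitly defined wage $\Vtw(\Vtx)$---and therefore $v_i=w_i/P_i$---converges to its autarky value as $\phi_{ij}\to0$, uniformly on compact subsets of the interior of $\ClX$; this is where the analytic estimates concentrate. A related subtlety is the interaction with the Inada property noted after \prettyref{prop:existence}: for strictly positive $\phi_{ij},\psi_{ij}$ every vacated region is infinitely attractive, so full agglomeration is never an exact pre-limit equilibrium. The resolution is the order of operations---one first forms the limiting payoff $v_i(\Vtx)=x_i$, which is continuous and bounded on all of $\ClX$ (the blow-up having vanished in the limit), and analyzes the resulting potential game. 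The full-agglomeration vertices then arise as the limits, as $\phi_{ij},\psi_{ij}\to0$, of the pre-limit stable equilibria consisting of one dominant region together with vanishingly small satellite populations.
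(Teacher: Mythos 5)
Your proposal is correct, and its first half coincides with the paper's own proof: the paper likewise works at the limit $\phi_{ij}=\psi_{ij}=0$ directly, obtaining $a_i(\Vtx)=x_i$ and the limiting payoff $v_i(\Vtx)=x_i$. (Your derivation via $P_i\to w_i/a_i$, so that the region-by-region indeterminate autarky wage simply cancels in $v_i=w_i/P_i=a_i$, is in fact cleaner than the paper's intermediate step, which pins $w_i=x_i$ from a market-clearing equation written without the price-index denominator.) Where you genuinely depart is the stability argument. The paper argues directly: with $v_i(\Vtx)=x_i$, every equilibrium assigns equal mass $1/k$ to its $k$ populated regions, and any such state with $k\ge2$ is destabilized because a small transfer between populated regions makes the larger region strictly more attractive ($v_i(\Vtx)>v_j(\Vtx)$ whenever $x_i>x_j$), so only full agglomeration survives. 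You instead note that the limiting game is a potential game, $\Vtv(\Vtx)=\nabla F(\Vtx)$ with $F(\Vtx)=\tfrac12\sum_{i\inN}x_i^2$, so that under the admissible dynamics of \prettyref{app:stability} (Nash stationarity plus positive correlation, \prettyref{assum:dynamics}) $F$ is nondecreasing along trajectories and stable states must be local maximizers of the strictly convex $F$ over the simplex $\ClX$, which are exactly the vertices $\Vte_i$. This buys something: it turns the paper's informal phrase ``cannot be stable under natural dynamics'' into a rigorous statement valid uniformly over the entire admissible class (a non-maximizer cannot be asymptotically stable, since nearby states of strictly higher potential can never descend back to it), and it disposes of all non-vertex equilibria in one stroke, while the paper's argument remains more elementary. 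Your closing remarks on the order of limits also go beyond the paper, which never confronts the tension with \prettyref{prop:existence} (pre-limit, all equilibria are interior, so full agglomeration is never an exact equilibrium for positive $\phi_{ij},\psi_{ij}$); your reading---the vertices arise as limits of interior stable equilibria with vanishing satellite populations---is exactly what the bifurcation diagrams (e.g., \prettyref{fig:4-baseline}) display as $\phi\to0$.
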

Propositions \ref{prop:dispersion} and \ref{prop:mono-centric} show that the economy describes the dispersion process from a mono-centric configuration to uniform dispersion when transportation costs decline. 
We confirm this through examples in Sections \ref{sec:two-region_economy} and \ref{sec:four-region_economies}. 
\prettyref{prop:mono-centric} demonstrates that multiple equilibria can exist, as expected.

Below, we explore concrete examples to illustrate the essential implications of considering the additional proximity structure (i.e., $\VtG$). 
\prettyref{sec:two-region_economy} considers the canonical starting point, the symmetric two regions. 
\prettyref{sec:four-region_economies} considers four-region setups with several symmetric, yet representative, structures for externality matrix $\VtG$.

\section{Dispersion process in a two-region economy\label{sec:two-region_economy}}

We first consider a symmetric two-region economy, where the two regions have the same characteristics, to elucidate the basic workings of the model when transportation costs decline. 
When transport is prohibitive, the economy starts from an almost fully agglomerated state in either region (\prettyref{prop:mono-centric}). 
Uniform distribution $\BrVtx$ is stable when transport is free (\prettyref{prop:dispersion}); 
thus, $\BrVtx$ becomes stable at some intermediate transportation cost level. 
Here, we study the stability of the symmetric equilibrium and derive the critical level of transportation costs at which the equilibrium interchanges stability.

When regions are symmetric, it is natural to assume that  
\begin{align}
	\VtD 
	=
	\begin{bmatrix}
		1 & \phi \\ \phi & 1 
	\end{bmatrix}
	\quad
	\text{and}
	\quad
	\VtG
	=
	\begin{bmatrix}
		1 & \psi \\ \psi & 1 
	\end{bmatrix}, 
	\label{eq:D-G_2}
\end{align}
where $\phi \in (0,1)$ and $\psi\in(0,1)$. 
We call $\phi$ the \textit{freeness of trade} and $\psi$ the \textit{spatial extent of externalities}. 
When $\psi$ is small, it means that productivity spillovers attenuate sharply. 
It is obvious that uniform distribution $\BrVtx \Is (\Brx,\Brx)$ with $\Brx = \frac{1}{2}$ is a spatial equilibrium for all $(\sigma,\phi,\psi)$, where we have $w_1(\BrVtx) = w_2(\BrVtx) = 1$ and $v_1(\BrVtx) = v_2(\BrVtx) = \Brv \Is (1 + \psi)(1 + \phi)^{\frac{1}{\sigma - 1}}\Brx > 0$. 

\subsection{Net agglomeration forces at the uniform distribution}

In a two-region economy, migration from one region to the other is represented by the vector $\Vtz \Is (1,-1)$. 
For a wide range of dynamics, $\BrVtx$ is linearly stable if eigenvalue of $\nabla\Vtv(\BrVtx)$, the Jacobian matrix of $\Vtv$ at $\Vtx = \BrVtx$, associated with $\Vtz$, is negative; 
alternatively, it is linearly unstable if the eigenvalue is positive.\footnote{If the eigenvalue is zero, the equilibrium is non-hyperbolic, or otherwise called \emph{irregular}. According to \cite{Castro_2021}, non-existence of irregular equilibria is generic, that is, it holds in a full measure subset of a suitably defined parameter space.}
Below, for slight economy of notation, we consider the eigenvalue of the \emph{payoff elasticity} matrix with respect to $\Vtx$, $\VtV \Is \frac{\Brx}{\Brv}\nabla\Vtv(\BrVtx)$, which is a positive scalar multiple of the eigenvalue of $\nabla\Vtv(\BrVtx)$. 

Let $\omega$ be the eigenvalue of $\VtV$ associated with $\Vtz$. 
We can interpret $\omega$ as the \textit{net agglomeration force} at $\BrVtx$. 
Specifically, $\omega$ is the elasticity of the \emph{payoff difference} $v_1(\Vtx) - v_2(\Vtx)$ with respect to $x_1$ at $\BrVtx$, as we can write $\omega$ as follows: 
\begin{align}
	\label{eq:omega_2reg}
	\omega 
	& 
	=
	\frac{\Brx}{\Brv}
	\PDF{\left(v_1 - v_2\right)}{x_1}(\BrVtx)
	= 
	\frac{\Brx}{\Brv}
	\left(
	\PDF{v_1}{x_1}(\BrVtx)
	-
	\PDF{v_2}{x_1}(\BrVtx)
	\right). 
\end{align}
If $\omega < 0$, there is no incentive for agents to migrate because a marginal increase in the mass of workers in a region induces a \emph{relative decrease} of the utility therein. 
Similarly, $\BrVtx$ is unstable if $\omega > 0$; 
when a small fraction of workers relocate from region $2$ to $1$, it induces a \emph{relative increase} of the payoff in region $1$, thus encouraging further migration from region $2$. 

We can break down $\omega$ by the chain rule as 
\begin{align}
    \omega 
    = 
        \omega_a
        \alpha_x
        +
        \omega_w
        \beta_x, 
\end{align}
where $\omega_a$ and $\omega_w$ are the elasticities of payoff difference $v_1(\Vtx) - v_2(\Vtx)$ with respect to $a_1$ and $w_1$ at $\Vtx = \BrVtx$, and 
$\alpha_x$ and $\beta_x$ are the elasticities of $a_1$ and $w_1$ with respect to migration of workers from one region to the other, respectively.%
\footnote{Specifically, 
with $\Bra \Is a_1(\BrVtx) = (1 + \psi)\Brx$ and $\Brw = 1$ being the uniform level of regional productivity and wage, 
\begin{align*}
	& 
    \omega_a
    = 
    \frac{\Bra}{\Brv}
    \left(
    \PDF{v_1}{a_1}(\BrVtx)
    -
    \PDF{v_2}{a_1}(\BrVtx)
    \right), 
    \omega_w
    = 
    \frac{\Brw}{\Brv}
    \left(
    \PDF{v_1}{w_1}(\BrVtx)
    -
    \PDF{v_2}{w_1}(\BrVtx)
    \right),
    \text{ and }
    \alpha_x
    =  
    \frac{\Brx}{\Bra}
    \left(
        \PDF{a_1}{x_1}(\BrVtx)
        - 
        \PDF{a_1}{x_2}(\BrVtx)
    \right). 
\end{align*}
We can swap the regional indices in the above expressions due to regional symmetry.
} 

For $\omega_a$ and $\omega_w$, we can show that 
$\omega_a = \chi$ and $\omega_w = 1 - \chi$, 
where $\chi$ is given by
\begin{align}
    \chi \Is \frac{1 - \phi}{1 + \phi} \in(0,1). 
\end{align} 
We observe that $\chi$ is an index of trade costs, which is close to $1$ when $\phi$ is small and goes to $0$ when $\phi$ approaches $1$. 
Both $\omega_a$ and $\omega_w$ are positive and respectively indicate positive effects of regional productivity and wage on the regional payoff. 
When trade is more costly ($\phi$ is small $\Leftrightarrow$ $\chi$ is large), the payoff difference is more sensitive to the variation in regional productivity ($\omega_a = \chi$ is large), whereas it is less sensitive to wage ($\omega_w = 1 - \chi$ is small). 

We have 
$\alpha_x = \lambda$
where $\lambda$ is given by 
\begin{align}
    \lambda \Is \frac{1 - \psi}{1 + \psi} \in(0,1), 
\end{align}
which is the \emph{sensitivity} of the inter-regional production externality level to an infinitesimal migration of workers from one region to the other. 
A region's productivity is sensitive to migration when inter-regional production spillover is weak because $\alpha_x = \lambda$ is large when $\psi$ is small; the opposite also holds true.

For $\beta_x$, we compute it as follows: 
\begin{align}
    & 
    \beta_x 
    = 
    \frac{1}{\sigma + (\sigma - 1)\chi}
    \left(
    (\sigma - 1)(1 + \chi)\alpha_x - 1 
    \right). 
    \label{eq:beta_x}
\end{align}
A population increase in region $1$ induces a relative productivity increase $\alpha_x = \lambda > 0$, which in turn increases the nominal wage in the region, since $\omega_a \alpha_x > 0$. 
However, if the mass of workers in a region increases, then it has a negative effect on the nominal wage, since the total revenue in a region is given by $w_i x_i$. 

Among these elasticities, only $\beta_x$ can be negatively associated with regional asymmetry and produce stabilizing effects. 
If $1 > (\sigma - 1)(1 + \chi)\alpha_x = (\sigma - 1)(1 + \chi)\lambda$, then $\beta_x$ is negative; 
this can happen when $\chi$ and $\lambda$ are sufficiently small, that is, when $\psi$ and $\phi$ are sufficiently large (and/or $\sigma$ is sufficiently small). 

In sum, $\omega$ is represented as $\omega = \Omega(\chi,\lambda)$, where $\Omega$ is defined follows: 
\begin{align}
    \Omega(s,t)
    \Is 
    \frac{
    -
    (1 - s) 
    + 
    \left((\sigma - 1) + \sigma s \right)
    t
    }{\sigma + (\sigma - 1) s} 
    \label{eq:Omega}. 
\end{align}
The denominator of $\Omega$ comes from $\beta_x$ in \prettyref{eq:beta_x} and is positive for all admissible values of $\sigma$ and $\phi$. 

The numerator of $\omega = \Omega(\chi,\lambda)$ reveals the net agglomeration and dispersion forces in the model. 
The first term, $-(1 - \chi)$, is negative and thus represents the dispersion force due to costly trade. 
The dispersion force \textit{strengthens} when $\phi$ increases, since $\chi$ is decreasing in $\phi$. 
Because every region specializes in a single variety, workers' love for variety due to CES preferences induces a stronger centrifugal force when trade is freer. 
The second term, $\left((\sigma - 1) + \sigma\chi \right)\lambda$, is positive and represents the agglomerative force due to productivity spillover \prettyref{eq:a_i}. 
Because $\lambda\in(0,1)$ is monotonically decreasing in $\psi$, this force is at its strongest when $\psi$ is small, which is intuitive. 
Overall, $\omega$ tends to be positive when $\phi$ and $\psi$ are smaller (i.e., when transport is costly), which is consistent with \prettyref{prop:mono-centric}. 

\begin{remark}
We can remove agglomerative effects given expression \prettyref{eq:Omega} by setting $t = 0$. 
If regional productivity $\Vta$ is a $\Vtx$-independent constant, then $\alpha_x = 0$. 
For this case, $\BrVtx$ is always stable because 
\begin{align}
	\omega = \omega_w \beta_x = 
	- \dfrac{1 - \chi}{\sigma + (\sigma - 1)\chi} < 0. 
\end{align}
Without any agglomerative forces, costly trade discourages the uneven concentration of workers. 
\end{remark}

\begin{remark}
If we consider reduced-form congestion effects for local amenities as in \cite{Allen-Arkolakis-QJE2014} by a payoff function such as $\TlU_j = x_j^{-\theta}U_j$ with $\theta > 0$, then the relevant eigenvalue becomes $\Tlomega = - \theta + \omega$. 
For simplicity, we do not consider such congestion effects. 
\end{remark}

\begin{figure}[tb]
	    \centering
	    \includegraphics{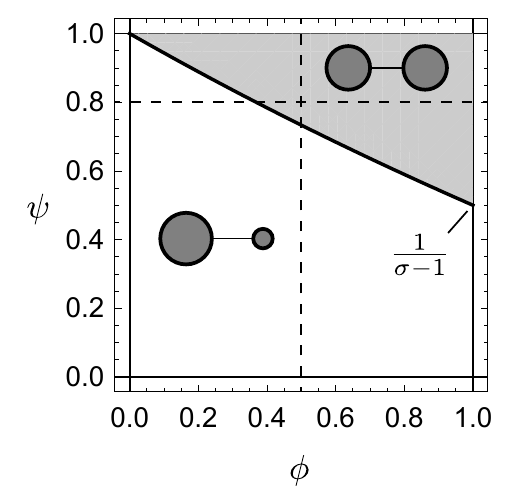}
	\caption{Stability of $\BrVtx$ in a two-region economy ($\sigma = 4.0$).}
	\label{fig:stab-unif-2}
	\begin{figurenotes}
	Uniform distribution $\BrVtx$ is stable for the shaded (gray) region of $(\phi,\psi)$ and the black solid curve indicates the critical pair of $(\phi,\psi)$, where $\BrVtx$ becomes unstable. 
	The horizontal and vertical dashed lines correspond to the parametric paths for bifurcation diagrams \prettyref{fig:bif-2-phi} and \prettyref{fig:bif-2-psi}, respectively. 
	The schematic on each (gray or white) parametric region indicates the representative spatial pattern in that parametric region. The results remain invariant under a reasonable range of values for sigma (see \prettyref{app:sigma-variation}).
	\end{figurenotes}
\end{figure}

\subsection{Stability of dispersion}

From the formula of $\omega$, we have the following characterization for the stability of $\BrVtx$. 
\begin{proposition}
\label{prop:n=2}
Assume $n = 2$ and $\VtD$ and $\VtG$ in \prettyref{eq:D-G_2}. Then, uniform distribution $\BrVtx = (\Brx,\Brx)$ is linearly stable if and only if $\omega = \Omega(\chi(\phi),\lambda(\psi)) 
    < 0$.
\end{proposition}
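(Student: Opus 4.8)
The plan is to recast linear stability as a sign condition on a single eigenvalue and then to confirm the closed form $\omega = \Omega(\chi,\lambda)$ by assembling the elasticities already introduced. The admissible perturbations of $\BrVtx$ are those tangent to the simplex $\ClX$, i.e.\ vectors with zero coordinate sum; for $n=2$ this tangent space is one-dimensional and spanned by the migration vector $\Vtz \Is (1,-1)$. Because the two regions are interchangeable, the payoff map $\Vtv$ commutes with the coordinate swap, so its Jacobian $\nabla\Vtv(\BrVtx)$ is diagonalized by the symmetric/antisymmetric pair $(1,1)$ and $\Vtz=(1,-1)$, and only the latter lies in the tangent space. Hence $\BrVtx$ is linearly stable precisely when the eigenvalue of $\nabla\Vtv(\BrVtx)$ along $\Vtz$ is negative; since $\VtV = \tfrac{\Brx}{\Brv}\nabla\Vtv(\BrVtx)$ is a positive multiple of this Jacobian, this eigenvalue has the same sign as $\omega$, the eigenvalue of $\VtV$ along $\Vtz$. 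This yields the equivalence ``stable $\iff \omega<0$'' once $\omega$ is pinned down.

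To evaluate $\omega$, I would apply the chain rule through the composition $\Vtx \mapsto (\Vta(\Vtx),\Vtw(\Vtx)) \mapsto \Vtv$, giving $\omega = \omega_a\alpha_x + \omega_w\beta_x$ as in the text. The factors $\omega_a=\chi$ and $\omega_w=1-\chi$ follow from differentiating $v_i = w_i/P_i$ with respect to $a_i$ and $w_i$ at the symmetric point, using the explicit price index $P_i = (\sum_k a_k^{\sigma-1}w_k^{1-\sigma}\phi_{ki})^{1/(1-\sigma)}$ and $\chi = (1-\phi)/(1+\phi)$; these are direct, symmetry-reduced calculations. Likewise $\alpha_x = \lambda = (1-\psi)/(1+\psi)$ is immediate from $\Vta = \VtG\Vtx$, since $\partial a_1/\partial x_1 - \partial a_1/\partial x_2 = 1 - \psi$ while $\Bra = (1+\psi)\Brx$.

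The one genuinely nontrivial step is the elasticity $\beta_x$, because $\Vtw(\Vtx)$ is defined only implicitly by the market-clearing system \prettyref{eq:short-run-equilibrium} together with the normalization \prettyref{eq:normalization}. Here I would linearize that system about $\BrVtx$, where $\Vtw(\BrVtx)=(1,1)$, and solve for the antisymmetric component of $\partial\Vtw/\partial\Vtx$ in the direction $\Vtz$. Differentiating the clearing conditions, collecting the linear equations in the difference modes of $\Vtw$, $\Vta$, and $\Vtx$, and inverting the (scalar, after symmetry reduction) response coefficient yields \prettyref{eq:beta_x}, namely $\beta_x = \big((\sigma-1)(1+\chi)\alpha_x - 1\big)/(\sigma + (\sigma-1)\chi)$. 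The denominator is exactly the one appearing in $\Omega$ and is strictly positive for all admissible $\sigma>1$ and $\phi\in(0,1)$, so the inversion is legitimate.

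Substituting $\omega_a=\chi$, $\omega_w = 1-\chi$, $\alpha_x = \lambda$, and the expression for $\beta_x$ into $\omega = \omega_a\alpha_x + \omega_w\beta_x$ and simplifying reproduces $\omega = \Omega(\chi,\lambda)$ with $\Omega$ as in \prettyref{eq:Omega}, completing the identification and hence the proposition. I expect the implicit differentiation for $\beta_x$ to be the main obstacle: it is the only place where the equilibrium wage cannot be read off directly, and care is needed to keep the normalization constraint consistent while extracting only the antisymmetric response that feeds into the $\Vtz$-eigenvalue.
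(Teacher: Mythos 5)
Your proof is correct, and it reproduces the result by a route that differs from the paper's own proof of this proposition. You work directly in the two-region setting: symmetry reduction to the antisymmetric mode $\Vtz=(1,-1)$, the chain-rule decomposition $\omega=\omega_a\alpha_x+\omega_w\beta_x$, and implicit differentiation of the market-clearing system to extract $\beta_x$ --- essentially a rigorous expansion of the heuristic derivation the paper gives in the main text before stating the proposition. The paper's appendix proof instead goes through general machinery: it computes the full Jacobian $\VtV_x$ via the implicit function theorem in matrix form ($\VtW_x=-\VtZ_w^{-1}\VtZ_x$), assumes $\BrVtD$ and $\BrVtG$ are circulant, derives the uniform eigenvalue formula $\omega_k=\Omega(\chi_k,\lambda_k)$ once and for all (Fact A.1), and then specializes to $n=2$ by diagonalizing with the DFT matrix $\VtZ_2$, reading off $\chi=\frac{1-\phi}{1+\phi}$ and $\lambda=\frac{1-\psi}{1+\psi}$. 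Your route is more elementary and self-contained; the paper's route buys two things you should note. First, the same Fact A.1 immediately handles all the four-region propositions (circulant and BCCB structures), so the general computation amortizes across the paper. Second, and more substantively, the paper concludes stability from negative definiteness of $\VtV$ on the tangent space $T\ClX$, which makes $\BrVtx$ an evolutionary stable state and hence stable under the \emph{whole class} of admissible dynamics of Assumption A.1; your statement that stability holds ``precisely when'' the $\Vtz$-eigenvalue of $\nabla\Vtv(\BrVtx)$ is negative implicitly fixes one such dynamic and leans on the footnoted claim in the main text, so to match the paper's standard of rigor for the ``if and only if'' you would want to add the observation that negativity of $\omega$ gives an ESS (stability under all admissible dynamics) while positivity gives a profitable deviation direction (instability). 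Your assembly of the pieces is sound: substituting $\omega_a=\chi$, $\omega_w=1-\chi$, $\alpha_x=\lambda$, and $\beta_x=\bigl((\sigma-1)(1+\chi)\lambda-1\bigr)/\bigl(\sigma+(\sigma-1)\chi\bigr)$ does simplify exactly to $\Omega(\chi,\lambda)$, since $\chi\lambda\bigl(\sigma+(\sigma-1)\chi\bigr)+(1-\chi)\bigl((\sigma-1)(1+\chi)\lambda-1\bigr)=\bigl((\sigma-1)+\sigma\chi\bigr)\lambda-(1-\chi)$.
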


\prettyref{fig:stab-unif-2} shows \prettyref{prop:n=2} on the $(\phi,\psi)$-space. 
Uniform distribution $\BrVtx$ is stable in the shaded areas of $(\phi,\psi)$, and unstable otherwise. 
The black solid curve shows the critical pairs of $(\phi,\psi)$ below which $\BrVtx$ becomes unstable, that is, the solutions for  $\omega(\phi,\psi) = \Omega(\chi(\phi),\lambda(\psi)) = 0$. 
Stability condition $\omega < 0$ is satisfied when both $\phi$ and $\psi$ are relatively high. 
For any $\phi\in(0,1)$, $\BrVtx$ is stable when the spatial extent of externalities $\psi$ is sufficiently high.

\begin{figure}[tb]
	\centering
	\begin{subfigure}[c]{.32\hsize}
	    \centering
	    \includegraphics[width=\hsize]{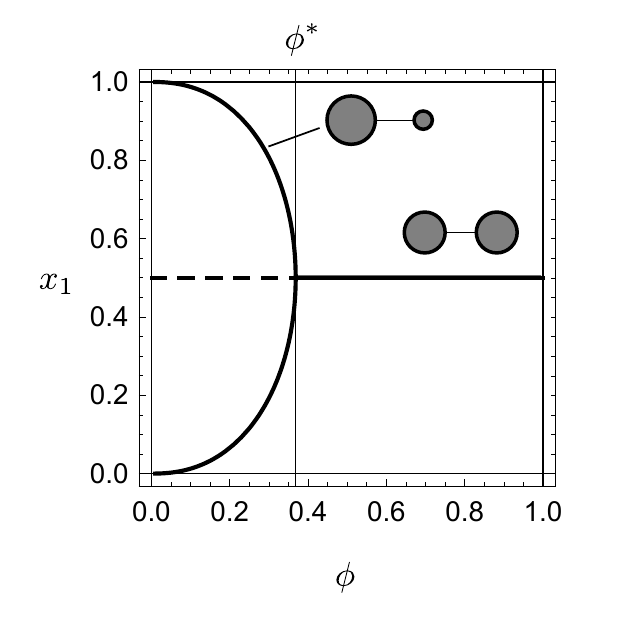}
	    \caption{$(\psi,\sigma) = (0.8,4.0)$} 
	    \label{fig:bif-2-phi}
	\end{subfigure}
	\begin{subfigure}[c]{.32\hsize}
	    \centering
	    \includegraphics[width=\hsize]{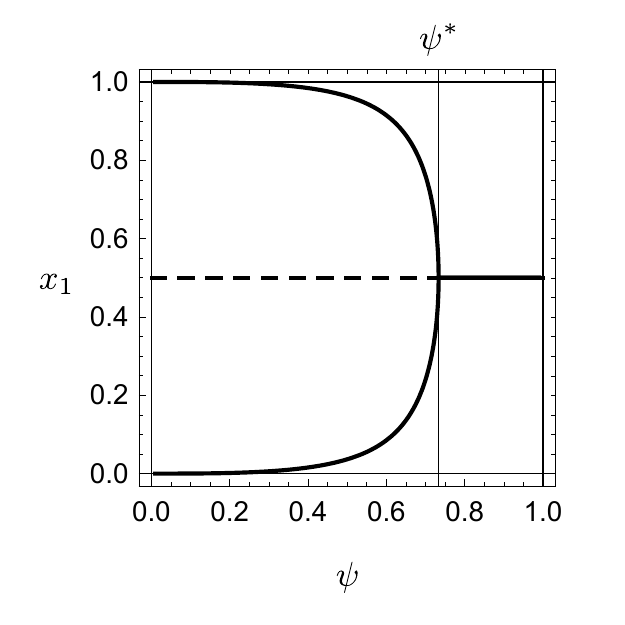}
	    \caption{$(\phi,\sigma) = (0.5,4.0)$} 
	    \label{fig:bif-2-psi}
	\end{subfigure}
	\begin{subfigure}[c]{.32\hsize}
	    \centering
	    \includegraphics[width=\hsize]{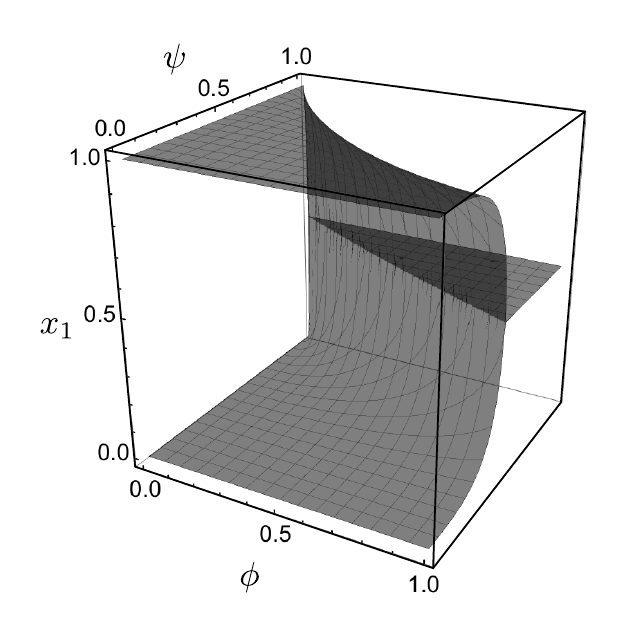}
	    \caption{Full picture ($\sigma = 4.0$)} 
	    \label{fig:bif-2-full}
	\end{subfigure}
	\caption{Bifurcation diagrams for a symmetric two-region economy.}
	\begin{figurenotes}
	In Figures \ref{fig:bif-2-phi} and \ref{fig:bif-2-psi}, 
	the black solid curves indicate stable spatial equilibria and the dashed curves indicate unstable ones. 
	The corresponding spatial configurations are schematically shown, where the size of each gray disk indicates the size of a region. 
	In \prettyref{fig:bif-2-full}, the transparent gray surface indicates stable equilibria in terms of $x_1$. 
	Figures \ref{fig:bif-2-phi} and \ref{fig:bif-2-psi} are the cross sections of the surface at $\psi = 0.8$ and $\phi = 0.5$, respectively. 
	We can consider various cross sections of the surface, or curves over the $(\phi,\psi)$-space to investigate the effects of simultaneous changes in $(\phi,\psi)$. 
	\end{figurenotes}
	\label{fig:bifs-2}
\end{figure}

The boundary of the gray region in \prettyref{fig:stab-unif-2} is represented as follows:
\begin{align}
    \phi^* = (2\sigma - 1) \frac{\lambda}{2 + \lambda} = (2\sigma - 1) \frac{1 - \psi}{3 + \psi} > 0. 
\end{align} 
If $\phi^* \in (0,1)$, then the uniform distribution is stable for all $\phi \in (\phi^*,1)$. 
If, otherwise, $\phi^* \ge 1$, then $\BrVtx$ is unstable for all $\phi\in(0,1)$, so that the economy always exhibits asymmetry (e.g., $x_1 > x_2$). 
We require $\phi^* \in (0,1)$, which may be called the ``no-black-hole'' condition, following \cite{Fujita-Krugman-Venables-Book1999}. 
We have $\phi^*\in(0,1)$ either when $\sigma\in(1,2)$ (which is unrealistic) or when $\sigma \ge 2$ and $\psi > \frac{\sigma - 2}{\sigma}$. 
\noindent If $\psi \le \frac{\sigma  - 2}{\sigma}$, then $\BrVtx$ is unstable for any $\phi$. 
When elasticity of substitution $\sigma$ is relatively large and freeness of interaction $\psi$ is relatively small, then migration toward one of the regions is profitable. 
Our model shows that, when $\phi$ gradually increases, $\BrVtx$ becomes stable when $\phi^*$ is attained; a similar conclusion can be derived for the $\psi$-axis.

\prettyref{fig:bifs-2} shows the bifurcation diagram of the stable spatial equilibria in terms of $x_1$ when $\phi$ and/or $\psi$ vary. 
Figures \ref{fig:bif-2-phi} and \ref{fig:bif-2-psi} respectively show the bifurcation diagrams for the horizontal and vertical dashed lines in \prettyref{fig:stab-unif-2}. 
\prettyref{fig:bif-2-full} shows the bifurcation diagram of the stable equilibrium values of $x_1$ over the full $(\phi,\psi)$-space. 
The uniform distribution is stable for high $\phi$ or $\psi$ values. 
The model represents the resolution process of an established agglomeration when $\phi$ and/or $\psi$ monotonically increase.  
The stable equilibrium paths are continuous on the $\phi$ and $\psi$ axes and there are no catastrophic jumps nor hysteresis when $\BrVtx$ becomes unstable. 
These properties are akin to \cite{Helpman-Book1998}'s model, with an additional dimension of $\psi$.
We can formally show that the bifurcation from $\BrVtx$ is a supercritical pitchfork, which essentially means that the dispersion process is ``reversible.'' 
\begin{proposition}
\label{prop:pitchifork}
The bifurcation from $\BrVtx$ when decreasing $\phi$, or decreasing $\psi$, takes a supercritical pitchfork form. 
The dispersion process of economic activities is smooth and gradual as the economy becomes more symmetric.  
\end{proposition}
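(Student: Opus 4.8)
The plan is to reduce the interior-equilibrium problem to a single scalar equation and then exploit the reflection symmetry of the two-region economy, so that the bifurcation falls squarely into the $\BbZ_2$-equivariant (pitchfork) normal form. Writing $x_1 = \Brx + s$ and $x_2 = \Brx - s$ with $s\in(-\Brx,\Brx)$, every interior distribution is parametrized by the scalar $s$, and by \prettyref{prop:existence} any spatial equilibrium is interior with $v_1 = v_2$. Hence interior equilibria are exactly the zeros of
\begin{align}
    g(s) \Is v_1(\Brx + s, \Brx - s) - v_2(\Brx + s, \Brx - s),
    \label{eq:g_pitch}
\end{align}
where the implicit wages $\Vtw$ solve \prettyref{eq:short-run-equilibrium}--\prettyref{eq:normalization}. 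The regional relabelling $(1,2)\mapsto(2,1)$ together with $s\mapsto -s$ leaves the model invariant; this forces $w_1(s) = w_2(-s)$ and therefore $g(-s) = -g(s)$, so the reduced map $g$ is \emph{odd}. This is the structural fact that rules out a transcritical branch and places us in the pitchfork family.

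Because $g$ is odd and real-analytic in $s$ (the analytic implicit function theorem gives $\Vtw$ analytic on the interior, and $\Vta(s)$ is affine in $s$), it admits the expansion
\begin{align}
    g(s) = c_1 s + c_3 s^3 + O(s^5),
    \label{eq:g_expand}
\end{align}
with all even coefficients vanishing. The linear coefficient is a positive multiple of the net agglomeration force: $c_1 = \kappa\,\omega$ with $\kappa = \Brv/\Brx > 0$ and $\omega = \Omega(\chi(\phi),\lambda(\psi))$ from \prettyref{eq:Omega}, so $\sgn c_1 = \sgn\omega$, which recovers \prettyref{prop:n=2}. The bifurcation point is the locus where $\omega = 0$, i.e. $\phi = \phi^\ast$ (resp. the analogous threshold in $\psi$). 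Transversality is immediate from the explicit formula: since $\chi$ is strictly decreasing in $\phi$, $\lambda$ strictly decreasing in $\psi$, and $\Omega$ is monotone in each argument, $\omega$ crosses zero with nonzero speed as $\phi$ (or $\psi$) decreases through the threshold, so $c_1$ changes sign cleanly. Nontrivial equilibria solve $c_1 + c_3 s^2 + O(s^4) = 0$, so to leading order $s^2 = -c_1/c_3$, and a symmetric pair of branches exists precisely when $c_1/c_3 < 0$.

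The decisive step, and the main obstacle, is the nondegeneracy/orientation condition $c_3 < 0$ at the bifurcation point. Granting it, the branches $s = \pm\sqrt{-c_1/c_3}$ appear only for $c_1 > 0$ (below threshold, where $\BrVtx$ is unstable), and evaluating the admissible reduced dynamics $\dot s = (\text{positive})\cdot g(s)$ along a branch gives $g'(s^\ast) = c_1 + 3c_3 (s^\ast)^2 = -2c_1 < 0$, so the asymmetric branches are stable: this is exactly a \emph{supercritical} pitchfork, with $x_1$ emerging continuously like $\Brx \pm \sqrt{-c_1/c_3}$, hence smooth, gradual, and free of jumps or hysteresis. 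Computing $c_3$ is the laborious part: because the wages are only implicitly defined, one must expand the market-clearing system \prettyref{eq:short-run-equilibrium} to third order in $s$, solving successively for the first, second, and third $s$-derivatives of $w_1(s)$ at $s=0$ (the reflection symmetry roughly halves the work, since the $w_2$-derivatives follow from $w_1(s)=w_2(-s)$, with even and odd orders of opposite parity), and then substitute into the third derivative of $v_1 - v_2$. I expect the resulting $c_3$ to factor as the positive denominator $\sigma+(\sigma-1)\chi$ of \prettyref{eq:beta_x} times a numerator that is manifestly negative on the critical locus $\omega = 0$; verifying that negativity for all admissible $(\sigma,\phi,\psi)$ with $\phi^\ast\in(0,1)$ is where the genuine computational effort lies.
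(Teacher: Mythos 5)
Your reduction is exactly the paper's: the paper also parametrizes interior states as $\Vtx(y) = (\Brx + y, \Brx - y)$, studies $\Delta v(y,\mu) = v_1(\Vtx(y)) - v_2(\Vtx(y))$, uses the oddness forced by the relabelling symmetry to kill the even terms, identifies the linear coefficient with $\tfrac{2\Brv}{\Brx}\,\omega$, and checks transversality via the explicit monotonicity of $\Omega$ in $\chi$ and $\lambda$ (the paper computes $\partial_\phi \Delta v'(0,\mu^*) < 0$ and $\partial_\psi \Delta v'(0,\mu^*) < 0$, which is what your monotonicity argument delivers). Your exchange-of-stability computation $g'(s^*) = -2c_1$ on the bifurcated branch is also fine and standard.

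The genuine gap is that you never establish the one condition the proposition actually turns on: $c_3 < 0$, i.e.\ $\Delta v'''(0,\mu^*) < 0$. You write ``granting it'' and ``I expect the resulting $c_3$ to factor as\ldots'', but without that sign the bifurcation could just as well be subcritical, in which case the dispersion process would exhibit exactly the catastrophic jumps and hysteresis the proposition denies --- oddness and transversality alone only give you \emph{a} pitchfork, not a supercritical one. This is not a routine omission one can wave through: it requires expanding the implicit market-clearing system to third order, obtaining $w_1'(0)$, $w_1''(0)$ (via the income normalization, which yields $w_1''(0) = -\tfrac{2}{\Brx}w_1'(0)$) and an explicit $w_1'''(0)$, and then assembling
\begin{align*}
    \Delta v'''(0,\mu^*)
    =
    -\,\frac{4\Brv}{\Brx^3}\cdot\frac{(1+\lambda)^3}{\Omega^\flat(\chi)^4}\cdot\Theta,
    \qquad
    \Theta
    =
    \sigma(2\sigma - 1)
    + \Omega^\flat(\chi)\left(3 + \sigma\left((\sigma+1)\chi^2 + (4\sigma - 5)\chi + \sigma - 5\right)\right),
\end{align*}
where one must then prove $\Theta > 0$ on the critical locus --- and note that $\Theta > 0$ is \emph{not} manifest from the formula (the bracketed term can be negative for small $\chi$ and moderate $\sigma$); it holds only under the condition guaranteeing the bifurcation occurs in the admissible region, namely $\lambda^* = \tfrac{1-\chi}{(\sigma-1)+\sigma\chi} \in (0,1)$, i.e.\ either $\sigma > 2$ or $\sigma\in(1,2]$ with $\chi$ large enough. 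Your guess about the factorization is also slightly off: the positive denominator enters as $\Omega^\flat(\chi)^4$, not linearly, and the restriction to the bifurcation-admissible parameter set is essential to the sign, a subtlety your ``manifestly negative on the critical locus'' prediction misses. Until that computation is carried out, the supercriticality claim --- and hence the proposition --- remains unproven.
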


\begin{figure}[tb]
    \centering 
	\begin{subfigure}[c]{.4\hsize}
	    \centering
	    \includegraphics[width=.8\hsize]{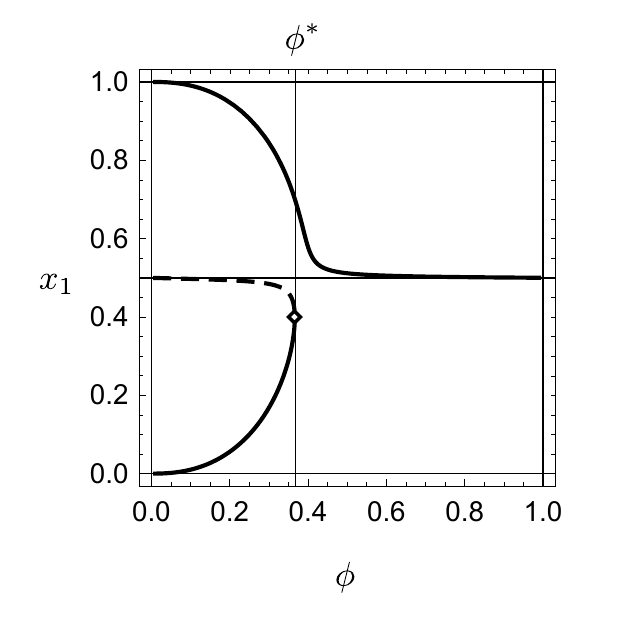}
	    \caption{$(\psi,\sigma) = (0.8,4.0)$, $\phi_{12} = \phi^{1.1} < \phi$} 
	    \label{fig:bifs-2asym_phi}
	\end{subfigure}
	\begin{subfigure}[c]{.4\hsize}
	    \centering
	    \includegraphics[width=.8\hsize]{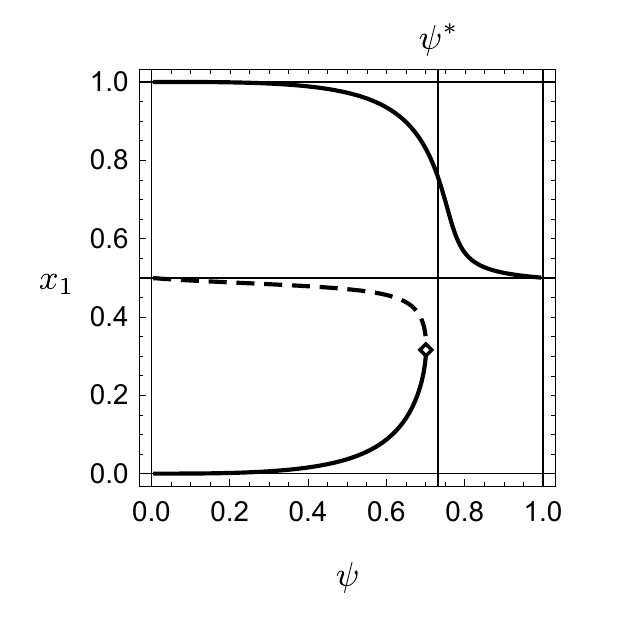}
	    \caption{$(\phi,\sigma) = (0.5,4.0)$, $\psi_{21} = \psi^{1.1} < \psi$} 
	    \label{fig:bifs-2asym_psi}
	\end{subfigure}

	\caption{Bifurcation diagrams for an asymmetric two-region economy.}
	
	\begin{figurenotes}
	 The black solid curves indicate stable spatial equilibria and the dashed curves indicate unstable ones. The diamond ($\lozenge$) in each figure indicates the limit point from which a pair of unstable and stable equilibria emerges. $\phi^*$ and $\psi^*$ are the critical values for the symmetric cases in \prettyref{fig:bifs-2}.
	\end{figurenotes}
	\label{fig:bifs-2asym}
\end{figure}

Asymmetries in the proximity matrices ($\VtD$ and $\VtG$) induce straightforward comparative advantages. 
\prettyref{fig:bifs-2asym} shows examples under which the two regions are asymmetric. 
In Figures \ref{fig:bifs-2asym_phi} and \ref{fig:bifs-2asym_psi}, we respectively assume $\VtD$ and $\VtG$ are of the form 
\begin{align}
    \VtD 
    =
    \begin{bmatrix}
        1 & \phi^{1.1} \\
        \phi & 1
    \end{bmatrix}
    \text{\quad and\quad}
    \VtG 
    =
    \begin{bmatrix}
        1 & \psi \\
        \psi^{1.1} & 1
    \end{bmatrix}. 
\end{align}
For both cases, region $1$ has a comparative advantage (in terms of the cost of living or productivity). 
The bifurcation diagrams exhibit the standard \emph{unfolding} behavior for the supercritical pitchfork bifurcation, for which the transition on the main path (the path with $x_1 > x_2$) is smooth, without any catastrophic behaviors. 
The economy is always asymmetric, and the uniform distribution emerges only within limits $\phi \to 1$ or $\psi \to 1$. 
For regional models, \cite{Berliant-Kung-RSUE2009} provides a detailed analysis of such unfolding behavior for multiple heterogeneity parameters.

\section{How the network structure of externality matrix matters\label{sec:four-region_economies}}

The two-region setting in \prettyref{sec:two-region_economy} demonstrates how our model captures the basic dispersion process in a regional economy. 
We now turn our attention to a multi-region geography to identify the roles of the externality structure $\VtG$: (i) the \emph{timing} of dispersion and (ii) the overall \emph{spatial distribution} of workers. 

We consider a symmetric geography in which $n = 4$ regions are equidistantly placed over a circular transportation network (see \prettyref{fig:4-reg}), as in \cite{matsuyama2017geographical}, Example 2, which can be seen as a simplified version of the $12$-location race-track economy of \cite{Krugman-EER1993}. 
This is the minimal symmetric geographical environment in which different regions have different neighbors. 
By postulating that the transportation of goods is only possible over the circular network, we can assume that the geographical proximity matrix is given by
\begin{align}
	\VtD 
	=
	\begin{bmatrix}
		1 & \phi & \phi^2 & \phi\\
		 & 1 & \phi & \phi^2\\ 
		 &  & 1 & \phi\\
		\SYM &  & 1
	\end{bmatrix},
	\label{eq:D-racetrack}
\end{align}
where $\phi \in (0,1)$ is the freeness of trade between two neighboring regions in the economy. 
Under this setting, all regions have the same level of geographical proximity to the other regions.

\begin{figure}[tb]
	\centering
	\includegraphics{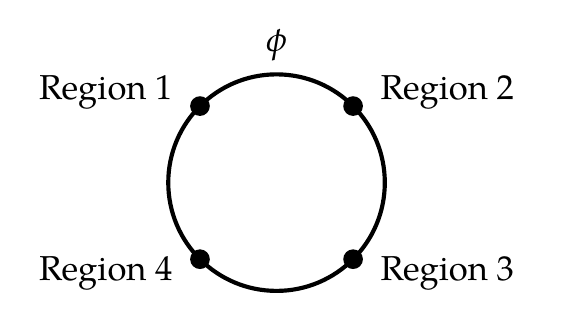}
	\caption{Four-region symmetric geography.}
	\begin{figurenotes}
	The black circle represents the transportation network, whereas the black markers represent the regions. All regions have the same geographical proximity level to the other regions. 
	\end{figurenotes}
	\label{fig:4-reg}
\end{figure}

\begin{figure}[tb]
	\centering
	\begin{subfigure}[b]{.32\hsize}
		\centering 
		\includegraphics{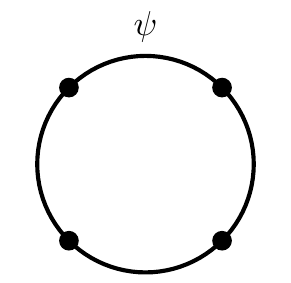}
		\caption{Simple}
		\label{fig:g1}
	\end{subfigure}
	\begin{subfigure}[b]{.32\hsize}
		\centering
		\includegraphics{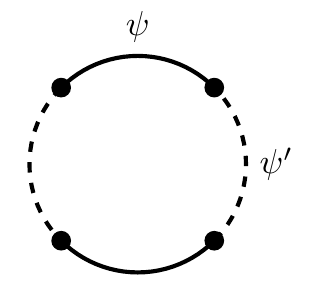}
		\caption{Block economy ($\psi' \le \psi$)}
		\label{fig:g3}
	\end{subfigure}
	\begin{subfigure}[b]{.32\hsize}
		\centering 
		\includegraphics{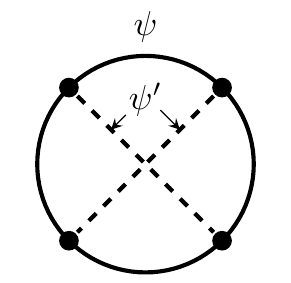}
		\caption{Bypass ($\psi' \ge \psi^2$)}
		\label{fig:g2}
	\end{subfigure}
	\caption{Symmetric externality structures in the four-region geography.}
	\begin{figurenotes}
	(a) The baseline case where production externalities are only governed by geographical proximity. 
	(b) A case where the externality matrix has a hierarchical structure.
	(c) A case where the pairs of regions at the antipodal locations on the circle are strongly tied. 
	\end{figurenotes}
	\label{fig:four-region-networks}
\end{figure}

Four is the minimal number of regions that allows nontrivial structures of inter-regional externalities between locations while preserving symmetry. 
Taking $\VtD$ in \prettyref{eq:D-racetrack} as given, we consider three stylized settings for $\VtG$ to identify the basic roles of network structure (\prettyref{fig:four-region-networks}). 
\prettyref{fig:g1} shows the baseline case, in which the magnitudes of production externalities are governed by geographical proximity. 
\prettyref{fig:g3} represents the case where $\VtG$ has a block structure and 
\prettyref{fig:g2} the case where the antipodal locations are strongly tied in $\VtG$. 
 
For all three cases, $\VtG$ does not exhibit any ex-ante comparative advantage of the regions because it preserves the symmetry of the four-region circular economy. 
Therefore, combined with geographical symmetry (\prettyref{fig:4-reg}), uniform distribution $\BrVtx = (\Brx,\Brx,\Brx,\Brx)$ with $\Brx \Is \frac{1}{4}$ is always a spatial equilibrium. 

\begin{remark}
A realistic setup is a star-shaped network, where one of the regions (e.g., the host of the capital city) is the hub for the production externalities in the economy. 
For example, we may assume $\psi_{1j} = \psi_{j1} = \psi$ and $\psi_{jk} = \psi_{kj} = \psi' < \psi$ ($k,j\ne 1$) so that region $1$ has an exogenous advantage due to the higher productivity $a_1 > a_j$ ($j \ne 1$). 
This leads to a straightforward consequence:  
the spatial pattern becomes mono-centric, with region $1$ as the central location. 
We instead focus on symmetric networks in \prettyref{fig:four-region-networks} to focus on endogenous forces. 
\end{remark}
\begin{remark}
\label{remark:equi}
The equidistant geographical network considered by, for example, \cite{Gaspar-et-al-ET2018,Gaspar-et-al-IJET2019,Gaspar-et-al-RSUE2021} and \cite{Aizawa-etal-IJBC2020}, in which all regions are mutually connected (i.e., $\phi_{ii} = 1$ and $\phi_{ij} = 1$ for all $i \ne j$), has the advantage of being  a very tractable spatial setting, without necessarily being less realistic from an empirical view point when compared to other stylized geometries. 
However, in this setting, agents' incentives are determined by a region itself and the rest of the economy. Geographical asymmetries are non-existent and a thus more regions just adds to higher market access variability. As a result, multi-centric patterns do not become stable in most cases \citep{Aizawa-etal-IJBC2020} and the spatial distribution tends to be mono-centric, in contrast to circular geography where stable polycentric patterns can emerge \citep{AMOT-DP2019}.  
\end{remark}

\subsection{Agglomeration in the multi-region economy}

Because of the many possible spatial configurations in a four-region economy, we consider the process of dispersion in a ``reverse-reproduced'' way. 
We start from the transportation costs levels at which $\BrVtx$ is stable (i.e., large $\phi$ or $\psi$) and consider the monotonic \emph{increase} of transportation costs (i.e., decrease of $\phi$ or $\psi$).\footnote{In  economic geography  it is customary to observe the qualitative behavior of the model as transportation costs decrease so as to capture the impact of increasing economic integration or globalization. However, there is also ample empirical evidence of rising costs at different geographical scales.  \cite{RePEc:cap:wpaper:012021} for instance, study the emergence of satellite cities along a narrow corridor as a consequence of an increase in transportation costs that deems the existence of a single monocentric city economically unsustainable.}
If $\BrVtx$ becomes unstable, then some nontrivial spatial configurations emerge, providing insights into stable spatial patterns at intermediate values of transportation costs. 

In our multi-region setting, the stability of $\BrVtx = (\Brx, \Brx, \Brx, \Brx)$ is governed by the \emph{largest eigenvalue} of the payoff elasticity matrix $\VtV = \frac{\Brx}{\Brv} \nabla\Vtv(\BrVtx)$.
To explain this, assume that $\BrVtx$ is perturbed to become $\Vtx' = \BrVtx + \Vtz$ with small $\Vtz = (z_i)_{i\inN}$, where we require $\sum_{i\inN} z_i = 0$ because the total population of the economy is constant. 
In other words, $\Vtz$ is a migration pattern. 
\begin{figure}[tb]
	\centering
	\begin{subfigure}[b]{.32\hsize}
		\centering 
		\includegraphics[width=3cm]{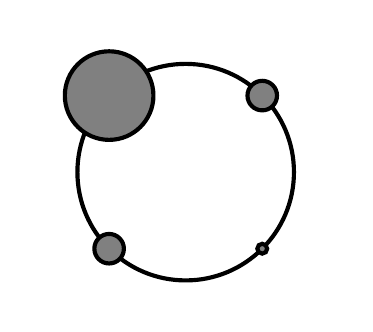}
		\caption{Mono-centric pattern}
		\label{fig:z1}
	\end{subfigure}
	\begin{subfigure}[b]{.32\hsize}
		\centering 
		\includegraphics[width=3cm]{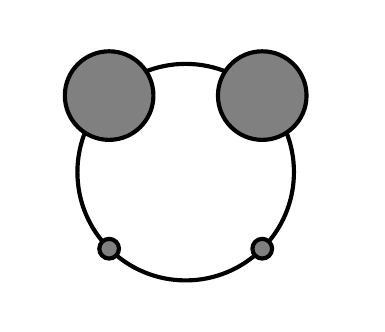}
		\caption{North--South pattern}
		\label{fig:z3}
	\end{subfigure}
	\begin{subfigure}[b]{.32\hsize}
		\centering
		\includegraphics[width=3cm]{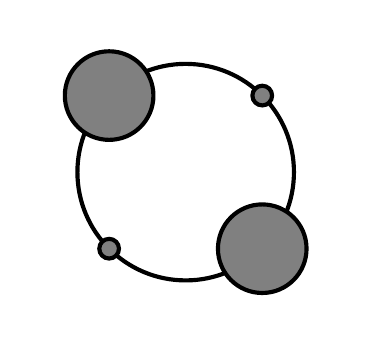}
		\caption{Duo-centric pattern}
		\label{fig:z2}
	\end{subfigure}
	\caption{Schematics of the possible endogenous outcomes in a symmetric four-region economy.}
	\begin{figurenotes}
	The black circle indicates the transportation network. 
	The gray disk represent the population size of each region. 
	We do not show rotationally symmetric patterns essentially equivalent to the three patterns given above (e.g., the ``East--West'' pattern).  
	\end{figurenotes}
	\label{fig:four-region-patterns}
\end{figure}
The average gain (in terms of relative payoff) induced by such a deviation may be evaluated by
\begin{align}
    \Bromega(\Vtz)
    \Is 
    \frac{\Vtz^\top \VtV \Vtz\ }{\|\Vtz\|^2}, 
    \label{eq:baromega}
\end{align}
which can be seen as the (normalized) elasticity of average payoff $\sum_{i\inN} v_i(\Vtx) x_i$, 
\begin{align}
    \frac{\Brx}{\Brv}
    \left(\sum_{i\inN} v_i(\Vtx') x_i' - \sum_{i\inN} v_i(\BrVtx) \Brx \right)
    \approx 
    \Vtz^\top \VtV \Vtz.  
\end{align}
If $\Bromega(\Vtz) < 0$ for \textit{any} migration pattern $\Vtz$, then any form of migration is strictly non-profitable for migrants and $\BrVtx$ is stable. 

Average gain $\Bromega$ is maximized by choosing $\Vtz = \Vtz^*$, where $\Vtz^*$ is the eigenvector of $\VtV$ associated with its \emph{largest eigenvalue}, $\omega^* \Is \max_{k}\{\omega_k\}$, 
where $\{\omega_k\}$ are the eigenvalues of $\VtV$. 
That is, we have 
\begin{align}
    \max_{\Vtz} \Bromega(\Vtz)
    = 
    \Bromega(\Vtz^*) = \omega^*, 
\end{align}
If $\omega^* < 0$, then $\Bromega < 0$ for any $\Vtz$. 
When $\omega^*$ switches from negative to positive, associated migration pattern $\Vtz^*$ becomes profitable for workers and the spatial pattern of form $\Vtx' = \BrVtx + \epsilon\Vtz^*$ ($\epsilon > 0$) emerges. 
We thus have to determine $\omega^* \Is \max_k\{\omega_k\}$ and its associated eigenvector to identify what the spatial pattern that will emerge from $\BrVtx$. 

We show that, when one starts from $\BrVtx$ in our setting, three qualitatively different spatial distributions of workers can emerge: 
\begin{itemize}
	\item A mono-centric distribution of the form $(\Brx, \Brx + \epsilon, \Brx, \Brx - \epsilon)$, in which there are one big, one small, and two medium-sized cities (\prettyref{fig:z1}). 
	\item A North--South distribution of form $(\Brx + \epsilon,\Brx + \epsilon, \Brx - \epsilon, \Brx - \epsilon)$, in which two contiguous regions attract the majority of workers from the rest of the economy (\prettyref{fig:z3}). 
	\item A duo-centric distribution of form $(\Brx + \epsilon,\Brx - \epsilon, \Brx + \epsilon, \Brx - \epsilon)$, in which two geographically remote regions are vying with each other (\prettyref{fig:z2}). 
\end{itemize}
These spatial distributions respectively correspond to $\Vtz^* = (0,1,0,-1)$, $\Vtz^* = (1,1,-1,-1)$, and $\Vtz^* = (1,-1,1,-1)$, all of which are eigenvectors of $\VtV$. 
By contrast, when $n = 2$, the only possible migration pattern is $\Vtz^* = (1, -1)$ and $\omega$ is the only relevant eigenvalue of $\VtV$. 

\subsection{The baseline case}
\label{sec:4-baseline}

This section considers the case in \prettyref{fig:g1}, where the externality matrix has a similar structure to the geographical proximity matrix $\VtD$. 
This setup is related to the geographically decaying spillovers considered in urban economics models \citep[e.g.,][]{Fujita-Ogawa-RSUE1982}. 
The externality matrix is given as follows:
\begin{align}
	\VtG
	=
	\begin{bmatrix}
		1 & \psi & \psi^2 & \psi\\
		 & 1 & \psi & \psi^2\\ 
		 & & 1 & \psi\\
		\SYM   & & 1
	\end{bmatrix},  
	\label{eq:G-baseline}
\end{align}
where $\psi \in (0,1)$. 
In this case, we have $\Vtz^* = (1, 0, -1, 0)$, which corresponds to the emergence of the mono-centric pattern (\prettyref{fig:z1}). 
In fact, analogous to \prettyref{eq:omega_2reg}, we can represent $\omega^*$ as follows:  
\begin{align}
    \omega^*
		= 
		\dfrac{\Brx}{\Brv}
			\PDF{(v_1 - v_3)}{x_1} (\BrVtx)
	\label{eq:omega1}, 	
\end{align}
which indicates that moving from region $3$ to $1$ is profitable for workers if $\omega^*$ is positive. 

\begin{figure}
    \centering
    \includegraphics{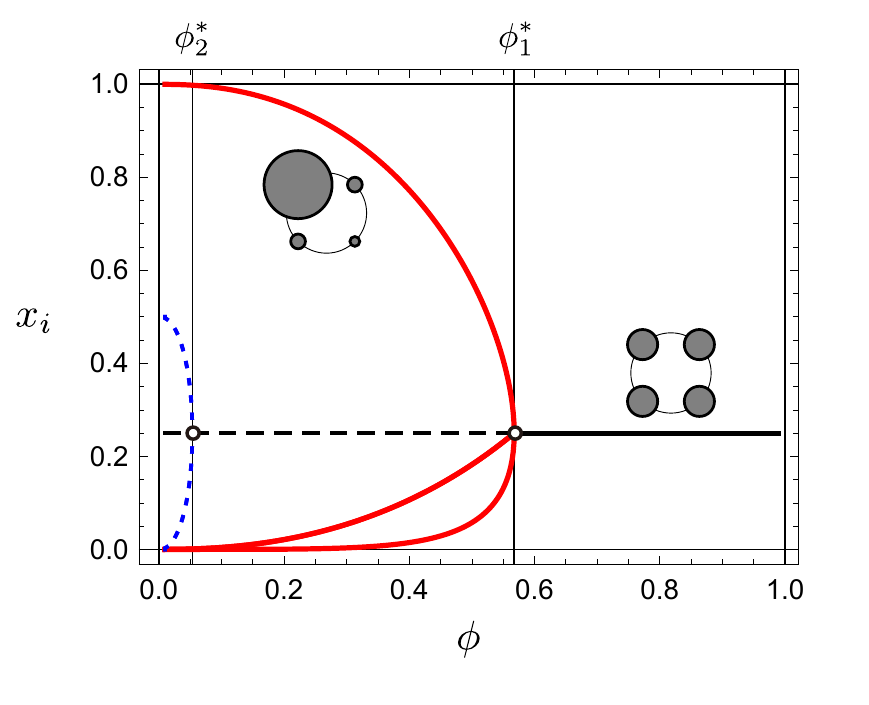}
    \caption{Bifurcation diagram for the four-region circular economy with externality matrix in \prettyref{eq:G-baseline}.}
    \begin{figurenotes}
    $\psi = 0.7$ and $\sigma = 4$. 
    The solid curves indicate stable equilibria and the dashed curves indicate unstable ones. 
    The red curve corresponds to a mono-centric pattern (stable) and the blue curve to a duo-centric one (unstable). The schematics by the solid curves show representative snapshots of the associated stable spatial patterns. 
    \end{figurenotes}
    \label{fig:4-baseline}
\end{figure}
The next proposition characterizes the stability of $\BrVtx$ and the endogenous agglomeration from it. 
\begin{proposition}
\label{prop:4-baseline}
Assume $\VtD$ in \prettyref{eq:D-racetrack} and $\VtG$ in \prettyref{eq:G-baseline}. 
Then, $\omega^* = \Omega(\chi_1,\lambda_1)$ and $\Vtz^* = (1, 0, -1, 0)$, where $\chi_1 = \frac{1 - \phi}{1 + \phi}$ and $\lambda_1 = \frac{1 - \psi}{1 + \psi}$. 
Uniform distribution $\BrVtx$ is linearly stable if and only if $\omega^* < 0$. 
When $\BrVtx$ becomes unstable, then a mono-centric pattern of form $\BrVtx + \epsilon \Vtz^* = (\Brx + \epsilon, \Brx, \Brx - \epsilon, \Brx)$ emerges ($\epsilon > 0$). 
\end{proposition}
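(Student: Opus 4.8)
The plan is to exploit that, in the baseline geography, both $\VtD$ in \prettyref{eq:D-racetrack} and $\VtG$ in \prettyref{eq:G-baseline} are \emph{symmetric circulant} matrices for the $4$-cycle, hence commute and share the symmetry-adapted eigenbasis
\[
\Vtz_0 = (1,1,1,1),\quad \Vtz_{\mathrm{mono}} = (1,0,-1,0),\quad (0,1,0,-1),\quad \Vtz_{\mathrm{duo}} = (1,-1,1,-1).
\]
First I would argue that the payoff-elasticity matrix $\VtV \Is \frac{\Brx}{\Brv}\nabla\Vtv(\BrVtx)$ inherits this structure: cyclic rotation of the regions is a symmetry of the whole model (it leaves $\VtD$, $\VtG$, and therefore the equilibrium map $\Vtv(\cdot)$ invariant), so $\nabla\Vtv(\BrVtx)$ commutes with the associated permutation matrix and is itself symmetric circulant. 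Consequently $\VtV$ is diagonalized by the vectors above; the mode $\Vtz_0$ is irrelevant since it violates $\sum_{i\inN} z_i = 0$, and it remains only to compute the eigenvalues on the admissible modes.

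The key reduction is that, on any common eigenvector of $\VtD$ and $\VtG$, the linearized equilibrium conditions \prettyref{eq:short-run-equilibrium}--\prettyref{eq:normalization} see the two proximity matrices only through their scalar eigenvalues, so the computation collapses to exactly the scalar calculation carried out for $n=2$ in \prettyref{sec:two-region_economy}. Concretely, I would linearize $\Vta=\VtG\Vtx$, the implicit wage system, and the price index along a mode $\Vtz_k$, note that $\VtG\Vtz_k = g_k\Vtz_k$ and that $\VtD$ enters only through its eigenvalue $d_k$, and read off the eigenvalue of $\VtV$ as $\omega_k = \Omega(\chi_k,\lambda_k)$ with $\chi_k \Is d_k/d_0$ and $\lambda_k \Is g_k/g_0$, where $d_0,g_0$ are the eigenvalues on $\Vtz_0$. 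For $\Vtz_{\mathrm{mono}}$ this yields $d_{\mathrm{mono}}/d_0 = (1-\phi^2)/(1+\phi)^2 = (1-\phi)/(1+\phi) = \chi_1$ and likewise $\lambda_1 = (1-\psi)/(1+\psi)$, establishing $\omega^* = \Omega(\chi_1,\lambda_1)$ on this mode.

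Next I would verify that $\Vtz_{\mathrm{mono}}$ attains the \emph{largest} eigenvalue. The remaining admissible mode is $\Vtz_{\mathrm{duo}}$, whose eigenvalue ratios evaluate to $\chi_1^2$ and $\lambda_1^2$, so $\omega_{\mathrm{duo}} = \Omega(\chi_1^2,\lambda_1^2)$. A direct computation gives $\partial_s\Omega = (2\sigma-1)(1+t)/(\sigma+(\sigma-1)s)^2 > 0$ and $\partial_t\Omega = ((\sigma-1)+\sigma s)/(\sigma+(\sigma-1)s) > 0$, so $\Omega$ is strictly increasing in both arguments; since $\chi_1 > \chi_1^2$ and $\lambda_1 > \lambda_1^2$ on $(0,1)$, we obtain $\omega^* = \Omega(\chi_1,\lambda_1) > \omega_{\mathrm{duo}}$. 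Combined with the variational characterization $\max_{\Vtz}\Bromega(\Vtz) = \omega^*$ from \prettyref{eq:baromega}, this gives the stability dichotomy: $\BrVtx$ is linearly stable precisely when $\omega^* < 0$.

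The main obstacle is the pattern-selection claim. Because $\VtD$ and $\VtG$ are fully circulant, the mono-centric direction $(1,0,-1,0)$ and the North--South direction $(1,1,-1,-1)$ lie in the \emph{same} two-dimensional eigenspace and share the critical eigenvalue $\omega^*$, so the linear analysis alone cannot single out the mono-centric configuration. Establishing which pattern actually emerges requires the nonlinear ($D_4$-equivariant) bifurcation analysis underlying \prettyref{prop:pitchifork}: I would pass to the centre manifold, compute the cubic normal-form coefficients, and show that the mono-centric branch $\BrVtx + \epsilon\Vtz^*$ is the one that bifurcates supercritically and stably while the North--South branch is unstable. This selection step, rather than the eigenvalue bookkeeping, is where the real difficulty lies.
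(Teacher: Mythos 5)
Your eigenvalue analysis is essentially the paper's own proof. The paper likewise diagonalizes $\VtV$ in the four-point DFT basis (see \prettyref{eq:Vx} and \prettyref{fact:eigenvalues}), identifies $\chi_1 = \frac{1-\phi}{1+\phi}$ and $\lambda_1 = \frac{1-\psi}{1+\psi}$ as the eigenvalues of the row-normalized $\BrVtD$ and $\BrVtG$ on $(1,0,-1,0)$ (the baseline case being the specialization $\psi' = \psi^2$ of the general bypass formulas), reads off $\omega_k = \Omega(\chi_k,\lambda_k)$, and gets the stability dichotomy from negative definiteness of the symmetric $\VtV$ on $T\ClX$. One place where you do strictly more: the paper simply asserts $\omega_1 = \max\{\omega_1,\omega_2\}$ for this case, whereas you prove it via the monotonicity $\partial_s\Omega = (2\sigma-1)(1+t)/(\sigma+(\sigma-1)s)^2 > 0$ and $\partial_t\Omega = ((\sigma-1)+\sigma s)/(\sigma+(\sigma-1)s) > 0$ combined with $\chi_2 = \chi_1^2 < \chi_1$ and $\lambda_2 = \lambda_1^2 < \lambda_1$; both derivative computations check out.

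Your closing observation about pattern selection identifies a genuine soft spot --- but it is a soft spot in the paper, not a defect peculiar to your proposal. The paper's proof notes only that $\Vtz_1 = (1,0,-1,0)$ and $\Vtz_3 = (0,1,0,-1)$ share an eigenvalue and waves this away as rotational symmetry; it never acknowledges that the same two-dimensional eigenspace contains $(1,1,-1,-1) = \Vtz_1 + \Vtz_3$, so the variational criterion \prettyref{eq:baromega} is maximized equally well by the North--South direction, and the linear analysis alone cannot certify that the mono-centric branch is the one that emerges stably. The paper supports this final claim only by the numerical bifurcation diagram (\prettyref{fig:4-baseline}), and in \prettyref{sec:4-super-regions} it even invokes, without proof, the heuristic that at the degenerate point the realized migration direction is the combination $\frac{1}{2}(\Vtz_1 + \Vtz_3)$ yielding the mono-centric pattern. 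Your proposed $D_4$-equivariant centre-manifold/normal-form computation would be the rigorous way to settle the selection, and it goes beyond anything in the paper; since you only sketch it, your argument is exactly as complete as the published one on this point, and more candid about where the linear bookkeeping stops.
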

As already seen, $\omega^*$ is represented by $\Omega$ in \prettyref{eq:Omega}. 

\prettyref{fig:4-baseline} shows the bifurcation diagram on the $\phi$-axis to numerically confirm \prettyref{prop:4-general}. 
The critical level of $\phi$ at which $\BrVtx$ becomes stable ($\omega^* = 0$) is indicated by $\phi_1^*$. 
When $\phi$ is at its lower extreme ($\phi \rightarrow 0$), workers concentrate in a single region because this way they can avoid the burden of costly transportation altogether \citep{gaspar2020new}; we recall that there are no immobile workers living in any region in the present model. 
The spatial distribution is full agglomeration (e.g., $\Vtx \approx (1,0,0,0)$ as shown in \prettyref{prop:mono-centric}).
As $\phi$ increases, the relative rise of the dispersion force induces a crowding-out from the populated region. 
The spatial pattern becomes, for example, $\Vtx = (x,x',x'',x')$ with $x > x' > x''$, which is still a mono-centric pattern. 
As $\phi$ increases, the spatial pattern gradually flattens and, at threshold $\phi_1^{*}$, the monocentric configuration connects with the uniform distribution. 
If we start from $\BrVtx$ and gradually \emph{decrease} $\phi$ to determine the reverse-reproduced dispersion process, at $\phi_1^*$ the spatial pattern must deviate in the direction of the ``formation'' of a mono-centric configuration (\prettyref{fig:z1}). 
On the left-hand side of the figure, the two city pattern also emerges (indicated by the blue dashed curve) from $\BrVtx$ at $\phi = \phi_2^*$. 
However, this configuration is always unstable in the presented example.

\subsection{The timing of dispersion}
\label{sec:4-equidistant}

We next illustrate that the timing of the agglomeration varies with the structure of $\VtG$. 
If the level of inter-regional production externalities is the same across all region pairs, we can assume $\VtG$ takes the following form, which is a special case of \prettyref{fig:g2} with $\psi' = \psi$:
\begin{align}
    \VtG  
	=
	\begin{bmatrix}
		1 & \psi & \psi & \psi\\
		 & 1 & \psi & \psi\\ 
		 &  & 1 & \psi\\
		\multicolumn{2}{c}{\text{\emph{Sym.}}} & & 1
	\end{bmatrix}. 
	\label{eq:G-equidistant}
\end{align}
This setup is akin to the equidistant geographical networks (see \prettyref{remark:equi}). 
This economy can be thought of as an ``almost connected economy,'' since the payoff in a region is invariant under the permutation of mobile workers in the other regions. That is, for a region's productivity, the exact distribution of workers over the other regions does not matter. 

We obtain the following result. 
\begin{proposition}
\label{prop:4-equidistant}
Assume $\VtD$ in \prettyref{eq:D-racetrack} and $\VtG$ in \prettyref{eq:G-equidistant}. 
Then, $\omega^* = \Omega(\chi_1,\lambda_1')$ and $\Vtz^* = (1, 0, -1, 0)$, where $\chi_1 = \frac{1 - \phi}{1 + \phi}$ and $\lambda_1' \Is \frac{1 - \psi}{1 + 3\psi}$. 
Uniform distribution $\BrVtx$ is linearly stable if and only if $\omega^* < 0$. 
When $\BrVtx$ becomes unstable, a mono-centric pattern of form $\BrVtx + \epsilon \Vtz^* = (\Brx + \epsilon, \Brx, \Brx - \epsilon, \Brx)$ emerges ($\epsilon > 0$). 
\end{proposition}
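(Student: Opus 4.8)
The plan is to mirror the proof of \prettyref{prop:4-baseline}, since the geography $\VtD$ is identical and only the externality matrix changes. First I would note that the equidistant externality matrix in \prettyref{eq:G-equidistant} can be written as $\VtG = (1-\psi)\VtI + \psi\VtJ$, where $\VtJ$ is the $4\times 4$ all-ones matrix, and that both this $\VtG$ and the racetrack $\VtD$ in \prettyref{eq:D-racetrack} are symmetric circulant. Because the entire equilibrium map is equivariant under the cyclic relabelling $P$ of regions, and $P\VtD P^\top = \VtD$, $P\VtG P^\top = \VtG$, the payoff--elasticity matrix $\VtV = \frac{\Brx}{\Brv}\nabla\Vtv(\BrVtx)$ commutes with $P$ at the $P$-invariant point $\BrVtx$ and is therefore itself circulant. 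Hence $\VtV$ is diagonalized by the $\mathbb{Z}_4$ Fourier basis; discarding the constant mode $(1,1,1,1)$, the zero-sum eigenvectors are the degenerate frequency-one pair $(1,0,-1,0),(0,1,0,-1)$ and the frequency-two vector $(1,-1,1,-1)$.

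Next I would import the scalar reduction established for the two-region model and reused in \prettyref{prop:4-baseline}: along each Fourier mode $k$ the eigenvalue of $\VtV$ equals $\Omega(\chi_k,\lambda_k)$, with $\Omega$ as in \prettyref{eq:Omega}, where $\chi_k = d_k/d_0$ and $\lambda_k = g_k/g_0$ are the ratios of the mode-$k$ eigenvalues of $\VtD$ and $\VtG$ to their Perron (constant-mode) eigenvalues (this is the content of the decomposition $\omega = \omega_a\alpha_x + \omega_w\beta_x$ with $\omega_a=\chi_k$, $\alpha_x=\lambda_k$, and $\beta_x$ from \prettyref{eq:beta_x}). I would then just read off the eigenvalues. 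From the racetrack $\VtD$ one has $d_0 = (1+\phi)^2$, $d_1 = 1-\phi^2$, $d_2 = (1-\phi)^2$, giving $\chi_1 = \frac{1-\phi}{1+\phi}$ and $\chi_2 = \chi_1^2$. Because $\VtJ$ annihilates every zero-sum vector, the equidistant $\VtG$ has $g_0 = 1+3\psi$ and $g_1 = g_2 = 1-\psi$, so $\lambda_1 = \lambda_2 = \lambda_1' \Is \frac{1-\psi}{1+3\psi}$. Hence $\omega_1 = \Omega(\chi_1,\lambda_1')$ (frequency one) and $\omega_2 = \Omega(\chi_1^2,\lambda_1')$ (frequency two).

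The key step is then to show $\omega^* = \max\{\omega_1,\omega_2\} = \omega_1$. Here the two modes share the \emph{same} second argument $\lambda_1'$ --- the distinctive feature of the ``flat'' externality structure --- so the comparison reduces to the dependence of $\Omega$ on its first argument. A one-line computation gives $\partial_s\Omega(s,t) = \frac{(2\sigma-1)(1+t)}{(\sigma+(\sigma-1)s)^2} > 0$ for $\sigma>1$ and $t>0$, so $\Omega$ is strictly increasing in $s$; since $\chi_1 \in (0,1)$ forces $\chi_1^2 < \chi_1$, we obtain $\omega_1 > \omega_2$. Therefore $\omega^* = \omega_1 = \Omega(\chi_1,\lambda_1')$, the critical eigenspace is the frequency-one plane, and $\BrVtx$ is linearly stable if and only if $\omega^* < 0$, exactly as in \prettyref{prop:n=2}.

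The delicate part is the final assertion that the pattern which \emph{emerges} is the mono-centric $\Vtz^* = (1,0,-1,0)$ (\prettyref{fig:z1}). Because the critical eigenspace is two-dimensional and degenerate, it also contains the North--South direction $(1,1,-1,-1)$ (\prettyref{fig:z3}), and the linear analysis cannot discriminate between them. To select the realized branch rigorously one must pass to a weakly nonlinear, $D_4$-equivariant bifurcation analysis: reduce the dynamics to the critical plane, use its symmetry to identify the two axial branches (the mono-centric and North--South types), and read off from the signs of the cubic coefficients which branch bifurcates supercritically and stably. I expect this --- rather than the eigenvalue bookkeeping, which is routine given the machinery already in place for \prettyref{prop:4-baseline} --- to be the main obstacle; the expectation, consistent with the baseline case and confirmed numerically in the corresponding bifurcation diagram, is that the mono-centric branch is the stable one.
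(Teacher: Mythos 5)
Your proposal is correct and, for everything the paper actually proves, follows the same route as the appendix: there the authors diagonalize $\VtV$ by the four-point DFT matrix, treat \prettyref{eq:G-equidistant} as the special case $\psi'=\psi$ of the bypass family \prettyref{eq:G-bypass}, read off $\chi_1=\frac{1-\phi}{1+\phi}$, $\chi_2=\chi_1^2$, and $\lambda_1=\lambda_2=\frac{1-\psi}{1+3\psi}$, and invoke \prettyref{fact:eigenvalues} to get $\omega_k=\Omega(\chi_k,\lambda_k)$. Your eigenvalue bookkeeping matches this exactly, and your explicit computation $\partial_s\Omega(s,t)=\frac{(2\sigma-1)(1+t)}{\left(\sigma+(\sigma-1)s\right)^2}>0$ actually fills in a detail the paper leaves implicit: since both modes share the second argument $\lambda_1'$, monotonicity in the first argument plus $\chi_1^2<\chi_1$ gives $\omega_1>\omega_2$, whereas the paper simply asserts $\omega_1=\max\{\omega_1,\omega_2\}$ for the two special cases.

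Your flagged ``delicate part'' is a sharp and legitimate observation, but you should know the paper does not resolve it either. Because $d_1=d_3$ and $g_1=g_3$, the critical eigenspace is the two-dimensional frequency-one plane, and it contains not only $\Vtz_1=(1,0,-1,0)$ and its rotation $\Vtz_3=(0,1,0,-1)$ but also their sum $(1,1,-1,-1)$, the North--South direction; the linearization alone cannot select among these. The paper's proof merely notes that $\Vtz_1$ and $\Vtz_3$ share an eigenvalue ``because the economy is symmetric under rotation'' and presents $\Vtz_1$ as the representative emergent pattern, with the stability of the mono-centric branch supported numerically (cf.\ \prettyref{fig:4-baseline}); the rigorous supercritical-pitchfork analysis (\prettyref{prop:pitchifork}) is carried out only for $n=2$. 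So the $D_4$-equivariant branch selection you propose is not performed in the paper, and judged against the paper's own standard your proposal is complete --- indeed more candid about exactly where the linear analysis stops.
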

\noindent 
We have $\Vtz^* = (1, 0, -1, 0)$, which is the same as \prettyref{prop:4-baseline}. 
The only difference is that we have
\begin{align}
    \lambda_1' = \frac{1 - \psi}{1 + 3\psi}
\end{align}
instead of $\lambda_1$. 
Similar to $\lambda$ in the two-region case, $\lambda_1'$ is the eigenvalue of \prettyref{eq:G-equidistant} associated with $\Vtz^*$ and represents the \emph{sensitivity} of the regional productivity when $\BrVtx$ is perturbed by migration. 
We have 
\begin{align}
    \lambda_1' 
    = \frac{1 - \psi}{1 + 3\psi}
    < \frac{1 - \psi}{1 + \psi}
    = \lambda_1 
\end{align}
where $\lambda_1$ is the corresponding sensitivity in the baseline case (\prettyref{sec:4-baseline}). 
That is, productivity gains due to migration are higher for the network in \prettyref{eq:G-baseline} than for that in \prettyref{eq:G-equidistant}; 
the regions in the latter network are more ``connected'' than those in the former and thus migration (or forming agglomeration) is less profitable. 

As we observe $\Omega(s,t)$ is increasing in $t$, we have $\Omega(\chi_1,\lambda_1) > \Omega(\chi_1,\lambda_1')$; 
this means that $\omega^*$ in \prettyref{prop:4-baseline} is always greater than that in \prettyref{prop:4-equidistant}. 
As a result, $\BrVtx$ is stable for a smaller range of $(\phi,\psi)$ in the former than in the latter. 
\prettyref{fig:stab-unif-4_equi} illustrates this observation. 
The solid and dashed curves respectively indicate critical pairs $(\phi^*,\psi^*)$ for the externality matrices \prettyref{eq:G-equidistant} and \prettyref{eq:G-baseline}. 
For each case, $\BrVtx$ is stable in the region above the threshold curve. 
The solid curve is always below the dashed curve, so that $\BrVtx$ is stable for a broader range of $\phi$ and $\psi$ when we assume $\VtG$ in \prettyref{eq:G-equidistant}. 
This example shows that, in a more connected economy, there is less incentive to form an agglomeration. 

If production externalities are governed solely by, for example, internet communications, then whether workers are in the same region or not is less important than the baseline case. 
The externality matrix in \prettyref{eq:G-equidistant} represents this situation. 
Therefore, as \cite{cairncross1997death} argued, the advancement of information technology may indeed encourage dispersion.

\begin{figure}[tb]
	\centering
	    \includegraphics{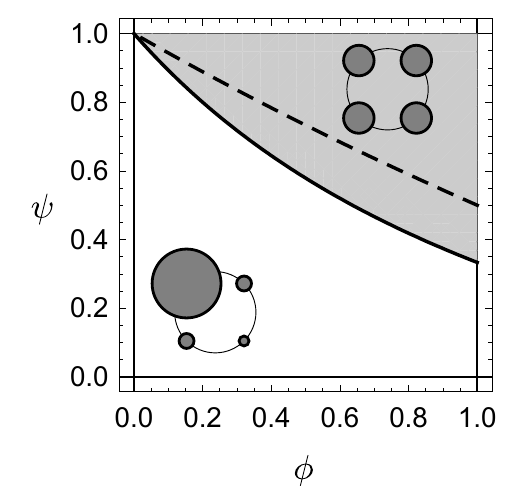}
	\caption{Stability of $\BrVtx$ in four-region economies with externality matrices as in \prettyref{eq:G-baseline} and \prettyref{eq:G-equidistant} ($\sigma = 4$).}
	\begin{figurenotes}
	The black solid curve shows critical pairs $(\phi^*,\psi^*)$ at which $\BrVtx$ becomes unstable for the case  $\VtG = \VtG_{g}$ with $\psi' = \psi$. The dashed curve corresponds to the baseline case \prettyref{eq:G-baseline}. Uniform distribution $\BrVtx$ is stable for the regions above these curves, where the gray regions correspond to $\VtG = \VtG_e$. The solid curve stays below the dashed curve, that is, $\BrVtx$ is stable for a wider range of $(\phi,\psi)$ when the economy is more connected. 
	\end{figurenotes}
	\label{fig:stab-unif-4_equi}
\end{figure}

\subsection{The form of dispersion: Super-regions}
\label{sec:4-super-regions}

Assume the pairs of regions $\{1,2\}$ and $\{3,4\}$ are ``super-regions,'' 
in that production externalities between the regions in the same super-region are stronger than those between two regions in different super-regions (\prettyref{fig:g3}). 
This structure can be represented by the following externality matrix: 
\begin{align}
\VtG
	= 
	\begin{bmatrix}
	1 & \psi' \\
	\psi' & 1
	\end{bmatrix}
	\otimes
	\begin{bmatrix}
	1 & \psi \\
	\psi & 1
	\end{bmatrix}
	=
	\begin{bmatrix}
		\ 1 & \psi & \psi'\psi & \psi'\\
		 & 1 & \psi' & \psi'\psi\\ 
		 & & 1 & \psi \\
		\SYM   & & 1 
	\end{bmatrix}, 
	\quad  
	\label{eq:G-block}
\end{align} 
where we assume $\psi' < \psi$ without loss of generality.

The following proposition shows that the stability of $\BrVtx$ is determined by the magnitude of the externalities between super-regions, $\psi'$, and the bifurcation from $\BrVtx$ leads to the formation of a North--South pattern (\prettyref{fig:z3}). 
\begin{proposition}
\label{prop:4-super-regions}
Assume $\VtD$ in \prettyref{eq:D-racetrack} and $\VtG$ in \prettyref{eq:G-block}. 
Then, $\omega^* = \Omega(\chi,\lambda')$ and $\Vtz^* = (1, 1, -1, -1)$, where $\chi = \frac{1 - \phi}{1 + \phi}$ and $\lambda' \Is \frac{1 - \psi'}{1 + \psi'}$. 
Uniform distribution $\BrVtx$ is linearly stable if and only if $\omega^* < 0$. 
When $\BrVtx$ becomes unstable, then a North--South pattern of form $\BrVtx + \epsilon \Vtz^* = (\Brx + \epsilon, \Brx + \epsilon, \Brx - \epsilon, \Brx - \epsilon)$ emerges ($\epsilon > 0$). 
\end{proposition}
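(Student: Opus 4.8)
The plan is to reduce the stability analysis at $\BrVtx$ to a mode-by-mode eigenvalue computation, exactly as in Propositions \ref{prop:4-baseline} and \ref{prop:4-equidistant}, by exploiting the fact that $\VtD$ and $\VtG$ are simultaneously diagonalizable. First I would record that $\VtG$ in \prettyref{eq:G-block}, being the Kronecker product of two symmetric $2\times 2$ matrices, has the four Hadamard vectors
\begin{align*}
	\Vte_0 = (1,1,1,1),\quad
	\Vtz_a = (1,1,-1,-1),\quad
	\Vtz_b = (1,-1,-1,1),\quad
	\Vtz_c = (1,-1,1,-1)
\end{align*}
as an orthogonal eigenbasis, with eigenvalues $(1+\psi)(1+\psi')$, $(1+\psi)(1-\psi')$, $(1-\psi)(1+\psi')$, and $(1-\psi)(1-\psi')$ respectively. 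The same four vectors diagonalize the circulant matrix $\VtD$ in \prettyref{eq:D-racetrack}, with eigenvalues $(1+\phi)^2$, $1-\phi^2$, $1-\phi^2$, and $(1-\phi)^2$; in particular both matrices have constant row sums, so $\Vte_0$ is a common eigenvector and $\BrVtx$ is a spatial equilibrium. Because $\VtD$ and $\VtG$ commute, the wage map $\Vtw(\Vtx)$ defined implicitly by \prettyref{eq:short-run-equilibrium}--\prettyref{eq:normalization}, and hence the payoff-elasticity matrix $\VtV = \frac{\Brx}{\Brv}\nabla\Vtv(\BrVtx)$, is diagonalized by the same basis, so that each $\Vtz \perp \Vte_0$ is an eigenvector of $\VtV$ with eigenvalue $\Omega(\chi_{\Vtz},\lambda_{\Vtz})$, where $\chi_{\Vtz}$ and $\lambda_{\Vtz}$ are the ratios of the $\VtD$- and $\VtG$-eigenvalues of $\Vtz$ to those of $\Vte_0$.

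This is the key structural step and the one demanding the most care: I must confirm that the linearization of the equilibrium system around $\BrVtx$ genuinely decouples along the shared eigenbasis and reproduces the scalar function $\Omega$ of \prettyref{eq:Omega}. I would verify this by differentiating \prettyref{eq:short-run-equilibrium} at $\BrVtx$ and checking that the resulting Jacobian is a rational expression in $\VtD$ and $\VtG$ (through $\Vta = \VtG\Vtx$ and the CES price index), so that it shares their eigenvectors; this is the content of the general reduction already used to derive \prettyref{eq:Omega}. A point worth flagging is that, unlike the baseline case, $\psi' \ne \psi$ breaks the double degeneracy of $\VtD$'s eigenvalue $1-\phi^2$: the mono-centric direction $(0,1,0,-1)$ is no longer an eigenvector of $\VtG$, and the relevant eigenmodes in that plane are instead the two ``North--South''-type vectors $\Vtz_a$ (aligned with the super-region partition) and $\Vtz_b$ (cross-cutting it).

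With the reduction in hand, the remainder is a finite comparison. Computing the eigenvalue ratios gives
\begin{align*}
	\chi_{\Vtz_a} = \chi_{\Vtz_b} = \tfrac{1-\phi}{1+\phi} \FED \chi,
	\quad
	\chi_{\Vtz_c} = \chi^2,
	\quad
	\lambda_{\Vtz_a} = \tfrac{1-\psi'}{1+\psi'} \FED \lambda',
	\quad
	\lambda_{\Vtz_b} = \tfrac{1-\psi}{1+\psi},
	\quad
	\lambda_{\Vtz_c} = \lambda_{\Vtz_b}\lambda'.
\end{align*}
Since $\psi' < \psi$ and $t\mapsto\frac{1-t}{1+t}$ is strictly decreasing, $\lambda' > \lambda_{\Vtz_b} > \lambda_{\Vtz_c}$ and $\chi > \chi_{\Vtz_c}$. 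It therefore suffices to show $\Omega$ is strictly increasing in each argument on $(0,1)^2$: monotonicity in $t$ is immediate from \prettyref{eq:Omega}, and a one-line computation gives $\partial_s\Omega = \frac{(2\sigma-1)(1+t)}{(\sigma+(\sigma-1)s)^2} > 0$ for $\sigma > 1$. Joint monotonicity then yields $\Omega(\chi,\lambda') > \Omega(\chi,\lambda_{\Vtz_b})$ and $\Omega(\chi,\lambda') > \Omega(\chi^2,\lambda_{\Vtz_c})$, so that $\omega^* = \max\{\omega_{\Vtz_a},\omega_{\Vtz_b},\omega_{\Vtz_c}\} = \Omega(\chi,\lambda')$ is attained uniquely at $\Vtz^* = \Vtz_a = (1,1,-1,-1)$.

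Finally, $\BrVtx$ is linearly stable iff the largest eigenvalue $\omega^*$ of $\VtV$ is negative, and when $\omega^*$ crosses zero the destabilizing mode is $\Vtz^*$, so the emerging configuration is $\BrVtx + \epsilon\Vtz^* = (\Brx+\epsilon,\Brx+\epsilon,\Brx-\epsilon,\Brx-\epsilon)$, the North--South pattern of \prettyref{fig:z3}. The only conceptually delicate part is the degeneracy-breaking identified in the second paragraph; the maximization is routine once $\Omega$ is known to be monotone, and the outcome matches the intuition that splitting along the weaker externality tie $\psi'$ is the most profitable deviation, since $\lambda'$ is the largest externality sensitivity among the three modes.
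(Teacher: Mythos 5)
Your proposal is correct and takes essentially the same approach as the paper's proof: both diagonalize $\VtV$ in the common Hadamard eigenbasis produced by $\VtZ_2\otimes\VtZ_2$ (your $\Vtz_a,\Vtz_b,\Vtz_c$ are the paper's $\Vtz_1,\Vtz_3,\Vtz_2$), compute the same normalized eigenvalue ratios $\chi_k,\lambda_k$, and identify the North--South mode as the most profitable deviation. The only (harmless) variation is that you prove joint monotonicity of the full ratio $\Omega$ via the explicit derivative $\partial_s\Omega=(2\sigma-1)(1+t)\,/\,\bigl(\sigma+(\sigma-1)s\bigr)^2>0$, which directly yields $\omega^*=\Omega(\chi,\lambda')$, whereas the paper compares only the numerators $\Omega^\sharp(\chi_k,\lambda_k)$, which suffices for the sign and threshold ordering since $\Omega^\flat(\chi_k)>0$.
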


\begin{figure}
    \centering
    \includegraphics{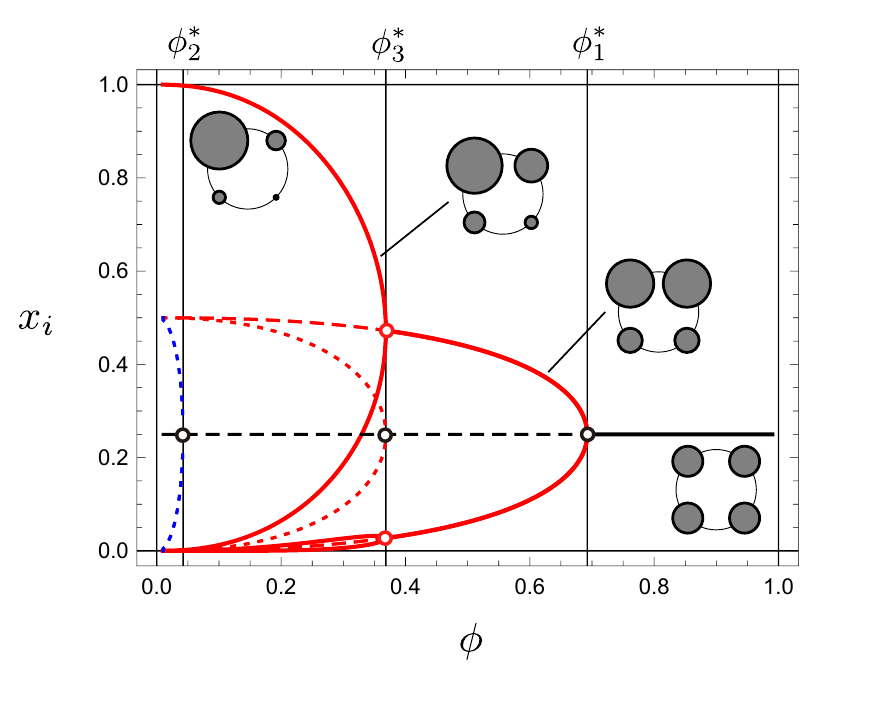}
    \caption{Bifurcation diagram for the four-region circular economy with externality matrix in \prettyref{eq:G-block}.}
	\begin{figurenotes}
	$\psi' = \psi^2 < \psi$ with $\psi = 0.8$ and $\sigma = 4$. The solid curves indicate stable equilibria and the dashed or dotted curves indicate unstable ones. The red curves indicate mono-centric patterns and the blue curve represents a duo-centric one. 
	The schematics by the solid curves show representative snapshots of associated stable spatial patterns. 
	For $\phi \in (\phi_3^*, \phi_2^*)$, a North--South pattern emerges. 
	The economy exhibits a hierarchical structure of North--South asymmetry and intra-regional asymmetries for range $\phi \in (0,\phi_3^*)$. 
	\end{figurenotes}
    \label{fig:four-regions_bloc}
\end{figure}

That is, a four-region economy with $\VtG$ in \prettyref{eq:G-block} has similar properties to the two-region case. 
The intra-super-region interaction level, $\psi$, does not affect the stability of $\BrVtx$, because $\omega^*$ does not include it. 
Each super-region can be regarded as a ``big'' region, so we recover the two-region economy. 

There are two other possible migration patterns, but they are less desirable for mobile workers than the North--South pattern $\Vtz^*$. 
Specifically, there are three possible migration patterns in this economy: 
\begin{align}
    \Vtz_1 = (1, 1,-1, -1),
    \quad 
    \Vtz_2 = (1,-1,1,-1), 
    \quad 
    \text{and}
    \quad
    \Vtz_3 = (1,-1,-1,1),
\end{align}
where $\Vtz^* = \Vtz_1$. 
However, \prettyref{app:stability} shows that the gains (i.e., the eigenvalues of $\VtV$) associated with these patterns satisfy $\omega^* = \max\{\omega_1,\omega_2,\omega_3\} = \omega_1$ for all $(\phi,\psi,\sigma)$ provided that \emph{inter}-super-region externalities are smaller than the \emph{intra}-super-region ones ($\psi' < \psi$). 
The second migration pattern, $\Vtz_2$, is the duo-centric pattern (\prettyref{fig:z2}), whereas the third is an ``East--West'' pattern (90$^\circ$ rotation of the North--South pattern). 
The East--West pattern is more desirable than the duo-centric pattern, because, for the former, the two big regions are close to each other and hence enjoy greater productivity than the latter. 
Similarly, the North--South pattern benefits from greater productivity than the East--West pattern, since the production externalities between the two big regions are $\psi$ in the former and $\psi' < \psi$ in the latter. 
Therefore, the most profitable deviation from $\BrVtx$ is the North--South pattern $\Vtz_1$. 

\prettyref{fig:four-regions_bloc} shows a numerical example for this case. 
Each $\phi^*_k$ indicates the level of $\phi$ for which we have $\omega_k = 0$ ($k = 1,2,3$). 
When $\phi$ is small, a one-peak distribution is stable (solid red curve). 
As $\phi$ increases, dispersion proceeds. 
The difference from \prettyref{fig:4-baseline} is that the North regions attract consistently more workers than the South ones. 
Range $\phi\in(0,\phi_3^*)$ represents the combined process of gradual dispersion from the North to the South and that within each super-region. 
For the South, the process is ambiguous because there are two effects (within- and between-super-regions) at work. 
For range $\phi\in(\phi_3^*,\phi_1^*)$, there is a stable North--South pattern in which the regions in each super-region have the same size; 
this configuration connects smoothly to $\BrVtx$ at critical value $\phi_1^*$, as predicted by \prettyref{prop:4-super-regions}. 
The process is understood as a hierarchical combination of the two-region case (\prettyref{sec:two-region_economy}). 

The economy naturally converges to baseline case  (\prettyref{sec:4-baseline}) as $\psi'\to\psi$. 
We have $\phi_3^* \to \phi_1^*$ as $\psi' \to \psi$ and then the bifurcation at $\phi_1^* = \phi_3^*$ lead to the mono-centric pattern (\prettyref{fig:z1}). 
In fact, when $\Vtz_1$ and $\Vtz_3$ become profitable for workers at the same time, the migration pattern becomes $\frac{1}{2}(\Vtz_1 + \Vtz_3) = (1,0,-1,0)$, that is, the mono-centric pattern in \prettyref{fig:4-baseline}.

\subsection{The form of dispersion: Bypasses}
\label{sec:Gg}

In all cases considered in Sections \ref{sec:4-baseline}, \ref{sec:4-equidistant}, and \ref{sec:4-super-regions}, endogenous mechanisms induce a mono-centric agglomeration, where one of the regions (or a pair of neighboring regions) becomes the center of the economy. 
But other endogenous spatial patterns may arise, in fact, depending on the interaction structure $\VtG$.  
To illustrate this, we consider the following setting (\prettyref{fig:g2}): 
\begin{align}
	& 
	\VtG
	=
	\begin{bmatrix}
		1 & \psi & \psi' & \psi\\
		 & 1 & \psi & \psi'\\ 
		 &   & 1 & \psi\\
		\SYM   &   & 1
	\end{bmatrix}, 
	\label{eq:G-bypass}
\end{align}
where we require $\psi' > 2\psi - 1$ to satisfy \prettyref{assum:a_i} (b). 

In this case, we have the following characterization, which includes Propositions \ref{prop:4-baseline} and \ref{prop:4-equidistant} as special cases where we respectively set $\psi' = \psi^2$ and $\psi' = \psi$. 
\begin{proposition}
\label{prop:4-general}
Assume $\VtD$ in \prettyref{eq:D-racetrack} and $\VtG$ in \prettyref{eq:G-bypass} with arbitrary $\psi'\in(0,1)$.
Then, either $\Vtz^* = \Vtz_1 \Is (1,0,-1,0)$ or $\Vtz^* = \Vtz_2 \Is (1,-1,1,-1)$, 
and $\omega^* = \max\{\omega_1,\omega_2\}$,  
with $\omega_k= \Omega(\chi_k, \lambda_k)$ where 
\begin{align}
    \chi_1 \Is \frac{1 - \phi}{1 + \phi}, 
    \chi_2 \Is \left(\frac{1 - \phi}{1 + \phi}\right)^2, 
    \lambda_1 
    \Is 
    \dfrac{1 - \psi'}{1 + 2 \psi + \psi'}, 
    \text{ and }
    \lambda_2
    \Is 
    \dfrac{1 - 2 \psi + \psi'}{1 + 2 \psi + \psi'}. 
\end{align}
Uniform distribution $\BrVtx$ is linearly stable if and only if $\omega^* < 0$. 
When $\BrVtx$ become unstable, 
\begin{enumerate}
	\item a mono-centric pattern $\BrVtx + \epsilon \Vtz_1 = (\Brx + \epsilon, \Brx, \Brx - \epsilon, \Brx)$ emerges if $\omega^* = \omega_1$; or, 
	\item a duo-centric pattern $\BrVtx + \epsilon \Vtz_2 = (\Brx + \epsilon, \Brx - \epsilon, \Brx + \epsilon, \Brx - \epsilon)$ emerges if $\omega^* = \omega_2$.  
\end{enumerate}
\end{proposition}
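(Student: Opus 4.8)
The plan is to exploit the full symmetry of the four-region ring. First I would observe that both $\VtD$ in \prettyref{eq:D-racetrack} and $\VtG$ in \prettyref{eq:G-bypass} are symmetric circulant matrices, so the entire equilibrium system \prettyref{eq:short-run-equilibrium}--\prettyref{eq:normalization} is equivariant under the dihedral group $D_4$ acting on the regions by rotation and reflection of the $4$-cycle. Consequently the payoff elasticity matrix $\VtV = \frac{\Brx}{\Brv}\nabla\Vtv(\BrVtx)$ commutes with this action. The permutation representation of $D_4$ on $\mathbb{R}^4$ splits into the trivial component spanned by $\Vt{1} = (1,1,1,1)$, the one-dimensional sign component spanned by $\Vtz_2 = (1,-1,1,-1)$, and a two-dimensional (absolutely irreducible) component spanned by $(1,0,-1,0)$ and $(0,1,0,-1)$. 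By equivariance (Schur's lemma) $\VtV$ acts as a scalar on each component, so it is simultaneously diagonalized with $\VtD$ and $\VtG$ in this common basis and has at most three distinct eigenvalues: one on $\Vt{1}$ (irrelevant, since admissible perturbations satisfy $\sum_{i\inN} z_i = 0$), a doubly-degenerate value $\omega_1$ on the two-dimensional component, and a value $\omega_2$ on the sign component. Note symmetry of $\VtV$ is not needed: equivariance alone forces the scalar action.

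Next I would compute the relevant eigenvalue ratios. For the circulant $\VtD$ with first row $(1,\phi,\phi^2,\phi)$ the eigenvalues are $(1+\phi)^2$, $1-\phi^2$, and $(1-\phi)^2$ on $\Vt{1}$, the two-dimensional space, and $\Vtz_2$ respectively; for $\VtG$ with first row $(1,\psi,\psi',\psi)$ they are $1+2\psi+\psi'$, $1-\psi'$, and $1-2\psi+\psi'$. Taking the migration-to-uniform ratios reproduces exactly the claimed values: $\chi_1 = \frac{1-\phi^2}{(1+\phi)^2} = \frac{1-\phi}{1+\phi}$ and $\lambda_1 = \frac{1-\psi'}{1+2\psi+\psi'}$ for the two-dimensional direction, and $\chi_2 = \left(\frac{1-\phi}{1+\phi}\right)^2$ and $\lambda_2 = \frac{1-2\psi+\psi'}{1+2\psi+\psi'}$ for the sign direction.

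The core step is to show that the eigenvalue of $\VtV$ in each migration direction equals $\Omega(\chi_k,\lambda_k)$, with $\Omega$ as in \prettyref{eq:Omega}. Here I would invoke the same chain-rule decomposition used in the two-region analysis, $\omega_k = \omega_a \alpha_x + \omega_w \beta_x$ with $\omega_a = \chi_k$, $\omega_w = 1-\chi_k$, $\alpha_x = \lambda_k$, and $\beta_x$ from \prettyref{eq:beta_x}. The point is that, because $\VtD$ and $\VtG$ act as scalars on the eigendirection and equivariance forbids any coupling between distinct irreducibles, the linearization of \prettyref{eq:short-run-equilibrium} and \prettyref{eq:normalization} restricted to that direction collapses into the very scalar relations that produced $\Omega$ in the two-region case, now with $\phi,\psi$ replaced by the relevant eigenvalue ratios. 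This yields $\omega_k = \Omega(\chi_k,\lambda_k)$ for $k=1,2$.

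Finally, since $\VtV$ has only these two distinct migration eigenvalues, the largest is $\omega^* = \max\{\omega_1,\omega_2\}$, and the stability criterion from \prettyref{app:stability} gives linear stability of $\BrVtx$ if and only if $\omega^* < 0$. Identifying the eigenvector attached to $\omega^*$ determines the emergent pattern: $\Vtz^* = \Vtz_1 = (1,0,-1,0)$ (mono-centric) when $\omega^* = \omega_1$, and $\Vtz^* = \Vtz_2 = (1,-1,1,-1)$ (duo-centric) when $\omega^* = \omega_2$; the special cases $\psi' = \psi^2$ and $\psi' = \psi$ then recover Propositions \ref{prop:4-baseline} and \ref{prop:4-equidistant}. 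I expect the main obstacle to be the core step: rigorously justifying that the projected linearized system decouples into the two-region scalar computation --- in particular that the income normalization \prettyref{eq:normalization} and the market-clearing condition \prettyref{eq:short-run-equilibrium} linearize without mixing the uniform and migration components --- rather than the (routine) eigenvalue bookkeeping of the two preceding paragraphs.
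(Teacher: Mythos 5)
Your proposal is correct and takes essentially the same route as the paper: the appendix proof simultaneously diagonalizes $\VtV$, $\BrVtD$, and $\BrVtG$ in the common DFT eigenbasis of circulants (\prettyref{fact:eigenvalues} together with the explicit linearization \prettyref{eq:Vx}), reads off exactly the eigenvalues $\chi_k$, $\lambda_k$ you compute, and applies $\Omega$ — your $D_4$-equivariance/Schur argument is an equivalent justification of that simultaneous diagonalization, with the degenerate pair $\Vtz_1$, $\Vtz_3$ matching the paper's two-dimensional component. The ``core step'' you flag as the main obstacle is precisely what the paper's computation of $\VtZ_w$ and $\VtZ_x$ via the implicit function theorem supplies: once the linearized market-clearing and normalization conditions are written in terms of the commuting matrices $\BrVtD$ and $\BrVtG$ (as in \prettyref{eq:Wx} and \prettyref{eq:Vx}), projection onto each eigenvector scalarizes the system and yields $\omega_k = \Omega(\chi_k,\lambda_k)$, exactly as you anticipate.
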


In contrast to Sections \prettyref{sec:4-baseline} and \prettyref{sec:4-equidistant}, the duo-centric pattern (\prettyref{fig:z2}) can emerge if it is more profitable for workers than the mono-centric pattern. 

\begin{figure}
    \centering
    \includegraphics{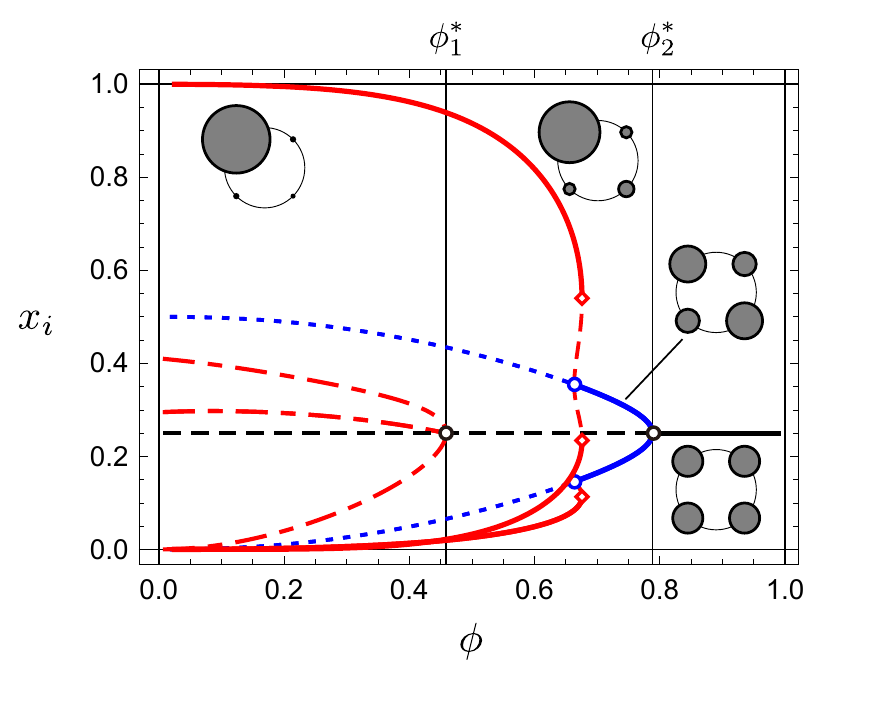}
    \caption{Bifurcation diagram for the four-region circular economy with externality matrix in \prettyref{eq:G-bypass}.}
	\begin{figurenotes}
	$\psi' = \psi^{1/2} > \psi$ with $\psi^{1/2} = 0.65$ and $\sigma = 4$. The solid curves indicate stable equilibria and the dashed or dotted curves indicate unstable ones. The red curves represent mono-centric patterns and the blue curve represents duo-centric patterns. The schematics by the solid curves show representative snapshots of the associated stable spatial patterns. 
	The diamonds ($\lozenge$)  indicate the limit point for the mono-centric configuration drawn by the solid red curves. 
	\end{figurenotes}
    \label{fig:four-regions_duo}
\end{figure}

A necessary condition for the emergence of a duo-centric pattern is $\psi' > \psi$. 
When $\psi' > \psi$, region $1$ is more ``connected'' to region $3$ than to regions $2$ or $4$ in terms of production externalities. 
The regions at antipodal locations in the circle are closer with respect to $\VtG$. 

Similar to Sections \ref{sec:4-baseline} and \ref{sec:4-equidistant}, $\lambda_1$ and $\lambda_2$ represent the marginal productivity gains respectively induced by the formation of mono-centric and duo-centric configurations. 
By employing the formula for $\omega^*$, we show that the uniform distribution can be stable for some $\phi\in(0,1)$ if and only if $(\sigma - 1)\max\{\lambda_1,\lambda_2\} < 1$. 
If inter-regional productivity spillover does not exist (i.e., $\psi = 0$ and $\psi' = 0$), then $\lambda_1 = 1$ and $\lambda_2 = 1$. 
Therefore, $\BrVtx$ cannot be stable for such a case if $\sigma > 2$, which is obviously satisfied by the standard values of $\sigma$ \citep{Anderson-Wincoop-JEL2004}. 
When $\sigma$ is sufficiently large, workers are better off concentrating on a single region because differentiated goods are substitutes whereas inter-regional production externalities are too costly. 

\prettyref{fig:four-regions_duo} shows a numerical example in which $\omega^* = \omega_2$, so that stable duo-centric patterns emerge from $\BrVtx$. 
In line with \prettyref{fig:4-baseline}, $\phi^*_1$ and $\phi^*_2$ respectively indicate the levels of $\phi$ at which we have $\omega_1 = 0$ and $\omega_2 = 0$.  
When $\phi$ is small, a mono-centric distribution is stable (solid red curve), which is similar to \prettyref{fig:4-baseline}. 
As $\phi$ increases, dispersion proceeds. 
The difference from \prettyref{fig:4-baseline} is that region $3$ attracts more workers than regions $2$ and $4$. 
For some range of $\phi$, a duo-centric concentration towards regions $1$ and $3$ emerges, but with asymmetry so that $x_1 > x_3$. 
At some point, the economy jumps to the symmetric two-peaked distribution (solid blue curve) through a saddle-node bifurcation. 
This jump is encountered at a limit point for the mono-centric configuration (indicated by $\lozenge$).  
The two-peaked distribution connects smoothly to the uniform distribution at critical value $\phi_2^*$, which is predicted by \prettyref{prop:4-general}.

\section{Concluding remarks}
\label{sec:conclusion}

This paper proposed a bare-bones general equilibrium model with two proximity structures, one due to trade linkages and the other due to productivity spillovers. 
The former is the standard on which many economic geography models in the literature focus. 
For a symmetric two-region economy, we confirm that the uniform dispersion is stable when the economy is more integrated. 

In multi-region settings, there are two first-order theoretical insights into the role of an additional proximity structure due to production externalities. 
First, we demonstrated that the structure of the externality matrix affects the timing of endogenous agglomeration. 
Particularly, when the economy is tightly connected, there are less incentives for workers to form agglomerations. 
Second, the structure of inter-regional production externalities, even when ex-ante symmetric, can endogenously determine the spatial distribution of workers across regions. 
Our examples, for instance, show that the number of economic centers can depend on the network structure of production externalities. 
Although the economy collapses to the flat-earth pattern in the frictionless limit, the structure of social proximity emerges as a determinant of the geographical distribution of workers when transportation costs are at intermediate levels.

The model in this paper is a simple thought experiment, as we build on the compromise that the externality structure is exogenously given. 
As discussed in \prettyref{sec:introduction}, there are various interpretations of the externality matrix.
The most interesting extension would be its endogenous determination. 
For such micro-founded models, our framework can be utilized to obtain coarse insights into the role of an additional network structure.

\clearpage 
\appendix
\small\singlespacing

\section{Proofs}
\label{app:proofs}
\numberwithin{assumption}{section}
\numberwithin{lemma}{section}
\numberwithin{proposition}{section}
\numberwithin{fact}{section}

This appendix collects the omitted proofs and derivations. 
Throughout, for a vector-valued function $\Vtf$, we denote $\VtF_x \Is [\partial f_i/\partial x_j]$, and at which point it is evaluated is inferred from the context.  
For vector $\Vtx$, we denote $\HtVtx \Is \diag{\Vtx}$. 

\subsection{General properties} 

For any $\Vtx > \Vt0$, there is a unique wage vector that solves the market equilibrium conditions. 
\begin{lemma}
\label{lem:w-uniqueness}
There exists a unique positive solution $\Vtw$ to \prettyref{eq:short-run-equilibrium} and \prettyref{eq:normalization} for any positive spatial distribution $\Vtx$. 
Further, for any $i\inN$, $w_i \to \infty$ as $x_i \to 0$.  
\end{lemma}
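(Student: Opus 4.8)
The plan is to reduce the system to a fixed-point problem in the regional incomes $y_i \Is w_i x_i$, establish existence by Brouwer's theorem, prove uniqueness by a ``maximal-ratio'' comparison that crucially uses $\sigma > 1$, and finally read off the divergence directly from the rearranged market-clearing condition. Concretely, I would set $b_i \Is a_i^{\sigma - 1} x_i^{\sigma - 1}$ (positive constants once $\Vtx > \Vt0$ is fixed), so that $a_i^{\sigma - 1} w_i^{1 - \sigma} = b_i y_i^{1 - \sigma}$ and \prettyref{eq:short-run-equilibrium} becomes
\begin{equation}
	y_i^\sigma = b_i \sum_{k\inN} \frac{\phi_{ik}\, y_k}{\Pi_k(\Vty)},
	\qquad
	\Pi_k(\Vty) \Is \sum_{l\inN} b_l\, y_l^{1 - \sigma}\phi_{lk}.
	\label{eq:reduced}
\end{equation}
The right-hand side is homogeneous of degree one in $\Vty$ (equivalently, \prettyref{eq:short-run-equilibrium} is scale-invariant in $\Vtw$), so I would seek a positive solution of \prettyref{eq:reduced} up to a scalar and fix the scale at the end via \prettyref{eq:normalization}, which reads $\sum_{i\inN} y_i = 1$.

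For existence I would work on the unit simplex and consider the normalized self-map $\Vty \mapsto F(\Vty)/\!\sum_{j\inN} F_j(\Vty)$ with $F_i(\Vty) \Is \bigl(b_i \sum_{k\inN}\phi_{ik} y_k/\Pi_k(\Vty)\bigr)^{1/\sigma}$. This map is continuous on the interior, and the key point is that the boundary is repelling: if $y_j \to 0$ for some $j$, then $y_j^{1 - \sigma}\to\infty$ (because $\sigma > 1$), the expenditure share of the distinct variety $j$ in every market tends to one, and the value of demand for variety $j$ stays bounded below, forcing $F_j$ to dominate. I would make this precise (by exhibiting a compact convex interior-preserving subset, or via a boundary-index argument) and then invoke Brouwer to obtain an interior fixed point $\Vty^\star \gg \Vt0$, which rescales to the desired positive $\Vtw$.

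The crux is uniqueness, for which I would use the maximal-ratio technique. Suppose $\Vty$ and $\Vty'$ are two positive solutions of \prettyref{eq:reduced}, both normalized to $\sum_{i\inN} y_i = 1$. Let $\mu \Is \max_{i\inN} y_i'/y_i$, attained at $i^\star$, and assume for contradiction that $\mu > 1$. From $y_k' \le \mu y_k$ for all $k$ and, since $1 - \sigma < 0$, $\Pi_k(\Vty') \ge \mu^{1 - \sigma}\Pi_k(\Vty)$, I obtain the termwise bound $y_k'/\Pi_k(\Vty') \le \mu^\sigma\, y_k/\Pi_k(\Vty)$. Evaluating \prettyref{eq:reduced} at $i^\star$ and using $y_{i^\star}' = \mu y_{i^\star}$ gives $\sum_{k\inN}\phi_{i^\star k} y_k'/\Pi_k(\Vty') = \mu^\sigma \sum_{k\inN}\phi_{i^\star k} y_k/\Pi_k(\Vty)$, so the termwise inequalities must all be equalities; since $\phi_{i^\star k} > 0$ this forces $\Vty' = \mu\Vty$, whence $\sum_{i\inN} y_i' = \mu$ contradicts normalization. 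Hence $\mu \le 1$, and symmetrically $\max_{i\inN} y_i/y_i' \le 1$, so $\Vty' = \Vty$. This is where $\sigma > 1$ is indispensable, as it fixes the sign and thus the direction of the $\Pi_k$-monotonicity.

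Finally, for the divergence I would rearrange \prettyref{eq:short-run-equilibrium} at region $i$ into $w_i^\sigma = (a_i^{\sigma - 1}/x_i)\sum_{k\inN}\phi_{ik} w_k x_k/\Pi_k$. As $x_i \to 0$ the unit population mass remains in the other regions, so $a_i = \sum_{j\inN}\psi_{ij}x_j$ stays bounded below by a positive constant (using $\psi_{ij} > 0$), and the remaining demand aggregate likewise stays bounded below; the prefactor $1/x_i$ then forces $w_i \to \infty$. Intuitively, region $i$ supplies a distinct Armington variety whose demand never vanishes, so its wage must blow up to clear the market on a shrinking labor base. The main obstacle I anticipate is making the existence/interiority step fully rigorous, i.e.\ controlling $F$ uniformly near the boundary of the simplex; the uniqueness argument, by contrast, is clean once the maximal-ratio idea and the sign of $1 - \sigma$ are in place.
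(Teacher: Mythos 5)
Your proof is correct in substance but takes a genuinely different route from the paper's. The paper works directly with the Walrasian excess-demand function $\VtPsi(\Vtw)$ in wage space: existence follows by verifying hypotheses (i)--(v) of Proposition 17.C.1 in Mas-Colell, Whinston and Green --- where condition (v), that $\max_{i\inN}\Psi_i(\Vtw)$ diverges as any wage tends to zero, plays exactly the role of your ``boundary is repelling'' step --- and uniqueness up to scale follows from the gross-substitute property $\partial\Psi_i/\partial w_j > 0$ for $i\ne j$ via their Proposition 17.F.3, with \prettyref{eq:normalization} then pinning down the scale. Your maximal-ratio argument in income space is a self-contained, elementary substitute for the gross-substitutes machinery, and it is airtight as written: the termwise bound $y_k'/\Pi_k(\Vty')\le \mu^\sigma y_k/\Pi_k(\Vty)$ uses $1-\sigma<0$ correctly, equality forces $\Vty'=\mu\Vty$ because all $\phi_{ij}>0$ (\prettyref{assum:phi}), and normalization rules out $\mu>1$; this makes the role of $\sigma>1$ more transparent than the citation route does. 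The trade-off is on the existence side, where you still owe the interiority argument you flagged: your heuristic is in fact consistent, since with $D_j(\Vty)\Is b_j y_j^{1-\sigma}\sum_{k\inN}\phi_{jk}y_k/\Pi_k(\Vty)$ denoting the demand value for variety $j$ one has $F_j(\Vty)^\sigma = y_j^{\sigma-1}D_j(\Vty)$ and $D_j\to\sum_{k\inN}y_k$ as $y_j\to 0$, so $F_j(\Vty)/y_j\to\infty$; but this must be made uniform near the whole boundary, e.g.\ by showing the normalized map leaves a slice $\{\Vty : y_i\ge\delta\ \forall i\inN\}$ invariant for small $\delta>0$ before invoking Brouwer, or, more economically, by checking the paper's divergence condition (v) for $\VtPsi$ and inheriting the textbook existence result outright. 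Your final divergence step is essentially identical to the paper's rearrangement $w_i^\sigma > C_1C_2 x_i^{-1}$, including the same implicit reliance on $\min_{l\inN} w_l$ staying positive along the limit.
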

\begin{proof}[Proof of \prettyref{lem:w-uniqueness}]
\textit{Existence}. 
Let the excess demand function $\VtPsi(\Vtw) = (\Psi_i(\Vtw))_{i\inN}$ be defined as follows: 
\begin{align}
&
    \Psi_i(\Vtw) \Is 
            \frac{1}{w_i}
            \left(
	        \sum_{k\inN}
				\dfrac
				{a_i^{\sigma - 1} w_i^{1 - \sigma} \phi_{ik}}
				{\sum_{l\inN} a_l^{\sigma - 1} w_l^{1 - \sigma} \phi_{lk}}
			w_k x_k
			- 
		    w_i x_i
		    \right). 
\end{align}
Then, market wage is a solution for $\VtPsi(\Vtw) = \Vt0$. 
We see (i) $\VtPsi(\cdot)$ is continuous, (ii) $\VtPsi(\cdot)$ is homogeneous of degree zero, (iii) $\Vtw^\top\VtPsi(\Vtw) = 0$ for any $\Vtw$ because of normalization of world income, and (iv) $\Psi_i(\Vtw) > - x_i > - 1$. 
Additionally, (v) for any sequence $\{\Vtw^n\}_{n = 0}^\infty$ of strictly positive $\Vtw^n = (w_i^n)_{i\inN}$ that converges to some $\TlVtw = (\Tlw_i)_{i\inN}$ such that $\Tlw_{i^*} = 0$ for some $i^*\inN$, we have $\max_{i\inN} \Psi_i(\Vtw^n) \to \infty$ as $n\to\infty$,  because we see
\begin{align}
    \max_{i\inN} 
        \Psi_i(\Vtw)
        & 
        = 
        \max_{i\inN}
	        \sum_{k\inN}
				\dfrac
				{a_i^{\sigma - 1} w_i^{- \sigma} \phi_{ik}}
				{\sum_{l\inN} a_l^{\sigma - 1} w_l^{1 - \sigma} \phi_{lk}}
			w_k x_k
			- 
		\max_{i\inN}
			x_i
		\\ 
		& 
		> 
        \max_{i\inN}
        \max_{j\inN}
				\dfrac
				{a_i^{\sigma - 1} w_i^{-\sigma} \phi_{ij}}
				{\sum_{l\inN} a_l^{\sigma - 1} w_l^{1 - \sigma} \phi_{lj}}
			w_j x_j
			- 
		    1
		\label{eq:j_ind}
		\\
		& 
		> 
		    \max_{i\inN}
				\dfrac
				{a_i^{\sigma - 1} \phi_{ij^*}}
				{\sum_{l\inN} a_l^{\sigma - 1} \phi_{lj^*}}
				w_{j^*} x_{j^*}
				\frac{w_i^{-\sigma}}{(\textstyle{\min_{k\inN} w_k})^{1 - \sigma}}
			- 
		    1 
		\\
		& 
		> 
		    \max_{i\inN}
				\dfrac
				{a_i^{\sigma - 1} \phi_{ij^*}}
				{\sum_{l\inN} a_l^{\sigma - 1} \phi_{lj^*}}
				w_{j^*} x_{j^*} 
				\frac{1}{\textstyle{\min_{k\inN} w_k}}
			- 
		    1, 
		\label{eq:max_Psi_bound}
\end{align}
where $j^*\inN$ is the regional index that achieves the second maximum in \prettyref{eq:j_ind}. 
The right hand side of the last display \prettyref{eq:max_Psi_bound} goes to positive infinity because $\min_{k\inN} w_k^n \to \min_{k\inN} \Tlw_k = \Tlw_{i^*} = 0$ as $n \to \infty $ and the other component is positive. 
Therefore, $\VtPsi(\Vtw)$ satisfies hypothesis (i)--(v) of \cite{MWG}, Proposition 17.C.1 on p.585 and there is $\Vtw$ such that $\VtPsi(\Vtw) = \Vt0$ and $w_i > 0$ for all $i\inN$. 

\textit{Uniqueness}. 
Note that $\VtPsi(\cdot)$ has the \textit{gross substitute property}. 
That is, 
\begin{align}
    \PDF{\Psi_i(\Vtw)}{w_j} 
    = 
    \frac{1}{w_i}
    \left(
        m_{ij} x_j + (\sigma - 1) \sum_{k\inN} m_{ik} m_{jk} w_k x_k \frac{1}{w_j}
    \right)
    > 0
\end{align}
for any $i \ne j$, 
where we let
\begin{align}
    m_{ij} = 
		\frac
		{a_i^{\sigma - 1} w_i^{1 - \sigma} \phi_{ij}}
		{\sum_{l\inN} a_l^{\sigma - 1} w_l^{1 - \sigma} \phi_{lj}}.
	\label{eq:m_ij}
\end{align}
Note that the first term in the parenthesis is nonnegative and the second term is positive. Hence, by Proposition 17.F.3 of \cite{MWG}, there exists unique $\Vtw$ up to scale such that $\VtPsi(\Vtw) = \Vt0$.  Thus, there is unique $\Vtw$ that satisfy $\VtPsi(\Vtw) = \Vt0$ and $\sum_{i\inN} w_i x_i = 1$. 

\textit{Diverge when $x_i \to 0$}. By rewriting the condition $\Psi_i(\Vtx) = 0$, we see
\begin{align}
    w_i^\sigma = 
    \frac{1}{x_i} 
	    \sum_{k\inN}
			\dfrac
				{a_i^{\sigma - 1} \phi_{ik}}
				{\sum_{l\inN} a_l^{\sigma - 1} w_l^{1 - \sigma} \phi_{lk}}
			w_k x_k
	& 
	> 
    \frac{1}{x_i} 
        \min_{k\inN}
			\dfrac
				{a_i^{\sigma - 1} \phi_{ik}}
				{\sum_{l\inN} a_l^{\sigma - 1} w_l^{1 - \sigma} \phi_{lk}}
		\sum_{j\inK}
			w_j x_j
	\\
	& 
	> 
    \frac{1}{x_i} 
        \min_{k\inN}
			\dfrac
				{a_i^{\sigma - 1} \phi_{ik}}
				{\sum_{l\inN} a_l^{\sigma - 1}\phi_{lk}}
		\min_{l\inN}
		w_l^{\sigma - 1}
	= 
	C_1 C_2 x_i^{-1}, 
    \label{eq:wi_bound}
\end{align}
where $C_1 \Is \min_{k\inN} a_i^{\sigma - 1} \phi_{ik} \left(\sum_{l\inN} a_l^{\sigma - 1} \phi_{lk}\right)^{-1} > 0$ and $C_2 \Is \min_{l\inN} w_l^{\sigma - 1} > 0$, that is, $w_i\to\infty$ as $x_i \to 0$. 
\end{proof}

\begin{proof}[Proof of \prettyref{prop:existence}]
In general, a spatial equilibrium is a spatial distribution $\Vtx\inX$ such that the following Nash equilibrium condition is met for the location choice of workers:
\begin{align}
	\label{eq:equilibrium}
	\left\{
	\begin{array}{l}
		\text{$v^* = v_i(\Vtx)$
		for all regions $i\inN$ with $x_i > 0$,}\\
		\text{$v^* \geq v_i(\Vtx)$
		for any region $i\inN$ with $x_i = 0$,}
	\end{array}\right.
\end{align} 
where $v^*$ is an equilibrium payoff. 
The following \textit{variational inequality problem} is equivalent to \prettyref{eq:equilibrium}:
\begin{align}
    & \text{Find $\Vtx\inX$ such that $\Vtv(\Vtx)^\top(\Vty - \Vtx) \le 0$ for all $\Vty \inX$}. 
    \tag{VIP}
    \label{eq:VIP0}
\end{align}

Every region is necessarily populated at any spatial equilibrium when $\VtD$ and $\VtG$ are positive (i.e., $\phi_{ij} > 0$ and $\psi_{ij} > 0$ for all $i,j\inN$). 
First, $a_i(\Vtx) > 0$ for all $i\inN$ at any $\Vtx\inX$ when $\VtG$ is positive. 
Next, we see
\begin{align}
    v_i^{\sigma - 1} 
    & 
    = 
    w_i^{\sigma - 1} 
	    \sum_{j\inN}
		    a_j^{\sigma - 1} w_j^{1 - \sigma} \phi_{ji}
	> 
	w_i^{\sigma - 1}
	    \sum_{j\ne i}
		    a_j^{\sigma - 1} w_j^{1 - \sigma} \phi_{ji}
    > 
    C_3 C_4 w_i^{\sigma - 1}
    \label{eq:vi_bound}
\end{align}
where 
$C_3 \Is \min_{j\ne i}a_j^{\sigma-1} \phi_{ij} > 0$ (by positivity of $\phi_{ij}$) 
and $C_4 \Is \min_{j\ne i} w_j^{1 - \sigma} > 0$ (by positivity of $w_i$ shown by \prettyref{lem:w-uniqueness}). 
Note also that $C_4$ is bounded for all positive $\Vtx$. 

By \prettyref{eq:vi_bound} and \prettyref{eq:wi_bound}, for any sequence of positive spatial patterns $\{\Vtx^n\}_{n = 1}^\infty$ that converges to a spatial distribution such that $x_i = 0$, $w_i(\Vtx^n)$ and $v_i(\Vtx^n)$ both diverge to positive infinity as $n\to\infty$. 
On the other hand, by \prettyref{lem:w-uniqueness}, $\Vtw$ is uniquely given if we focus on the regions with positive population $\ClN_+ \Is \{k\inN \mid x_k > 0\}$ by letting $w_i := \infty$ and $w_i x_i:= 0$ for all $i\inN_0 \Is \{k\inN\mid x_k = 0\}$. Then, $v_i(\Vtx)$ is finite for all $i\inN_+$, while $v_i(\Vtx)$ is infinitely large for any $i\inN_0$. 
Since such spatial distribution cannot be a spatial equilibrium, every region is necessarily populated in equilibrium. 
Thus, the equilibrium condition \prettyref{eq:equilibrium} in fact reduces to the equality:  $v_i(\Vtx) = v_j(\Vtx)$ for all $i,j\inN$. 

Consider the following variational inequality problem:
\begin{align}
    & \text{Find $\Vtx\inX_\epsilon$ such that $\Vtv(\Vtx)^\top(\Vty - \Vtx) \le 0$ for all $\Vty \inX_\epsilon$},  
    \tag{VIP$\epsilon$}
    \label{eq:VIP}
\end{align}
where $\ClX_\epsilon \Is \{\Vtx\inX \mid x_i > \epsilon\ \forall i\inN\}$ for some $\epsilon > 0$. 
Since $\Vtv$ is differentiable and thus continuous on $\ClX_\epsilon$, and $\ClX_\epsilon$ is compact and convex,  
by Corollary 2.2.5 of \cite{Facchinei-Pang-Book2003}, the set of solutions for \prettyref{eq:VIP} is nonempty and compact for any choice of $\epsilon > 0$. 
There is some $\epsilon > 0$ for which all solutions for \prettyref{eq:VIP} are in the (relative) interior of $\ClX_\epsilon$ because $v_i(\Vtx)$ is continuous in $\Vtx$ and diverges when $x_i \to 0$. 
Because any interior solution for \prettyref{eq:VIP} must satisfy $v_i(\Vtx) = v_j(\Vtx)$ for all $i,j\inN$,  they are spatial equilibria. 
\end{proof}

\begin{proof}[Proof of \prettyref{prop:dispersion}]
Suppose $\phi_{ij} = 1$ and $\psi_{ij} = 1$ for all $i,j\inN$. 
Then, $a_i(\Vtx) = \sum_{j\inN} x_j = 1$ and  
\begin{align}
    m_{ij} = \frac{w_i^{1 - \sigma} \phi_{ij}}{\sum_{l\inN} w_l^{1- \sigma} \phi_{lj}} = m_i \Is \frac{w_i^{1 - \sigma}}{\sum_{l\inN} w_l^{1- \sigma}} \in (0,1) 
    \label{eq:m_i}, 
\end{align}
which implies that 
$w_i x_i = \sum_{j\inN} m_{ij} w_{j} x_j = m_i \sum_{j\inN} w_j x_j = m_i$. 
Additionally, we see 
\begin{align}
    v_i(\Vtx) 
    = \frac{w_i}{P_i} 
    = 
            \left(
        \frac{w_i^{1 - \sigma}}
        {
            \sum_{j\inN} w_j^{1 - \sigma}
        }
        \right)^{\frac{1}{1 - \sigma}} 
    = 
        m_i^{\frac{1}{1 - \sigma}}. 
\end{align}
Because all regions are populated in equilibrium, $v_i(\Vtx) = v_j(\Vtx)$ for all $i\inN$, which gives $m_i = \Brm \in(0,1)$ for all $i\inN$. 
Then, $w_i = \Brw > 0$ for all $i\inN$ by \prettyref{eq:m_i}. 
Since $w_i x_i = \Brw x_i = m_i = \Brm$, $x_i = \frac{\Brm}{\Brw} = \Brx = \frac{1}{n}$ for all $i\inN$. That is, $\BrVtx$ is the unique equilibrium. 
The stability of $\BrVtx$ can be shown by knowing $\VtV_x$ is negative definite at $\Vtx = \BrVtx$, as $\VtV_x$ is symmetric at $\BrVtx$ and all of its relevant eigenvalues equals $-\sigma^{-1} < 0$. 
\end{proof}

\begin{proof}[Proof of \prettyref{prop:mono-centric}]
When $\phi_{ij} = 0$ and $\psi_{ij} = 0$ so that inter-regional trade and production externalities are prohibitive, we have $a_i = x_i$ for all $i\inN$. 
The market equilibrium conditions reduce to:
$w_i x_i = x_i^{\sigma - 1} w_i^{1 - \sigma} w_i x_i$, 
thereby $w_i = x_i$. 
Thus, $P_i = (x_i^{\sigma - 1} x_i^{1 - \sigma})^{1/(1 - \sigma)} = 1$ and $v_i(\Vtx) = x_i$.  
Therefore, any spatial distribution in which the populated regions have the same population is an equilibrium when $\phi_{ij}$ and $\psi_{ij}$ vanish for all $i \ne j$. 
However, all such equilibria with more than one populated regions cannot be stable under natural dynamics since  
any migration between populated regions induce relative payoff advantages as $v_i(\Vtx) > v_j(\Vtx)$ if $x_i > x_j$. 
Thus, the economy ends up with a full agglomeration in one of the regions. 
\end{proof}

\subsection{Stability of the uniform distribution}
\label{app:stability}

Since multiple equilibria may arise due to the centripetal force embedded by $\Vta(\cdot)$, we consider equilibrium refinement based on some myopic dynamics $\DtVtx = \Vtf(\Vtx)$. 
We focus on dynamics of the form $\DtVtx = \Vtf(\Vtx) = \Vtf(\Vtx,\Vtv(\Vtx))$, that is, the dynamics that maps spatial distribution $\Vtx$ and payoff level $\Vtv(\Vtx)$ to a motion vector. 

We focus on a class of dynamics that satisfy the following assumptions. 
\begin{assumption}
\label{assum:dynamics}
Both $\Vtf$ and $\TlVtf$ are differentiable for all positive $\Vtx$ and satisfy 
(i) $\Vtf(\Vtx) = \Vt0$ if and only if $\Vtx$ is a spatial equilibrium of our model,  
(ii) if $\Vtf(\Vtx)\neq\Vt0$, then $\Vtv(\Vtx) ^\top \Vtf(\Vtx) > 0$, 
and 
(iii) $\VtP\Vtf(\Vtx,\Vtv(\Vtx)) = \Vtf(\VtP\Vtx,\VtP \Vtv(\Vtx))$ for all permutation matrices $\VtP$. 
\end{assumption}
\noindent 
Conditions (i) and (ii) are, respectively, called \textit{Nash stationality} and \textit{positive correlation} \citep{Sandholm-Book2010}, which are the most parsimonious assumptions we can impose on a dynamic $\Vtf$ to be consistent with the underlying model (payoff function) $\Vtv$. 
Condition (iii) ensures that $\Vtf$ is not biased, that is, it does not include any ex-ante preference for some regions to the others. 
In other words, we suppose that all location incentives for workers are captured by the payoff function. 
We suppose $\Vtf$ is $\RmC^1$, only because we employ linear stability as the definition of stability. 
We call dynamics that satisfy \prettyref{assum:dynamics} \textit{admissible dynamics}. 

We first give general characterization of stability of $\BrVtx$ for both $n = 2$ and $n = 4$. 
Our approach build on \cite{Akamatsu-Takayama-Ikeda-JEDC2012}, which is recently synthesized by \cite{AMOT-DP2019}. 

For any $\VtD > 0$ and $\VtG > 0$, we compute that
\begin{align}
    \VtV_x
	=
	\HtVtv(\Vtx)
	\left(
		\VtM^\top
		\HtVta^{-1}
		\VtA_x
		+
		\left(
		\VtI
		-
		\VtM^\top
		\right) 
		\HtVtw^{-1}
		\VtW_x
	\right) 
	\label{eq:Vx_gen}
\end{align}
where 
$\VtM = [m_{ij}]$ with $m_{ij}$ defined by \prettyref{eq:m_ij}. 
The Jacobian matrix of wage with respect to spatial distribution $\Vtx$, $\VtW_x$, is given by the implicit function theorem regarding the short-run market equilibrium condition:
\begin{align}
    z_i(\Vtx,\Vtw) \Is 
        w_i x_i 
        - 
	        \sum_{k\inN}
			m_{ik}
			w_k x_k 
		= 0
\end{align}
as $\VtW_x = - \VtZ_w^{-1}\VtZ_x$. 
We compute 
\begin{align}
    &
    \VtZ_w
	= 
	\left( 
	\sigma \left( \VtI - \VtM \right) \HtVty 
	- 
	\left( \sigma - 1 \right)  \VtM\HtVty \VtM^\top 
	\right) 
	\HtVtw^{-1}, 
	\\
    & 
	\VtZ_x
	= 
	\left(
		\left(\VtI - \VtM\right)\HtVty
		-
		(\sigma - 1)
		\left(
			\HtVty
			-
			\VtM
			\HtVty
			\VtM^\top
		\right) 
		\HtVta^{-1}
		\VtA_x
		\HtVtx
	\right) 
	\HtVtx^{-1} 
\end{align}
where $\Vty = (w_i x_i)_{i\inN}$. 
We note $\VtA_x = \VtG$. 

Consider a $n$-region economy ($n \ge 2$). 
Suppose uniform distribution $\BrVtx = (\Brx,\Brx,\hdots,\Brx)$ with $\Brx = \frac{1}{n}$. 
Let $\BrVtD$ and $\BrVtG$ be the row-normalized versions of $\VtD$ and $\VtG$. 
We have $\VtM = \BrVtD$ and $\HtVta^{-1} \VtA_x = \BrVtG$ at $\BrVtx$. 
We impose the following assumption on $\BrVtD$ and $\BrVtG$, which is satisfied by our examples in Sections \ref{sec:two-region_economy} and \ref{sec:four-region_economies}. 
\begin{assumption}
\label{assum:circulants}
Both $\BrVtD$ and $\BrVtG$ are either circulants or block circulants with circulant blocks (BCCBs). 
\end{assumption}
Under the assumption, $\BrVtD$ and $\BrVtG$ commute. 
In turn, we evaluate as follows:
\begin{align}
    &
    \VtW_x 
    = 
    - 
	\frac{\Brw}{\Brx}
	\left(
	    \sigma \VtI 
	    + 
	    (\sigma - 1) \BrVtD
	\right)^{-1}
	\left(
	     - \VtI + (\sigma - 1) \left(\VtI + \BrVtD\right) \BrVtG
	\right)
	\label{eq:Wx}
	\\
    & 
	\VtV_x
	= 
	\frac{\Brv}{\Brx}
	\left(
	    \sigma \VtI 
	    + 
	    (\sigma - 1) \BrVtD
	\right)^{-1}
	\left(
	    -(\VtI - \BrVtD)
	    + 
	    ((\sigma - 1)\VtI + \sigma \BrVtD)\BrVtG 
	\right), 
	\label{eq:Vx}
\end{align}
where $\Brv$ is the uniform level of payoff at $\BrVtx$. We let $\VtV = \tfrac{\Brx}{\Brv} \VtV_x$. 
By assumption, $\VtV$ is real and symmetric. 

When $\VtV$ is negative definite with respect to $T\ClX = \{\Vtz\in\BbR^n \mid \sum_{i\inN} z_i = 0\}$, $\BrVtx$ is \textit{evolutionary stable state} \citep[see][Observation 8.3.11]{Sandholm-Book2010}. 
Then, $\BrVtx$ is stable under all admissible dynamics. 
Since $\VtV$ is symmetric, it is negative definite if and only if all of its eigenvalues with respect to $T\ClX$ is negative. 

We have the following fact \citep[see, e.g.,][]{horn2012matrix}. 
\begin{fact}
\label{fact:eigenvalues}
Under \prettyref{assum:circulants}, the eigenvalues of $\VtV$ are given by:
\begin{align}
    & \omega_k = \Omega(\chi_k, \lambda_k) \Is \frac{\Omega^\sharp(\chi_k,\lambda_k)}{\Omega^\flat(\chi_k)}
    & \forall k, 
    \label{eq:omega-general} 
\end{align}
where $\chi_k$ and $\lambda_k$ are, respectively, the $k$th eigenvalues of $\BrVtD$ and $\BrVtG$, and we define 
\begin{align}
    \Omega^\sharp(\chi,\lambda)
    & 
    \Is 
    -
    (1 - \chi) 
    + 
    \left((\sigma - 1) + \sigma\chi \right)
    \lambda,
    \\
    \Omega^\flat(\chi)
    &  
    \Is \sigma + (\sigma - 1)\chi. 
    \label{eq:gain_function}
\end{align}
The three matrices $\VtV$, $\BrVtD$, and $\BrVtG$ share the same set of eigenvectors. 
One eigenvector is $\Vt1 = (1,1,\hdots,1)$ and is orthogonal to $T\ClX$. 
The other eigenvectors $\{\Vtz_k\}$ span $T\ClX$ and each satisfies $\Vtz_k^\top \Vt1 = 0$. 
\end{fact}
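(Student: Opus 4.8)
The plan is to exploit the fact that Assumption~\ref{assum:circulants} makes $\BrVtD$ and $\BrVtG$ simultaneously diagonalizable by a single, explicitly known basis, and then to observe that $\VtV$ in \prettyref{eq:Vx} is a rational matrix function of this commuting pair, so it inherits that eigenbasis with eigenvalues obtained by substituting the scalar eigenvalues into $\Omega$. The only genuinely structural input is the spectral theory of circulants; everything else reduces to a scalar computation.

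First I would recall the standard spectral structure: every $n\times n$ circulant is diagonalized by the discrete Fourier basis, and every BCCB is diagonalized by the corresponding tensor-product Fourier basis. Crucially, the diagonalizing basis depends only on the size (and the block partition), not on the entries, so all circulants of a given size — and all BCCBs with a common block structure — share one set of eigenvectors $\{\Vtz_k\}$. Because $\VtD$ and $\VtG$ are symmetric and, at $\BrVtx$, row-normalization divides each row by the same common row sum, $\BrVtD$ and $\BrVtG$ remain symmetric circulants/BCCBs; their shared eigenbasis can then be taken real and orthonormal. This simultaneously re-establishes that $\BrVtD$ and $\BrVtG$ commute (as already noted in the text) and furnishes real eigenvalues $\chi_k$ and $\lambda_k$.

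Next I would substitute a common eigenvector into \prettyref{eq:Vx}. Fixing $\Vtz_k$ with $\BrVtD\Vtz_k = \chi_k\Vtz_k$ and $\BrVtG\Vtz_k = \lambda_k\Vtz_k$, the prefactor acts as $(\sigma\VtI + (\sigma-1)\BrVtD)\Vtz_k = \Omega^\flat(\chi_k)\Vtz_k$; since $\sigma>1$ and $|\chi_k|\le 1$ (as $\BrVtD$ is row-stochastic) give $\Omega^\flat(\chi_k) = \sigma + (\sigma-1)\chi_k \ge 1 > 0$, this prefactor is invertible and acts as multiplication by $1/\Omega^\flat(\chi_k)$. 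The second factor gives $(-(\VtI - \BrVtD) + ((\sigma-1)\VtI + \sigma\BrVtD)\BrVtG)\Vtz_k = \Omega^\sharp(\chi_k, \lambda_k)\Vtz_k$. Multiplying the two yields $\VtV\Vtz_k = \Omega(\chi_k, \lambda_k)\Vtz_k$, which is exactly \prettyref{eq:omega-general} and at the same time shows that $\VtV$ shares the eigenbasis of $\BrVtD$ and $\BrVtG$.

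Finally I would identify the special eigenvector. Row-normalization forces the row sums of $\BrVtD$ and $\BrVtG$ to equal one, so $\BrVtD\Vt1 = \BrVtG\Vt1 = \Vt1$; hence $\Vt1$ is the common eigenvector with $\chi = \lambda = 1$, and it is orthogonal to $T\ClX = \{\Vtz \mid \sum_i z_i = 0\}$. Orthogonality of the real eigenbasis then forces the remaining eigenvectors $\{\Vtz_k\}$ to lie in, and therefore span, $T\ClX$, each satisfying $\Vtz_k^\top\Vt1 = 0$. I expect no real obstacle beyond stating the BCCB diagonalization carefully and checking that the block partitions of $\BrVtD$ and $\BrVtG$ are compatible so that a single tensor-product Fourier basis diagonalizes both; once that compatibility is secured, the result follows from the scalar substitution above.

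\noindent\ignorespaces
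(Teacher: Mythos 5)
Your proposal is correct and takes essentially the same route as the paper, which presents \prettyref{eq:omega-general} precisely as the translation of the matrix identity \prettyref{eq:Vx} into scalar form via the simultaneous diagonalization of circulants (or BCCBs of a common block structure) by the (tensor-product) Fourier basis, with $\Vt1$ the row-stochastic eigenvector orthogonal to $T\ClX$. You merely make explicit some details the paper leaves implicit --- invertibility of the prefactor from $\Omega^\flat(\chi_k)=\sigma+(\sigma-1)\chi_k\ge 1$ since $|\chi_k|\le 1$ for the row-stochastic $\BrVtD$, preservation of the circulant/BCCB structure under row normalization because the row sums are constant, and reality and orthogonality of the eigenbasis from symmetry --- all of which are sound.
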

The formula \prettyref{eq:gain_function} is simply the translation of the matrix relationship \prettyref{eq:Vx} into an eigenvalue relationship, which made possible by the properties of circulant matrices (or BCCBs).  
In effect, the stability of $\BrVtx$ is determined by eigenvalues $\{\omega_k\}$ that corresponds to the eigenvectors other than $\Vt1$.

Under \prettyref{assum:dynamics}, when some eigenvalues $\omega_k$ switches from negative to positive, then the state is pushed towards the direction of the associated eigenvector \citep[see, e.g., ][Chapter 5]{kuznetsov2013elements}. 
The ``critical'' eigenvector is tangent to the unstable manifold emanating from $\BrVtx$. 
When 
\begin{align}
    \omega_\text{max} \Is \max_k\{ \omega_k \} 
\end{align}
turns its sign from negative to positive, then the spatial distribution is perturbed towards the direction of the associated eigenvector $\Vtz_{k^*}$, where $k^* \Is \argmax_k\{\omega_k\}$. 
Intuitively, each eigenvalues $\omega_k$ is the gain for migrants when they collaterally migrate toward the direction of the associated eigenvector. 

To be more explicit, we proceed as follows.  
Consider a general perturbation of $\BrVtx$ such that $\Vtx = \BrVtx + \Vtz$, where $\Vtz^\top \Vt1 = 0$. 
Rewrite $\Vtz$ as
\begin{align}
    \Vtz = \sum_{k\inK} c_k \Vtz_k, 
    \label{eq:z-decomposition}
\end{align}
where $\ClK \Is \{1,2,3\}$ and $\{\Vtz_k\}_{k\inK}$ are the eigenvectors of $\VtV$ such that $\Vtz_k^\top \Vt1 = 0$ and $\|\Vtz_k\|^2 = \Vtz_k^\top\Vtz_k = 1$. 
For our examples of $\VtD$ and $\VtG$, matrix $\VtV$ is symmetric. 
Thus, the eigendecomposition of $\VtV$ is given by: 
\begin{align}
    \VtV = \omega_0 \Vt1\Vt1^\top + \sum_{k\inK} \omega_k \Vtz_k \Vtz_k^\top, 
    \label{eq:V-decomposition}
\end{align}
where $\omega_k$ is the eigenvalue associated with $\Vtz_k$, with $\Vtz_0 = \Vt1 = (1,1,1,1)$. 
By plugging \prettyref{eq:z-decomposition} and \prettyref{eq:V-decomposition} into \prettyref{eq:baromega}, we can evaluate $\Bromega$ as:
\begin{align}
    \Bromega 
    = 
    \Vtz^\top \VtV \Vtz 
    = 
    \sum_{k \inK} 
        \omega_k 
        c_k^2. 
\end{align}

It shows that $\Bromega$ is maximized by $\Vtz = \Vtz_{k^*}$, where $k^* = \argmax_{k\inK} \omega_k$:   
\begin{align}
    \omega_\text{max} 
    \Is 
    \max_{\Vtz:\|\Vtz\| = 1} \Bromega 
    = 
    \omega_{k^*}, 
\end{align}
where we assume $\|\Vtz\|^2 = \Vtz^\top\Vtz = \sum_{k\inK} c_k^2 = 1$. 
The uniform distribution is stable when
\begin{align}
    \omega_\text{max}  = \omega_{k^*} = \max_{k\inK}\{\omega_k\} < 0. 
\end{align}

The relevant eigenvalues $\{\omega_k\}$ are given by the formula \prettyref{eq:omega-general}. 
They depend on the properties of $\BrVtD$ and $\BrVtG$ through $\{\chi_k\}$ and $\{\lambda_k\}$. 
From \prettyref{eq:omega-general}, the stability of $\BrVtx$ switches when: 
\begin{align}
    \max_{k}\left\{\Omega^\sharp(\chi_k,\lambda_k)\right\}
\end{align}
changes its sign from negative to positive, 
since we have $\Omega^\flat(\chi_k) > 0$ for all examples considered in this paper. 
In the following, we provide the proofs for each examples in Sections \ref{sec:two-region_economy} and \ref{sec:four-region_economies}. 

\begin{proof}[Proof of \prettyref{prop:n=2}]
When $n = 2$, both $\BrVtD$ and $\BrVtG$ are circulants. 
$\VtV$ can be diagonalized by the following discrete Fourier transformation (DFT) matrix: 
\begin{align}
    \VtZ_2 \Is 
    \frac{1}{\sqrt{2}} 
    \begin{bmatrix}
    1 & 1 \\ 1 & - 1
    \end{bmatrix}, 
\end{align}
where each column vector is an eigenvector (migration pattern) of $\VtV$. 
The relevant eigenvector is $\Vtz =(1, -1)$, because $\Vt1 = (1, 1)$ violates the conservation of workers' population. 
The eigenvalue of $\BrVtD$ and $\BrVtG$ associated with $\Vtz$ are, respectively, given by 
\begin{align}
    \chi = \frac{1 - \phi}{1 + \phi}
    \quad\text{and}\quad
    \lambda = \frac{1 - \psi}{1 + \psi}. 
    \label{eq:chi1_lambda1}
\end{align}
That $\omega = \Omega(\chi,\lambda) < 0$ implies $\VtV$ is negative definite with respect to $T\ClX$ and in turn the stability of $\BrVtx$. 
\end{proof}
\begin{proof}[Proof of Propositions \ref{prop:4-baseline}, \ref{prop:4-equidistant}, and \ref{prop:4-general}]
Both $\BrVtD$ and $\BrVtG$ are circulant matrices. 
The externality matrices considered in these propositions are special cases of general specification \prettyref{eq:G-bypass}. 
For these exampels, $\VtV$ is diagonalized by the following DFT matrix: 
\begin{align}
    \VtZ_4 
    \Is 
    \frac{1}{2}
    \begin{bmatrix}
        1 & 1 & 1 & 1\\ 
        1 & \Rmi & -1 & -\Rmi \\
        1 & -1 & 1 & -1\\
        1 & -\Rmi & -1 & \Rmi
    \end{bmatrix}
\end{align}
where $\Rmi$ is the imaginary unit ($\Rmi^2 = -1$). 
The relevant (real) eigenvectors are obtained by combining the columns of $\VtZ_4$ as follows: $\Vtz_1 = (1, 0, -1, 0)$, $\Vtz_2 = (1, -1, 1, -1)$, and $\Vtz_3 = (0, 1, 0, -1)$, where we ommit the normalizing constants for simplicity. 
$\Vtz_1$ and $\Vtz_3$ share the same eigenvalues. This is because $\Vtz_1$ and $\Vtz_3$ represent the same spatial configuration (the economy is symmetric under rotation). 

The eigenvalues of $\BrVtD$ associated with $\Vtz_1$ and $\Vtz_2$ are, respectively, 
\begin{align}
    \chi_1 = \frac{1 - \phi}{1 + \phi}
    \quad\text{and}\quad
    \chi_2 = \left(\frac{1 - \phi}{1 + \phi}\right)^2. 
    \label{eq:chi1-chi2}
\end{align}
Additionally, for the case \prettyref{eq:G-equidistant}, the eigenvalues of $\BrVtG$ associated with $\Vtz_1$ and $\Vtz_2$ are, respectively, 
\begin{align}
    \lambda_1 = \frac{1 - \psi'}{1 + 2\psi + \psi'}
    \quad\text{and}\quad
    \lambda_2 = \frac{1 - 2\psi + \psi'}{1 + 2\psi + \psi'}.  
\end{align}
The uniform distribution is stable when $\omega_k = \Omega(\chi_k, \lambda_k) < 0$ for both $k = 1$ and $k = 2$. 
$\psi' = \psi^2$ implies \prettyref{prop:4-baseline}, 
whereas $\psi' = \psi$ implies \prettyref{prop:4-equidistant}. 
For both the two cases, we have $\omega_1 = \max\{\omega_1,\omega_2\}$, thereby mono-centric pattern $\Vtz_1$ (\prettyref{fig:z1}) emerges from $\BrVtx$. 
\prettyref{prop:4-general} follows because $k = \argmax_k\{\omega_1,\omega_2\}$ depends on the values of $(\phi,\psi,\psi')$. Particularly, $\psi' > \psi$ is necessary for $\omega_2 > \omega_1$. 
\end{proof}

\begin{proof}[Proof of \prettyref{prop:4-super-regions}]
Both $\VtD$ and $\VtG$ are BCCBs and $\VtV$ is diagonalized by the following two-dimensional DFT matrix:
\begin{align}
    \VtZ_2 \otimes \VtZ_2 
    = 
    \frac{1}{\sqrt{2}} 
    \begin{bmatrix}
    \VtZ_2 & \VtZ_2 \\
    \VtZ_2 & - \VtZ_2 
    \end{bmatrix} 
    = 
    \frac{1}{2}
    \begin{bmatrix}
    1 & 1 & 1 & 1\\
    1 & -1 & 1 & -1\\
    1 & 1 & -1 & -1\\
    1 & -1 & -1 & 1
    \end{bmatrix}
\end{align}
The relevant eigenvectors are $\Vtz_1 = (1, 1, -1, -1)$, $\Vtz_2 = (1, -1, 1, -1)$, and $\Vtz_3 = (1, -1, -1, 1)$. 
$\Vtz_1$ is the North--South pattern in \prettyref{fig:z3}, whereas $\Vtz_3$ is its $90^\circ$ rotation (the ``East--West'' pattern) and 
$\Vtz_2$ is the duo-centric pattern. 

The eigenvalues of $\BrVtD$ associated with $\Vtz_1$, $\Vtz_2$, and $\Vtz_3$ are, respectively, 
\begin{align}
    \chi_1 = \frac{1 - \phi}{1 + \phi},
    \quad 
    \chi_2 = \left(\frac{1 - \phi}{1 + \phi}\right)^2, 
    \quad\text{and}\quad
    \chi_3 = \frac{1 - \phi}{1 + \phi}. 
    \label{eq:chi1-chi2-chi3}
\end{align} 
For the case \prettyref{eq:G-block}, the eigenvalues of $\BrVtG$ associated with $\Vtz_1$, $\Vtz_2$, and $\Vtz_3$ are, respectively, 
\begin{align}
    \lambda_1 = \frac{1 - \psi'}{1 + \psi'},
    \quad
    \lambda_2 = 
        \left(\frac{1 - \psi'}{1 + \psi'}\right)
        \left(\frac{1 - \psi}{1 + \psi}\right),
    \quad\text{and}\quad
    \lambda_3 = \frac{1 - \psi}{1 + \psi}. 
    \label{eq:lamb1-lamb2-lamb3}
\end{align}
We have $\omega_k = \Omega(\chi_k,\lambda_k) = 0$ if and only if $\omega_k ^\sharp \Is \Omega^\sharp(\chi_k,\lambda_k) = 0$. 
Since $\lambda_1 > \lambda_3 > \lambda_2$ and $\chi_1 = \chi_3 > \chi_2$, we see $\omega^\sharp_1 = \max_k\{\omega_k^\sharp\}$.  
Thus, North--South pattern $\Vtz_1$ (\prettyref{fig:z3}) must emerge when $\BrVtx$ becomes unstable. 
\end{proof}

\subsection{Pitchfork bifurcation from the uniform distribution}

\begin{proof}[Proof of \prettyref{prop:pitchifork}]
When $n = 2$, the uniform distribution $\BrVtx = (\Brx,\Brx)$ can be viewed as steady-state solution $y = 0$ for the following one-dimensional autonomous system:
\begin{align}
    \Dty = \Delta v(y,\mu) \Is v_1(\Vtx(y)) - v_2(\Vtx(y)), 
    \label{eq:one-dim}
\end{align}
where we choose $y \in\ClY\Is(-\Brx,\Brx)$ and let $\Vtx(y) = (x_1(y),x_2(y)) \Is (\Brx + y, \Brx - y)$. 
The bifurcation parameter $\mu$ indicates either $\phi$ or $\psi$. 
Under admissible dynamics, the bifurcation diagram becomes smoothly equivalent to the system \prettyref{eq:one-dim}. 
In the following, prime (') denotes differentiation with respect to $y$. 

In general, the system \prettyref{eq:one-dim} undergoes a pitchfork bifurcation at $(y,\mu) = (0,\mu^*)$ when $\Delta v(y,\mu)$ is odd in $y$ and the following conditions are met \citep[see, e.g., ][Section 20.1E]{Wiggins-Book2003}:
\begin{align}
    &
        \Delta v'(0,\mu^*) = 0,\quad 
        \Delta v''(0,\mu^*) = 0,\quad 
        \Delta v'''(0,\mu^*) \ne 0,\quad
        \PDF{\Delta v(0,\mu^*)}{\mu} = 0,
        \quad\text{and}\quad
        \PDF{\Delta v'(0,\mu^*)}{\mu} \ne 0. 
    \label{eq:pitchfork} 
\end{align}
We see $\Delta v$ is odd: $\Delta v(-y,\mu) = - \Delta v(y,\mu)$. The fourth condition also follows, for $\Delta v(0,\mu) = 0$ for all $\mu$. 

The first condition (\textit{nonhyperbolicity}) ensures that $\mu = \mu^*$ is the bifurcation point: 
\begin{align}
    \Delta v'(0,\mu^*) 
    & 
    = 
    \left(
    \PDF{v_1(\BrVtx)}{x_1}
    \PDF{x_1(0)}{y}
    +
    \PDF{v_1(\BrVtx)}{x_2}
    \PDF{x_2(0)}{y}
    \right)
    -
    \left(
    \PDF{v_2(\BrVtx)}{x_1}
    \PDF{x_1(0)}{y}
    +
    \PDF{v_2(\BrVtx)}{x_2}
    \PDF{x_2(0)}{y}
    \right)
    \\
    & 
    = 
    \left(
    \PDF{v_1(\BrVtx)}{x_1}
    - 
    \PDF{v_1(\BrVtx)}{x_2}
    \right)
    -
    \left(
    \PDF{v_2(\BrVtx)}{x_1}
    - 
    \PDF{v_2(\BrVtx)}{x_2}
    \right)
    = 
    \Vtz^\top \VtV_x \Vtz
    = 
    \frac{2\Brv}{\Brx}\omega(\mu^*)
    \label{eq:Htv1}
\end{align}
where $\Vtz = (1, -1)$ and we recall $\omega$ is the eigenvalue of $\VtV = \frac{\Brx}{\Brv}\VtV_x$ associated with $\Vtz$. 
The bifurcation point regarding freeness parameters $(\phi,\psi)$ is the solution for $\omega = 0$; thus, we have $\Delta v'(0,\mu) = 0$. 
For $\omega(\phi,\psi) = 0$ (or  $\Omega^\sharp(\chi(\phi),\lambda(\psi)) = 0$) to admit solution such that $(\chi,\lambda) \in (0,1)\times(0,1)$, we must have 
\begin{align}
    & \lambda^* \Is \frac{1 - \chi}{(\sigma - 1) + \sigma \chi} \in (0,1)
    & \forall \chi\in(0,1), 
    \label{eq:bif-cond}
\end{align}
or, equivalently, either $\sigma > 2$, or $\sigma\in(1,2]$ and $\chi \in (\frac{2 - \sigma}{\sigma + 1},1)$. 
We assume either of these below. 

From \prettyref{eq:Htv1}, we see
\begin{align}
    \PDF{\Delta v'(0,\mu^*)}{\mu}
    = 
    \frac{2}{\Brx}
    \left(
        \PDF{\Brv}{\mu}
        \omega(\mu^*) 
        + 
        \Brv
        \PDF{\omega(\mu^*)}{\mu} 
    \right)
    = 
    \frac{2\Brv}{\Brx}\PDF{\omega}{\mu} 
\end{align}
by noting that $\omega(\mu^*) = 0$. 
With \prettyref{eq:omega-general}, we show the fifth condition in \prettyref{eq:pitchfork}: 
\begin{align}
    &
    \PDF{\Delta v'(0,\mu^*)}{\phi}
    = 
    \frac{2\Brv}{\Brx}
    \PDF{\omega}{\chi}
    \PDF{\chi}{\phi}
    = 
    - 
    \frac{2}{\Brx}
    \cdot
    \frac
        {(1 + \lambda) (2 \sigma -1)}
        {(\sigma + (\sigma -1) \chi)^2}
    \cdot
    \frac{2}{(1 + \phi)^2} < 0, 
    \\
    &
    \PDF{\Delta v'(0,\mu^*)}{\psi}
    = 
    \frac{2\Brv}{\Brx}
    \PDF{\omega}{\lambda}
    \PDF{\lambda}{\psi}
    = 
    - 
    \frac{2}{\Brx}
    \cdot
    \frac
        {(\sigma -1) + \sigma \chi}
        {\sigma + (\sigma -1) \chi}
    \cdot
    \frac{2}{(1 + \psi)^2} < 0. 
\end{align}

For $\Delta v''(0,\mu^*)$ and $\Delta v'''(0,\mu^*)$, we resort to more explicit computations. 
Let $w_1:\ClY \to \BbR_+$ and $w_2:\ClY \to \BbR_+$ denote, respectively, the nominal wages of regions $1$ and $2$ as functions of $y\inY$. 
We first derive required derivatives of $w_1$ and $w_2$. 
By the normalization of income, we have
\begin{align}
    & 
    w_1(y) x_1(y) + w_2(y) x_2(y)
    = 
    w_1(y) (\Brx + y) + w_2(y) (\Brx - y)
    = 1 
    & \forall y\inY
\end{align}
and $w_1(0) = w_2(0) = \Brw \Is 1$. 
These imply
\begin{align}
    w_1'(0) = - w_2'(0),\quad
    w_1''(0) = w_2''(0) = - \frac{2}{\Brx} w_1'(0),\quad 
    \text{and}\quad
    w_1'''(0) = - w_2'''(0). 
    \label{eq:w_sym}
\end{align}
For instance, we have $w_1'(y)(\Brx + y) + w_1(y) + w_2'(y)(\Brx - y) - w_2'(y) = 0$. With $w_1(0) = w_2(0) = 1$, it implies that $w_1'(0) = - w_2'(0)$, which is a manifestation that the regions are symmetric when $y = 0$. 

In fact, $w_1'(0)$ is the eigenvalue of $\VtW_x(\BrVtx)$ associated with $\Vtz = (1, -1)$, 
because we have 
\begin{align}
    & 
    w_1'(0) = 
    \PDF{w_1(\BrVtx)}{x_1}\PDF{x_1(0)}{y}
    +
    \PDF{w_1(\BrVtx)}{x_2}\PDF{x_2(0)}{y}
    =
    \PDF{w_1(\BrVtx)}{x_1}
    - 
    \PDF{w_1(\BrVtx)}{x_2}
    \\
    & 
    w_2'(0) = 
    \PDF{w_2(\BrVtx)}{x_1}\PDF{x_1(0)}{y}
    +
    \PDF{w_2(\BrVtx)}{x_2}\PDF{x_2(0)}{y}
    =
    \PDF{w_2(\BrVtx)}{x_1}
    - 
    \PDF{w_2(\BrVtx)}{x_2} 
    = 
    - w_1'(0), 
\end{align}
which is, $w_1'(0) \Vtz = \VtW_x(\BrVtx)\Vtz$. 
Therefore, we have 
\begin{align}
    w_1'(0)
    = 
    - 
    \frac{\Brw}{\Brx} 
    \cdot 
    \frac{1 - (\sigma - 1)(1 + \chi) \lambda}{\Omega^\flat(\chi)},  
\end{align}
where we recall $\Omega^\flat(\chi) \Is \sigma + (\sigma - 1)\chi$. 
Then, $w''(0)$ can be evaluated by the second identity in \prettyref{eq:w_sym}. 
Additionally, by a patient algebra, we compute $w_{1}'''(0)$ as follows:
\begin{align}
    w_1'''(0) 
    = 
    \frac{2}{\Brx^3}
    \left(
        - 3 \lambda^2 
        + 
        \frac
            {3\lambda (1 + \lambda)(3 + \lambda)}
            {\Omega^\flat(\chi)}
        +
        \left(
        - 
        \frac
            {3(\sigma + 1)}
            {\Omega^\flat(\chi)^2}
        + 
        \frac
            {2\sigma (\sigma + 1)}
            {\Omega^\flat(\chi)^3}
        - 
        \frac
            {\sigma (2\sigma - 1)}
            {\Omega^\flat(\chi)^4}
        \right)
        (1 + \lambda)^3
    \right). 
\end{align}

Further, $\Vta(y) = (a_1(y),a_2(y)) \Is (a_1(\Vtx(y)),a_2(\Vtx(y)))$ satisfy
\begin{align}
    & 
    \frac{\Brx}{a_1(0)} a_1'(0) 
    = - \frac{\Brx}{a_2(0)} a_2'(0) 
    = \frac{\Brx}{\Bra} (1 - \psi) 
    = \frac{1 - \psi}{1 + \psi} 
    = \lambda
\end{align}
and $a_i''(0) = a_i'''(0) = \cdots = 0$. 
We nonte that $ a_1(0) = a_2(0) = \Bra \Is \Brx(1 + \psi)$.

By direct computations employing the above results, we confirm that 
\begin{align}
    \Delta v''(0,\mu^*) 
    = 
    \Brv (1 - \chi)
    \left( 
        - w_1'(0)^2 + w_2'(0)^2 
        + w_1''(0) - w_2''(0) 
     \right) 
     = 0 
\end{align}
from \prettyref{eq:w_sym}. 
Therefore, the second condition in \prettyref{eq:pitchfork} is met.

After some tedious calculations and manipulations, we get:
\begin{align}
    \Delta v'''(0,\mu^*) 
    = 
    - 
    \frac{4\Brv}{\Brx^3}
    \cdot
    \frac{(1 + \lambda)^3}{\Omega^\flat(\chi)^4}
    \cdot
    \Theta, 
\end{align}
where we define 
\begin{align}
    \Theta 
        \Is 
        \sigma (2\sigma - 1)
        + 
        \Omega^\flat(\chi) 
        \left(
        3
        + 
        \sigma
        \left(
            (\sigma + 1) \chi ^2 
            + 
            (4\sigma - 5) \chi +
            \sigma - 5
        \right)
        \right). 
\end{align}
We can show $\Theta > 0$ provided that bifurcation occur (i.e., condition \prettyref{eq:bif-cond} is satisfied). 
Since the other components of $\Delta v'''(0,\mu^*) $ are obviously negative, we have $\Delta v'''(0,\mu^*) < 0$ at critical values of $\phi$ or $\psi$. 

Thus, all the five conditions in  \prettyref{eq:pitchfork} are met at $\BrVtx$.  
The economy undergoes pitchfork bifurcations along smooth paths
where either the freeness of production externalities or the freeness of trade
increases, at the break points defined by $\psi^{*}$ and $\phi^{*}$,
respectively. 
Moreover, $\Delta v'''(0,\mu^*) < 0$ implies that the pitchfork bifurcations are supercritical, that is, the bifurcated branches are stable. 
\end{proof}

\section{Effects of the elasticity of substitution}
\label{app:sigma-variation}

Figures \ref{fig:stab-unif-2_full} and \ref{fig:stab-unif-4_equi_full} report, respectively, variations of Figures \ref{fig:stab-unif-2} and \ref{fig:stab-unif-4_equi} for three values of $\sigma$ which are chosen to exhaust all representative forms of the partition of the $(\phi,\psi)$-space. 
For an empirically relevant range of $\sigma$ (between $3$ and $10$), the qualitative shape of the partitions stay invariant. 
\begin{figure}[hp]
	\centering
	\begin{subfigure}[b]{.32\hsize}
	    \centering
	    \includegraphics[height=.88\hsize]{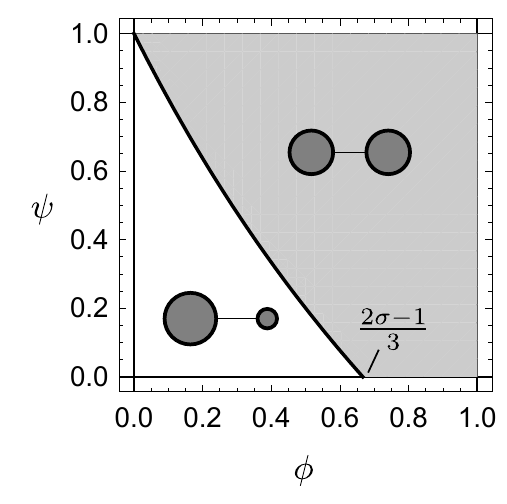}
	    \caption{$\sigma = 1.5$\label{fig:stab-unif-2-sigma1.5}} 
	\end{subfigure}
	\begin{subfigure}[b]{.32\hsize}
	    \centering
	    \includegraphics[height=.88\hsize]{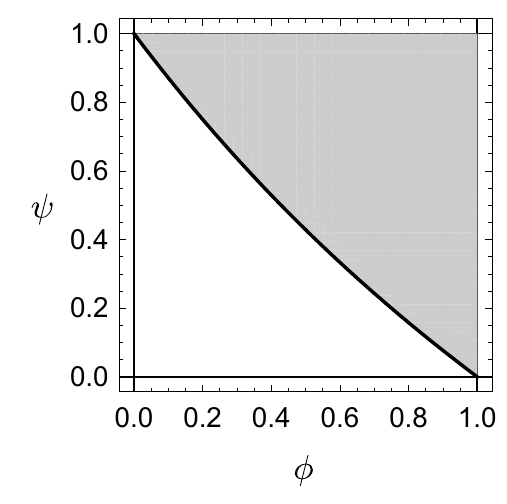}
	    \caption{$\sigma = 2.0$\label{fig:stab-unif-2-sigma2}} 
	\end{subfigure}
	\begin{subfigure}[b]{.32\hsize}
	    \centering
	    \includegraphics[height=.88\hsize]{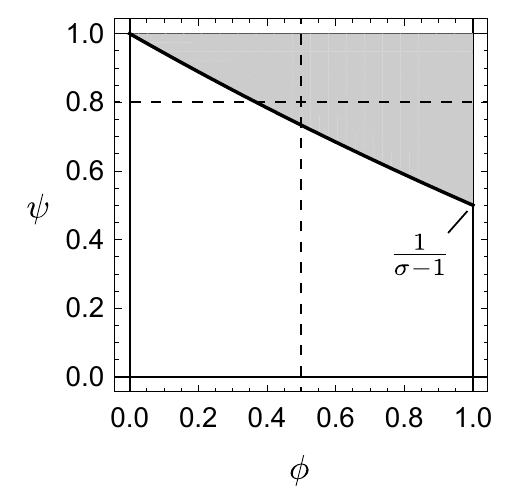}
	    \caption{$\sigma = 4.0$\label{fig:stab-unif-2-sigma4}} 
	\end{subfigure}
	\caption{Stability of $\BrVtx$ in the two-region economy.}
	\label{fig:stab-unif-2_full}
	\begin{figurenotes}
	Uniform distribution $\BrVtx$ is stable for the shaded (gray) region of $(\phi,\psi)$ and the black solid curve indicates the critical pair of $(\phi,\psi)$ where $\BrVtx$ becomes unstable. 
	The horizontal and vertical dashed lines in \prettyref{fig:stab-unif-2-sigma4} correspond to the parametric paths for the bifurcation diagrams \prettyref{fig:bif-2-phi} and \prettyref{fig:bif-2-psi}, respectively. 
	The schematic on each (gray or white) parametric region indicates the representative spatial pattern in the parametric region. 
	\end{figurenotes}

	\vskip 2em
	
	\begin{subfigure}[b]{.32\hsize}
	    \centering
	    \includegraphics[height=.88\hsize]{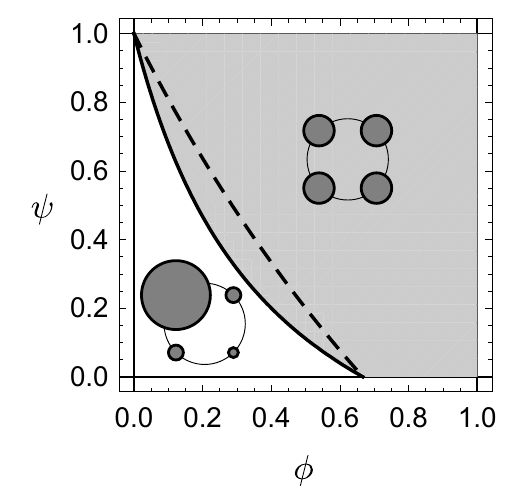}
	    \caption{$\sigma = 1.5$} 
	\end{subfigure}
	\begin{subfigure}[b]{.32\hsize}
	    \centering
	    \includegraphics[height=.88\hsize]{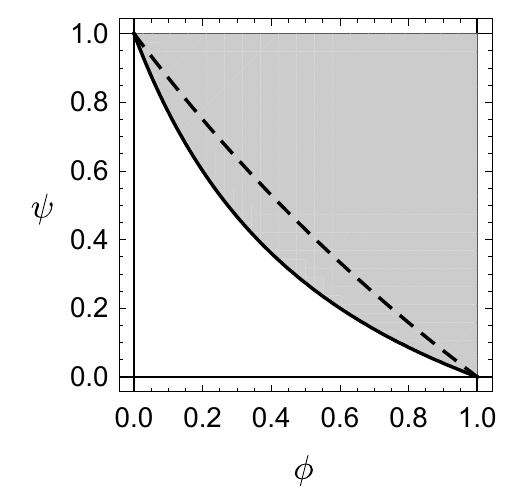}
	    \caption{$\sigma = 2.0$} 
	\end{subfigure}
	\begin{subfigure}[b]{.32\hsize}
	    \centering
	    \includegraphics[height=.88\hsize]{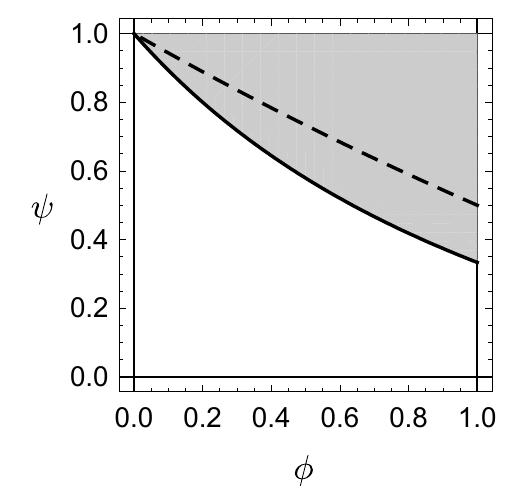}
	    \caption{$\sigma = 4.0$} 
	\end{subfigure}
	\caption{Stability of $\BrVtx$ in the four-region economies with \prettyref{eq:G-baseline} and $\VtG_e = \VtG_{g}|_{\psi' = \psi}$.}
	\begin{figurenotes}
	The black solid curves show the critical pairs $(\phi^*,\psi^*)$ at which $\BrVtx$ becomes unstable for the case with $\psi' = \psi$. The dashed curves are those for the case $\VtG = \VtG^\square$. Uniform distribution $\BrVtx$ is stable for the regions above these curves, where the gray regions correspond to $\VtG = \VtG_e$. Observe that the solid curves always stay below the dashed curves. That is, $\BrVtx$ is stable for a wider range of $(\phi,\psi)$ when the economy is more connected. 
	\end{figurenotes}
	\label{fig:stab-unif-4_equi_full}
\end{figure}

\clearpage 

{
\small
\singlespacing
\bibliographystyle{apalike}
\bibliography{refs}
}

\end{document}